\documentclass[11pt,letterpaper]{article}

\usepackage[T1]{fontenc}
\usepackage{fullpage}
\usepackage{mathpazo}
\usepackage{amsmath,amssymb,amsthm,amsfonts}
\usepackage{array}
\usepackage{booktabs}
\usepackage{float}
\usepackage[dvipsnames]{xcolor}
\usepackage{microtype}
\usepackage{tabularx,colortbl}
\usepackage[style=alphabetic,natbib=true,maxalphanames=9,minalphanames=3,maxcitenames=9,mincitenames=3,maxbibnames=99]{biblatex}
\usepackage{enumitem}
\usepackage{listings}
\usepackage[colorlinks=true,allcolors=magenta]{hyperref}
\usepackage[nameinlink,capitalize,noabbrev]{cleveref}
\makeatletter
\AddToHook{cmd/appendix/before}{\def\cref@section@alias{appendix}\def\cref@subsection@alias{appendix}}
\makeatother
\usepackage{mathtools}
\usepackage{pifont}
\usepackage{thm-restate}
\usepackage{nicefrac}
\usepackage{tikz}
\usepackage{pgfplots}
\pgfplotsset{compat=1.17}
\usetikzlibrary{arrows.meta}
\usepackage{todonotes}

\AtBeginBibliography{\sloppy}

\usepackage[ruled]{algorithm2e} 
\SetAlFnt{\small}
\SetAlCapFnt{\small}
\SetAlCapNameFnt{\small}
\SetAlCapHSkip{0pt}
\IncMargin{-\parindent}
\crefname{algocf}{algorithm}{algorithms}
\Crefname{algocf}{Algorithm}{Algorithms}

\newtheorem{theorem}{Theorem}
\newtheorem{proposition}{Proposition}

\newtheorem{lemma}{Lemma}
\newtheorem{open}{Open Question}
\newcounter{definition}
\theoremstyle{definition}

\theoremstyle{remark}

\usepackage{tcolorbox}

\newcommand{\definitionboxtitle}[1]{Definition~\thedefinition\if\relax\detokenize{#1}\relax\else: #1\fi}
\newenvironment{definition}[1][]{\refstepcounter{definition}\begin{tcolorbox}[
    colback=MidnightBlue!4,
    colframe=MidnightBlue!70!black,
    colbacktitle=MidnightBlue!70!black,
    coltitle=white,
    fonttitle=\normalfont\small\bfseries,
    title={\definitionboxtitle{#1}},
    rounded corners,
    left=6pt,
    right=6pt,
    top=3pt,
    bottom=3pt,
    toptitle=3pt,
    bottomtitle=3pt,
    before skip=8pt,
    after skip=8pt,
    before upper={
  \setlength{\abovedisplayskip}{4pt}\setlength{\belowdisplayskip}{4pt}\setlength{\parskip}{4pt}}
  ]}{\end{tcolorbox}}
\crefname{definition}{definition}{definitions}
\Crefname{definition}{Definition}{Definitions}
\newtcolorbox{programbox}[1]{
  colback=gray!4,
  colframe=black,
  colbacktitle=black,
  coltitle=white,
  fonttitle=\normalfont\small\scshape,
  title={#1},
  rounded corners,
  left=6pt,
  right=6pt,
  top=2pt,
  bottom=2pt,
  toptitle=3pt,
  bottomtitle=3pt,
  before skip=8pt,
  after skip=8pt,
  before upper={
  \setlength{\abovedisplayskip}{4pt}\setlength{\belowdisplayskip}{4pt}}
}
\renewcommand{\ge}{\geqslant}
\renewcommand{\geq}{\geqslant}
\renewcommand{\le}{\leqslant}
\renewcommand{\leq}{\leqslant}

\newcommand{\set}[1]{\left\{#1\right\}}
\DeclarePairedDelimiter{\setsize}{|}{|}

\DeclareMathOperator{\E}{\mathbb{E}}

\newcommand{\diff}{\mathop{}\!\mathrm{d}}
\newcommand{\R}{\mathbb{R}}

\numberwithin{equation}{section}
\allowdisplaybreaks
\newcommand{\SC}{\operatorname{SC}}
\newcommand{\dist}{\operatorname{dist}}
\newcommand{\supp}{\operatorname{supp}}

\newcommand{\SL}[1]{\operatorname{SL}_{#1}}
\newcommand{\RSL}[1]{\operatorname{RSL}_{#1}}

\newcommand{\cav}{\operatorname{cav}}

\newcommand{\loss}{\mathcal{L}}

\title{Improving Randomized Metric Distortion to $2.1441$\thanks{A previous version of the manuscript provided a weaker distortion bound of $2.3282$. That version used an infinite-dimensional linear program and its dual to bound distortion. The present version replaces that approach with a tighter potential-based analysis. A fresh grid search points to the draw distribution for which a distortion bound of $2.1441$ can be verified exactly. This version also proves a lower bound of $2.136$ for every rule in the family of rules we consider. Hence, for the particular rule used in our upper bound, the potential method's guarantee is less than $0.009$ above the rule's exact distortion.}}
\author{Nisarg Shah\\University of Toronto\\\texttt{nisarg@cs.toronto.edu}}
\makeatletter
\let\@date\@empty
\makeatother

\hypersetup{
  pdftitle={Improving Randomized Metric Distortion to 2.1441},
  pdfauthor={Nisarg Shah}
}

\begin{document}

\maketitle

\begin{abstract}
In metric social choice, each voter ranks a set of $m$ candidates by her distance to them in an unknown metric space. The cost of a candidate is its average distance to the voters. A randomized voting rule must use only the rankings to choose a lottery over candidates. Its distortion is the worst-case ratio between the expected cost under the lottery it returns and the cost of the best candidate. \citet{CRWW24} prove an upper bound of $2.753$, establishing a constant separation from deterministic rules, for which the best achievable distortion is $3$. Recently, \citet{Fra26} and \citet{Ye26} independently improve the bound to $2.5$.

We break this barrier by defining, proving the existence of, and using a new family of rules: \emph{random-size stable lotteries}. Let $D$ be a random variable over the domain of positive integers. A random-size stable lottery $\RSL{D}$ guarantees that the probability of a random voter preferring any fixed candidate $c$ to her favorite of $D$ i.i.d. draws from $\RSL{D}$ is at most $\E[1/(D+1)]$, where the probability averages over the voter, the value of $D$, and the $D$ draws from $\RSL{D}$. When $D=1$ deterministically, this is precisely the well-known maximal lottery. More generally, when $D=k$ deterministically, this reduces to its stable $k$-lottery generalization, which \citet{CRTW25} prove the existence of for every fixed positive integer $k$. Their minimax argument for a fixed $k$ easily generalizes to a random $D$. 

Our main contribution is to show how stability with respect to a random $D$ can be used to bound distortion directly. Specifically, we show that, for an explicit $D$ supported on $1$, $2$, $9$, and $10$, any random-size stable lottery $\RSL{D}$ has distortion at most $2.1441$. The proof combines the biased-metric characterization of distortion due to \citet{CRWW24} with a new potential argument, and exactly verifies a rational distortion bound via nonnegativity testing of polynomials in the Bernstein basis. We also prove a lower bound of $2.136$ on the distortion of any randomized voting rule that, for any fixed $D$, always outputs a random-size stable lottery $\RSL{D}$. This is strictly higher than the best known universal lower bound of approximately $2.1126$ on all randomized voting rules~\citep{CR22}.

All the proofs have been obtained using GPT-5.6-Sol with significant guidance from the author and verified by the author, who significantly expanded on the exposition and simplified arguments with the aid of GPT-5.6-Sol and Claude Opus 5.
\end{abstract}

\newpage
\setcounter{tocdepth}{2}
\tableofcontents
\medskip
\newpage

\section{Introduction}

Social choice theory studies how to choose one of several alternative decisions (candidates) by aggregating the conflicting preferences of multiple parties (voters). The norm is to elicit ranked preferences from voters over the candidates because exact utilities or costs are often difficult for voters to identify and report. Consequently, much of the centuries-old literature has sought voting rules that satisfy qualitative axioms defined on this ordinal data. However, no single axiom has been convincing on its own and seeking multiple axioms together has often led to impossibility results rather than a choice of the ``optimal voting rule''~\citep{Arr51,Gib73,Sat75}. The rise of approximation algorithms in computer science has offered a new perspective: even when voters report ordinal preferences, one can aim to optimize welfare defined on the underlying numerical utilities or costs, with success measured by \emph{distortion}---the approximation ratio between the achieved welfare and the optimum in the worst case over the unknown preference intensities. This has reignited the search for optimal voting rules, well-defined in this distortion framework.

Specifically, in the metric social choice model~\citep{ABEPS18}, $n$ voters and $m$ candidates are points in the same underlying metric space. A voter's distance to a candidate is her cost for that candidate, and she ranks candidates from smallest to largest distance. The social cost of a candidate is its average distance to the voters. A deterministic (resp., randomized) voting rule must choose a candidate (resp., a lottery over candidates) based only on the rankings, and its \emph{distortion} is the ratio of the (expected) social cost of its choice to the smallest social cost of any candidate, in the worst case over all metrics consistent with the observed rankings.

\citet{GHS20} prove that the best distortion achievable by deterministic rules is $3$. To see why deterministic rules cannot do any better and why randomization \emph{may} help, consider a simple example with two voters ($1,2$) and two candidates ($a,b$). Voter $1$ prefers $a$ to $b$ while voter $2$ prefers $b$ to $a$. Let the underlying metric have distance $d$ and $\Delta := d(a,b)$. Only two consistent possibilities are (provably) of interest:
\begin{itemize}
    \item either voter $1$ is located at $a$ and voter $2$ is located at the midpoint of $a$ and $b$, which makes the social costs of $a$ and $b$ equal to $\Delta/4$ and $3\Delta/4$, respectively,
    \item or voter $1$ is located at the midpoint and voter $2$ is located at $b$, which makes the social costs of $a$ and $b$ equal to $3\Delta/4$ and $\Delta/4$, respectively.
\end{itemize}

\begin{figure}[H]
\centering
\begin{tikzpicture}[
  candidate/.style={circle,draw=none,minimum size=6.5mm,font=\bfseries\small,text=white},
  voter/.style={rounded corners=1.5pt,draw=gray!55,fill=white,minimum width=9mm,minimum height=6mm,font=\small},
  metric/.style={line width=1.2pt,draw=gray!55},
  guide/.style={densely dashed,draw=gray!40},
]
\foreach \shift/\title/\vone/\vtwo/\sca/\scb in {
  0/{Candidate $a$ is optimal}/0/2.5/{\Delta/4}/{3\Delta/4},
  7.2/{Candidate $b$ is optimal}/2.5/5/{3\Delta/4}/{\Delta/4}
}{
  \begin{scope}[xshift=\shift cm]
  \fill[gray!3,rounded corners=3pt] (-0.65,-1.25) rectangle (5.65,1.55);
  \node[font=\small\bfseries,text=black!80] at (2.5,1.25) {\title};
  \draw[metric] (0,0) -- (5,0);
  \draw[gray!45] (0,-0.08) -- (0,0.08) (2.5,-0.08) -- (2.5,0.08) (5,-0.08) -- (5,0.08);
  \node[candidate,fill=MidnightBlue] at (0,0) {$a$};
  \node[candidate,fill=BrickRed] at (5,0) {$b$};
  \node[voter] (v1) at (\vone,0.72) {$v_1$};
  \node[voter] (v2) at (\vtwo,0.72) {$v_2$};
  \draw[guide] (v1.south) -- (\vone,0.12);
  \draw[guide] (v2.south) -- (\vtwo,0.12);
  \draw[<->,gray!65] (0,-0.43) -- node[fill=gray!3,inner sep=1.5pt,font=\scriptsize] {$\Delta$} (5,-0.43);
  \node[font=\scriptsize,text=black!75] at (2.5,-0.89) {$\SC(a)=\sca,\quad \SC(b)=\scb$};
  \end{scope}
}
\end{tikzpicture}
\caption{Two metrics consistent with the reported rankings, which show a lower bound of $3$ for deterministic rules and $2$ for randomized rules.}
\label{fig:det-rand-gap}
\end{figure}

Thus, as \Cref{fig:det-rand-gap} makes visible, either candidate can be optimal while the other is $3$ times worse, making $3$ a barrier that deterministic rules cannot cross. However, choosing between $a$ and $b$ uniformly at random achieves an expected social cost of $\Delta/2$, which is only $2$ times the minimum possible cost and that is the best possible distortion on this instance. It was even conjectured that $2$ may be the optimal metric distortion of randomized rules, but \citet{PS21,CR22} disprove it by establishing a lower bound strictly greater than $2$, with the bound by the latter converging to approximately $2.1126$ as $m \to \infty$.

On the upper-bound side, \citet{CRWW24} prove a distortion guarantee of $2.753$ by combining \emph{maximal lottery}, the Nash equilibrium of a natural zero-sum game dating back to the 1960s~\citep{Kre65,Fis84}, and RaDiUS, a novel rule they design. Their work, which received the SODA 2024 Best Paper Award, establishes that randomized rules can in fact achieve a constant distortion strictly lower than the best guarantee of $3$ for deterministic rules. Recently, \citet{Fra26} and \citet{Ye26} independently improve this to $2.5$ with a rule of striking simplicity: the equal mixture of a maximal lottery and a novel rule termed \emph{Integrated Veto} (also, \emph{Veto Lottery}), a randomized variant of the optimal deterministic rule \emph{Plurality Veto}~\citep{KK22}.\footnote{Plurality Veto is a refinement of the Plurality Matching rule of \citet{GHS20}, both achieving the optimal metric distortion of $3$ for deterministic rules.} This still leaves a fundamental question open:

\begin{quote}
\emph{What is the smallest distortion that a randomized voting rule can achieve using only voters' preference rankings?}
\end{quote}

\subsection{Our Results}\label{sec:our-results}

The main result of our work improves the upper bound, closing about $92\%$ of the gap between the upper bound of $2.5$ and the lower bound of $2.1126$ left by prior work.

\begin{restatable}{theorem}{mainupperbound}\label{thm:intro-main-upper}
There exists a randomized voting rule with metric distortion at most
\[
\frac{107200742653}{50000000000}=2.14401485306.
\]
\end{restatable}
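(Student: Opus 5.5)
The plan follows the abstract. First I establish existence of random-size stable lotteries for the explicit distribution $D$ supported on $\{1,2,9,10\}$. Then I show that every $\RSL{D}$ has distortion at most the stated rational bound.

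\textbf{Existence.} I generalize the minimax argument of \citet{CRTW25}. Consider the two-player zero-sum game in which the column player picks a candidate $c$ and the row player picks a lottery $p$. The payoff is
\[
\Pr_{v,D,\text{draws}}\bigl[c \succ_v \text{favorite of the } D \text{ draws}\bigr] - \E[1/(D+1)].
\]
For fixed $c$ this is an average over $D$ of polynomials in $p$, so I cannot invoke von Neumann's theorem directly. Instead I apply a fixed-point or Kakutani-type argument, as \citeauthor{CRTW25} do for fixed $k$. The key symmetric-choice identity is this: if $c$ is itself drawn from $p$, then the probability that $c$ beats the $D$ other draws is at most $\E[1/(D+1)]$. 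Consequently the row player's best response to any mixed column strategy keeps the value nonpositive, which gives existence.

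\textbf{Reduction to biased metrics.} I invoke the biased-metric characterization of \citet{CRWW24}. It lets me assume the worst-case metric has a special structure: the optimal candidate $c^*$, and each voter $v$ located relative to $c^*$ and to her top candidate among those she ranks above $c^*$. This reduces distortion to a one-parameter-per-voter optimization. For a lottery $p$, I write $\SC(p)\le \alpha\,\SC(c^*)$ and charge each voter's cost to a potential function of her distance ratio. Stability against $c^*$ with random $D$ yields a family of linear constraints indexed by the support of $D$. Each constraint says that the mass of voters for whom the best of $k$ draws is worse than $c^*$ is controlled. Using $k\in\{1,2,9,10\}$ mixes information about "typical" and "extreme" draws.

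\textbf{Potential argument and verification.} The central step is choosing a potential $\Phi(x)$ on the per-voter cost ratio $x$ such that three things hold. First, each voter's expected cost under $p$ is bounded by $\alpha$ times her cost to $c^*$, minus $\Phi$-terms. Second, the $\Phi$-terms sum to something controlled by the stability inequality. Third, the resulting residual is a univariate or low-variate polynomial inequality in a probability parameter $q\in[0,1]$, namely the probability that a single draw is preferred to $c^*$. The inequality involves $\E[(1-q)^D]$-type terms of degree $10$. I then certify nonnegativity of this rational-coefficient polynomial on $[0,1]$ by expanding it in the Bernstein basis, subdividing where necessary, and checking that all coefficients are nonnegative. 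This check is exact, with rational arithmetic, using the explicit weights of $D$.

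\textbf{Main obstacle.} I expect the hardest step to be designing the potential so that the per-voter inequality closes with constant $2.1441$. My first attempt will take $\Phi$ to be piecewise linear in the voter's distance ratio, tuned by grid search together with the weights of $D$. I will then tighten it until the Bernstein certificate succeeds.
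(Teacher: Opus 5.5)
Your existence step reaches the right conclusion (see the last remark). The distortion argument, however, has a genuine gap: it uses stability only against the optimal candidate $c^*$, and that information cannot bound distortion at all, for any $D$.

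Here is a counterexample. Let $\mu=\E[1/(D+1)]$. Let a fraction $\beta\in(0,\mu)$ of voters rank $c^*\succ a\succ b$ and the rest rank $a\succ c^*\succ b$, and take $P=(1-\epsilon)\delta_a+\epsilon\delta_b$. Then $\Pr\nolimits_v[c^*\succ_v P^D]=\beta+(1-\beta)\phi(\epsilon)\le\mu$ for small $\epsilon$. Now set $d(a,c^*)=1$, put the first group at $c^*$ and the second at the midpoint of $a$ and $c^*$, and put $b$ at distance $L$ from every other point. Then $c^*$ is the unique optimum while $\SC(P)\geq\epsilon L\to\infty$.

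What rules such lotteries out is stability against the \emph{other} candidates. Specifically, it is the support tightness in \Cref{thm:comparison-existence}: every $c\in\supp(P)$ has $\Pr\nolimits_v[c\succ_vP^D]=\mu$ exactly. This is why you need the self-play identity with equality under random-label tie-breaking, not the ``at most'' you wrote.

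The paper turns tightness into mean-zero deviations $\widetilde M_v(H)$ for every $H\subseteq\supp(P)$. It applies them to the marked sets $H(t)=\{c\in\supp(P):x_c>t\}$ at \emph{every} threshold $t$ of the offset vector. It adds a voter-independent multiple of the deviation to the lower cutoff $a_v(t)$ and bounds the result by $\mathcal E_\phi(a,p,A)$. A potential $W$ of the ranking-percentile coordinate then telescopes across thresholds spaced $\Delta_v(x)$ apart, because each lower cutoff serves as the next upper cutoff. Stability against the optimum enters only in the boundary case $p(t)=1$, through $\mathcal G_\phi$. That is the only place where a univariate check like yours in $q$ appears.

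Several supporting claims are also off. The characterization of \citet{CRWW24} is not one parameter per voter. It is a supremum over offset vectors $x\in\R_{\geq0}^C$, one coordinate per candidate. Voter $v$'s excess cost $\sum_cP(c)s_{v,c}(x)$ depends on how all of $\supp(P)$ interleaves with the offsets in her ranking. So the certificate cannot be a single polynomial in $q$; the core checks are three-variable inequalities in $(a,p,A)$ on three cells.

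Nor does $\RSL{D}$ give a family of constraints indexed by $\{1,2,9,10\}$. It gives one $\phi$-weighted constraint per challenger, and separate per-$k$ constraints are generally not attainable together. In the two-candidate example of \Cref{app:mixing-counterexample}, the unique one-draw stable lottery $\delta_a$ is not stable for two draws.

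The working potential is also not piecewise linear. Beyond $t_0$ it equals $C_0-u-(\mu-\phi(u))/(1-\mu)$, which makes the boundary condition tight.

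Finally, on existence: von Neumann's minimax theorem does apply, because each $(b+sz)^d$ is convex in $P$. Your best-response remark alone only yields $\max_A\min_P\leq\mu$. A Kakutani route would have to be set up on $P\mapsto\Delta(\operatorname*{arg\,max}_c\Pr\nolimits_v[c\succ_vP^D])$.
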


To accomplish this, we consider a generalization of maximal lotteries to stable $k$-lotteries ($\SL{k}$) by \citet{CRTW25}, who use them to design so-called tournament voting rules with low metric distortion; stable $1$-lotteries are precisely maximal lotteries. We design a further generalization of this to \emph{random-size stable lotteries}: given a \emph{draw distribution} $D$ over positive integers, its corresponding random-size stable lottery $\RSL{D}$ effectively provides the guarantee of $\SL{k}$ but with $k$ drawn from $D$. Yet, we show that $\RSL{D}$ cannot in general be obtained by simply mixing $\SL{k}$ for different values of $k$; we prove its existence for any $D$ directly via a minimax argument similar to what \citet{CRTW25} used for $\SL{k}$, and give a multiplicative-weights approximation algorithm with an explicit running time when $D$ has finite support. The main technical contribution of our work is to connect the guarantee provided by $\RSL{D}$ to metric distortion through a novel potential function and an elegant telescoping argument applied to the biased-metric characterization of distortion by \citet{CRWW24}. The final distortion expression lends itself to a computer-assisted search for a suitable draw distribution $D^\star$. This is followed by an exact verification using rational arithmetic, which reduces the problem to nonnegativity testing of polynomials using the so-called Bernstein basis. 

While the upper bound of \Cref{thm:intro-main-upper} is eerily close to the best-known universal lower bound of approximately $2.1126$ by \citet{CR22}, our next result shows that the random-size stable lottery rule corresponding to any draw distribution $D$, which always returns an $\RSL{D}$ lottery, has distortion more than $2.136$, proving that this remaining gap cannot be closed simply by choosing a different draw distribution $D$. Note that the lower bound holds only when $D$ is fixed independently of the profile.

\begin{restatable}{theorem}{integerlowerbound}\label{thm:intro-integer-lower}
For any positive-integer-valued random variable $D$, there exists a profile on which every random-size stable lottery $\RSL{D}$ has metric distortion strictly greater than
\[
\frac{267}{125}=2.136.
\]
Hence, this lower bound applies to the metric distortion of any randomized voting rule that always returns a random-size stable lottery $\RSL{D}$ corresponding to any fixed $D$. 
\end{restatable}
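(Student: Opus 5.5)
Since $D$ is arbitrary, the plan is to build a small family of profiles, each with a metric, and show that for every $D$ at least one of them forces distortion above $267/125$. The quantity that matters is $q_k := \Pr[D = k]$, or really a few summaries of it, such as $\Pr[D=1]$ and $\E[1/(D+1)]$. On each profile I will first pin down exactly which lotteries can be an $\RSL{D}$, which should depend only on these summaries of $D$. I will then compute the worst-case distortion of those lotteries as an explicit function of the summaries, and finally show that the maximum over the family exceeds $2.136$ for every admissible value of the summaries.

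\textbf{Step 1 (profiles that force the lottery).} I will use symmetric profiles, for example cyclic (Condorcet-like) profiles on $m$ candidates with voter blocks, or a two-candidate-plus-clones structure. There, the stability condition
\[
\Pr[\text{voter prefers } c \text{ to her favorite of } D \text{ draws}] \le \E[1/(D+1)]
\]
for every $c$ pins down the lottery, e.g.\ uniform over a cycle, or a specific split between a "majority" candidate and its opponents. For instance, with two candidates and an $\alpha$ versus $1-\alpha$ split, the constraint reads
\[
(1-\alpha)\,\E\bigl[p_a^{D}\bigr] \le \E[1/(D+1)],
\]
where $p_a$ is the probability placed on the majority candidate $a$. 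This lower-bounds the mass on $b$ through the generating function of $D$, and the \citet{CRWW24}-style worst metric then makes $b$'s mass costly.

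\textbf{Step 2 (two opposing instances).} I expect a trade-off. One profile punishes $D$ concentrated on small values: these behave like maximal lotteries, whose distortion is known to be at least $3$ or near it on certain instances. Another profile, likely many candidates where a large $D$ spreads mass, punishes large $D$: $\E[1/(D+1)]$ becomes small and the lottery must nearly dominate everyone, which may force weight onto bad "compromise" candidates. Each distortion lower bound is a function of the generating function $G(x)=\E[x^D]$ evaluated at a few points. I then minimize the maximum of these bounds over all distributions $D$. Because the expressions are linear in the $q_k$ for fixed lottery parameters, this reduces to a small optimization that I can certify with a rational verification. The verification works by showing that a suitable convex combination of the per-profile inequalities yields $>2.136$ for every $D$.

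\textbf{Main obstacle.} The hard part is Step 1 in general. I must show that every $\RSL{D}$ on the chosen profile, not just a symmetric one, has high distortion. Non-uniqueness of the lottery and the nonlinearity of the condition in $p$ (through $\E[p^D]$) complicate this. I would first try to prove that on my profiles the stability constraints force a lower bound on the total mass of a "bad" set of candidates. To do so I would sum the constraints over a symmetric orbit and apply convexity (Jensen) to $x\mapsto x^k$. This way only inequalities, rather than exact identification of the lottery, are needed.
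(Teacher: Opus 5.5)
\textbf{There is a genuine gap.} Your proposal is a strategy outline, not a proof. The step that carries all the content is missing: exhibiting profiles on which every $\RSL{D}$ is provably bad, uniformly over all $D$. The mechanism you sketch for that step would also not go through.

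You fix a family of profiles independently of $D$ and let $D$ determine the stable lottery. That lottery then solves a nonlinear fixed-point system in the pmf $(q_k)$, so its distortion is not a function of $G$ evaluated at a few fixed points. Linearity in $q_k$ ``for fixed lottery parameters'' therefore gives no small certificate, because the lottery parameters are not fixed.

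The concrete instance you name also cannot serve as a witness. On two candidates, the exact score of $b$ is $(1-\alpha)\int_0^1\phi(p_a+sp_b)\diff s+\alpha\int_0^1\phi(sp_b)\diff s$; your expression is only a lower bound on it. For $D=2$ and $\alpha\geq1/2$, this forces the unique stable lottery $p_a=\min\{1,3\alpha-1\}$. Its worst distortion over all two-candidate profiles is $9-4\sqrt3\approx2.072<2.136$. So even small $D$ is not handled by such profiles, and nothing in your plan identifies profiles that would work.

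The missing idea is to exploit the quantifier order. The profile may depend on $D$, so the paper fixes the lottery instead of the profile. It takes $P$ uniform over exchangeable candidates and tunes the mixing probability $z$ between two ranking types so that $P$ is exactly stable. For example, a zero-mass candidate $o$ is ranked first with probability $z=(\mu-\phi(r))/(1-\phi(r))$, and otherwise has $rN$ supported candidates below it. The offset characterization then gives the offset value in closed form: $1+2R_\phi(r)$ for this boundary construction, or $1+2S_\phi(a,p)$ for the internal one. The paper then reweights the pmf by $\nu_d/(\mu(d+1))$. After this, each event ``ratio $>71/125$'' becomes the sign of the expectation of an explicit function of $d$. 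It uses three instances: $r=71/125$ and $(a,p)\in\{(1/100,37/50),(1/50,3/4)\}$. Positive weights make their combination strictly negative for every integer $d\geq1$.

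``Every $\RSL{D}$'' is handled by uniqueness rather than Jensen-type bounds on bad mass. If $\Pr[D\geq2]>0$, the midpoint of two stable lotteries is stable. Support tightness plus strict convexity of $\phi$ then forces the two lotteries to coincide. The case $D=1$ needs a separate profile with a Condorcet winner, since maximal lotteries need not be unique in general. Uniqueness is also what lets the paper pass, by compactness, from real $D$-dependent mixing weights to a finite electorate. Your plan does not address that step.
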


Together, \Cref{thm:intro-main-upper,thm:intro-integer-lower} place the best distortion achievable by the random-size stable lottery rule corresponding to any draw distribution $D$ in the interval $[2.136,2.14401485306]$. \Cref{tab:bounds-landscape} shows our upper and lower bounds in the context of prior work.

\begin{table}[htb!]
\centering
\begin{tabularx}{\textwidth}{@{}rXl@{}}
\toprule
\multicolumn{3}{@{}l@{}}{\emph{Upper bounds}}\\
\midrule
$3$ & optimal deterministic rule & \citep{GHS20}\\
$2.753$ & a mixture of maximal lottery and RaDiUS & \citep{CRWW24}\\
$2.5$ & the equal mixture of maximal lottery and Integrated Veto & \citep{Fra26,Ye26}\\
\rowcolor{lightgray} $2.1441$ & a random-size stable lottery rule & this paper\\
\midrule
\multicolumn{3}{@{}l@{}}{\emph{Lower bounds}}\\
\midrule
$2.0631$ & an example with $m=7$ & \citep{PS21}\\
$2.1126$ & a construction for any $m \ge 3$, bound on the left as $m \to \infty$ & \citep{CR22}\\
\rowcolor{lightgray} $2.136$ & any random-size stable lottery rule & this paper\\
\bottomrule
\end{tabularx}
\caption{Randomized metric distortion bounds from our work and prior work.}
\label{tab:bounds-landscape}
\end{table}

\subsection{Technical Overview}\label{sec:technical-overview}

\paragraph{Random-size stable lotteries.}
Our new ingredient starts from stable $k$-lotteries ($\SL{k}$)~\citep{CRTW25}, which generalize maximal lotteries and satisfy
\[
\Pr\nolimits_v[c\succ_v(\SL{k})^k]\leq\frac1{k+1}
\qquad\text{for every }c\in C.
\]
Here $(\SL{k})^k$ consists of $k$ independent draws from $\SL{k}$, $v$ is a uniformly random voter, and the probability also includes the draws and the prescribed tie-breaking. We let the number of draws itself be random. For a positive-integer-valued random variable $D$, a random-size stable lottery $\RSL{D}$ satisfies
\[
\Pr\nolimits_v[c\succ_v(\RSL{D})^D]\leq\E\left[\frac1{D+1}\right]
\qquad\text{for every }c\in C.
\]
When $D=k$ deterministically, this recovers $\SL{k}$. Crucially, one lottery must be chosen before $D$ is realized: independently choosing an $\SL{k}$ for each $k$ and then mixing those lotteries does not work because repeated draws from the mixture create uncontrolled cross terms. A convex minimax argument nevertheless proves that $\RSL{D}$ exists for every $D$. The definition and proof appear in \Cref{sec:comparison}; an approximation algorithm is deferred to \Cref{app:unification}.

\paragraph{Biased-metric characterization.} The exact biased-metric characterization of \citet{CRWW24}, adapted from an earlier linear programming approach of \citet{CR22}, replaces the worst case over all metrics consistent with the observed ranked preference profile by a vector of $m$ nonnegative offsets, with the offset of a designated optimal candidate normalized to zero. For each voter, the largest decrease between the offsets of a more-preferred and a less-preferred candidate determines her contribution to the optimal candidate's cost, while the smallest offset among a candidate and all candidates ranked below it determines her contribution to that candidate's excess cost over the optimum. Bounding distortion therefore reduces to comparing the lottery's aggregate excess cost with the aggregate largest decrease for every offset vector.

\paragraph{Why a random number of draws helps.}
Fix a voter and a reference candidate $c$, and suppose that $\RSL{D}$ places mass $t$ on candidates strictly below $c$ and no mass on $c$ itself. The probability that $c$ beats all $D$ draws is
\[
\phi(t):=\E[t^D]=\sum_{d\geq1}\Pr[D=d]\,t^d.
\]
It is tempting to use the same formula without the zero-mass assumption, but this is incorrect because draws equal to the reference candidate create ties. Under the random-label tie-breaking defined in \Cref{sec:comparison}, if $b$ is the mass strictly below the reference candidate and $z$ is the mass on it, then conditioning on its label $s\in[0,1]$ gives the corrected probability $\int_0^1\phi(b+sz)\diff s$, as formalized in \Cref{lem:compare-c-PD}. Averaging this over the voters gives a value at most $\E[1/(D+1)]$, with equality when $\RSL{D}$ puts a positive probability mass on $c$. The draw distribution determines the shape of $\phi$, and therefore how strongly stability constrains different positions in the voters' rankings.

Coming back to the offset formulation via biased metrics, for each value $t$, consider the candidates selected with positive probability whose offsets exceed $t$. These are the only ones that can contribute to the lottery's excess cost at that value of $t$. The tightness established above limits how poorly these candidates can be positioned across the voters' rankings on average, and thereby bounds their contribution to the excess cost. We express this bound as the difference of a one-dimensional potential $W$ at two ranking percentiles. As $t$ increases, the sets of candidates with offsets above $t$ are nested, and the endpoint used at one level becomes the starting point at the next. The potential differences therefore telescope, leaving only $W(1)$. The biased-metric characterization then gives distortion at most $1+2W(1)$. Randomizing $D$ lets us shape $\phi$ so that a single potential provides the required bound for every $t$. Computer-assisted search finds a $D^\star$ supported on $\set{1,2,9,10}$, for which we construct a potential $W$ satisfying the required inequalities with $W(1)=57200742653/100000000000$. The resulting distortion bound of $107200742653/50000000000$ can be verified via exact rational arithmetic.

\paragraph{A lower bound for every integer draw distribution.} We complement the potential upper bound with two explicit constructions of hard preference profiles. Each construction turns one scalar inequality involving the probability-generating function of $D$ into an actual distortion witness. We use one instance of the first construction and two instances of the second. Except when $D=1$ deterministically (handled separately), strict convexity makes the random-size stable lottery unique, so the construction applies to every way of selecting a stable lottery; a compactness argument converts weighted voters into a finite unweighted election. Finally, we show that no choice of the draw distribution can make all three resulting instances have distortion at most $267/125$.

\subsection{Related Work}\label{sec:related-work}

\paragraph{Metric distortion.} The metric model was introduced by \citet{ABEPS18}, who proved a lower bound of $3$ for every deterministic rule and an upper bound of $5$ for the Copeland rule. \citet{GKM17} showed that Ranked Pairs can have distortion at least $5$ and, more broadly, that even randomization cannot take tournament rules below distortion $3$. \citet{MW19} reduced the upper bound to $2+\sqrt{5}\approx 4.236$ using generalized uncovered sets and proposed a candidate-selection condition whose universal existence would imply distortion $3$; \citet{Kem20LP} subsequently developed an LP-duality framework that unified these analyses and led to closely related sufficient conditions. \citet{GHS20} finally matched the lower bound with Plurality Matching, settling the optimal deterministic distortion at $3$; \citet{KK22,KK23} subsequently obtained the same optimal guarantee with Plurality Veto and Simultaneous Plurality Veto. For randomized rules, \citet{CRWW24} combine maximal lottery and a rule they call RaDiUS to obtain distortion $2.753$. \citet{Fra26} and \citet{Ye26} independently improve the bound to $2.5$ using the equal mixture of maximal lottery and Integrated Veto, called Veto Lottery by \citet{Ye26}. Our random-size stable lottery improves this to $2.1441$.

\paragraph{Utilitarian distortion.} In the utilitarian model initiated by \citet{PR06}, voters again report only rankings, but the hidden cardinal values are otherwise unrestricted and the objective is to maximize their sum. Here normalization is essential: rankings are unchanged if one rescales each voter's utilities independently, so rank data alone provide no way to place different voters' utilities on a common scale. The standard unit-sum model normalizes each voter's utilities to sum to one, giving every voter equal weight while retaining the relative intensity of her preferences. Metric distortion needs no analogous normalization because a common underlying metric already places all voters' costs on the same scale. Under unit-sum utilities, the optimal deterministic distortion is $\Theta(m^2)$ \citep{CP11,CNP+17}; \citet{BCHL+15} proved an $\Omega(\sqrt{m})$ lower bound for randomized rules, and \citet{EKPS24} matched it with an $O(\sqrt{m})$ rule, closing the randomized case asymptotically as well. Although the metric and utilitarian models are each motivated by the same loss of cardinal information, rules optimized for one can perform poorly in the other. \citet{GLS23} nevertheless design deterministic and randomized rules that simultaneously achieve near-optimal guarantees in both models, establishing best-of-both-worlds distortion.

\paragraph{Information--distortion tradeoffs.} Full rankings are only one point on a broader elicitation spectrum. At one end, top-$t$ ballots ask each voter to rank only her $t$ favorite candidates; work under both utilitarian and metric objectives quantifies what can still be guaranteed as this reported prefix shrinks \citep{Kem20Comm,BHLS22}. At the other end, a value query asks a voter for her exact utility for one candidate in addition to her ranking. A few such queries can sharply reduce utilitarian distortion \citep{ABFV21}; for every fixed number of value queries per voter, \citet{ES25} achieve distortion polynomial in $\min\set{n,m}$, with exponent equal to the reciprocal of the number of queries, matching the known dependence on $m$ when $n=\Omega(m)$. More generally, one can optimize the elicitation format itself subject only to a bit budget. \citet{MPSW19,MSW20} characterize this communication--distortion frontier up to logarithmic factors: achieving utilitarian distortion $d$ requires $\widetilde{\Theta}(m/d)$ bits per voter with deterministic elicitation and $\widetilde{\Theta}(m/d^3)$ bits with randomized elicitation. Hence, the format of the elicited information can matter as much as its quantity: a carefully designed short message may yield lower distortion than a substantially longer ranked ballot.

\section{Preliminaries}\label{sec:model}

Let $C$ be a finite set of $m$ candidates and $V$ a finite set of $n$ voters. For a positive integer $r$, write $[r]=\set{1,\ldots,r}$. Each voter $v\in V$ reports a strict ranking $\succ_v$ of $C$; write $i\succcurlyeq_v j$ if $i=j$ or $i\succ_v j$. The \emph{profile} is the collection $\sigma=(\succ_v)_{v\in V}$. Unless stated otherwise, $v$ is drawn uniformly at random from $V$ in every expectation or probability carrying the subscript $v$.

A \emph{metric instance} places the voters and candidates in a common metric space with distance $d$.\footnote{Distinct points are allowed to be at distance zero, so technically this is a pseudometric.} It is \emph{consistent} with $\sigma$ if every voter's ranking agrees with her distances, that is, $a\succ_vb$ implies $d(v,a)\leq d(v,b)$ for all $v\in V$ and $a,b\in C$. The social cost of a candidate $c$ and the expected social cost of a lottery $P\in\Delta(C)$, the set of probability distributions on $C$, are
\[
\SC_d(c)=\E_v[d(v,c)]
\qquad\text{and}\qquad
\SC_d(P)=\E_{c\sim P}[\SC_d(c)].
\]

\begin{definition}[Distortion]
For a profile $\sigma$ and a lottery $P$, define
\[
\dist_\sigma(P)=\sup_{d\text{ consistent with }\sigma}
\frac{\SC_d(P)}{\min_{c\in C}\SC_d(c)},
\]
with $0/0=1$ and $z/0=+\infty$ for $z>0$. We drop the subscript when the profile is fixed. 

A randomized rule $\mathcal R$ maps each profile $\sigma$ to a lottery, and its \emph{distortion} is $\sup_\sigma\dist_\sigma(\mathcal R(\sigma))$. Note that the rule is unaware of the underlying metric.
\end{definition}

\section{Random-Size Stable Lotteries}\label{sec:comparison}

We now formally define our key novel ingredient: random-size stable lotteries. We prove their existence by extending the minimax argument of \citet{CRTW25} from a fixed number of draws $k$ to a random number of draws $D$. In \Cref{app:unification}, we show that this also yields a standard multiplicative-weights algorithm to compute such lotteries approximately and establish other properties of interest that we will not use in our work.

\paragraph{Comparing multisets.} First, we need to lift each voter's comparison of individual candidates to a comparison between multisets of candidates. Fix a voter $v$ and two nonempty multisets $S$ and $T$. It is natural to compare $S$ and $T$ by the candidates that $v$ ranks highest in them. If these candidates are distinct, $v$'s ranking determines the comparison; the only ambiguity arises when the same candidate is highest-ranked in both multisets, possibly with different multiplicities. To resolve this ambiguity, we use the random-label tie-breaking scheme of \citet{CRTW25}: give every duplicate an independent label drawn uniformly from $[0,1]$, and let $v$ compare duplicates by their labels. Thus, if voter $v$'s highest-ranked candidate in $S\cup T$ has multiplicity $r$ in $S$ and $s$ in $T$, then $\Pr[S\succ_v T]=r/(r+s)$. By choosing $S$ or $T$ to be a singleton, this also yields comparisons between individual candidates and multisets.

\paragraph{Random sample size.} Let $D$ be a positive-integer-valued random variable indicating the number of draws. For a lottery $P$ and a positive integer $d$, $P^d$ and $P^D$ denote the multisets obtained by $d$ and $D$ i.i.d.\ draws from $P$, respectively. To analyze the comparison between a candidate $c$ and the multiset $P^D$, it will be helpful to introduce the following well-known quantity, which is an equivalent representation of a random variable.

\begin{definition}[Probability Generating Function]\label{def:probability-generating-function}
    The probability generating function $\phi$ of $D$ is 
    \begin{equation*}
    \phi(t)=\E[t^D]=\sum_{d\geq1}\Pr[D=d] \cdot t^d.
    \end{equation*}
    For any $D$, $\phi$ is increasing and convex on $[0,1]$, with $\phi(0)=0$, $\phi(1)=1$, and $\int_0^1\phi(t)\diff t=\E[1/(D+1)]$. Unless $D=1$ deterministically, $\phi$ is strictly convex.
\end{definition}

The following lemma uses this function to compare a candidate $c$ to the multiset $P^D$.

\begin{lemma}\label{lem:compare-c-PD}
Fix any voter $v$, lottery $P$, and candidate $c$. Then,
\[
\Pr[c \succ_v P^D] = \int_0^1 \phi(b+sz) \diff s,
\]
where $b = P(\set{j : c \succ_v j})$ is the $P$-mass on candidates $v$ ranks strictly below $c$ and $z = P(c)$. When $z>0$, the same probability is $\frac{1}{z}\int_b^{b+z}\phi(u)\diff u$. This probability is continuous and convex in $P$.
\end{lemma}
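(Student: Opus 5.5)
The plan is to condition first on the realized value $D=d$ and then on the label of $c$. Under the random-label tie-breaking, $c$ is compared as a singleton multiset against $P^d$. Give $c$ a label $s$ uniform on $[0,1]$, and give each draw equal to $c$ an independent uniform label. Then $c\succ_v P^d$ holds exactly when every one of the $d$ draws $X_1,\dots,X_d$ either lies strictly below $c$ in $\succ_v$, or equals $c$ and carries a label smaller than $s$. I will interpret the labels so that a lower label loses. The formula $r/(r+s)$ in the text fixes this convention up to symmetry, because $s$ and $1-s$ are both uniform. Conditioned on $s$, the draws together with their labels are i.i.d. Each one independently is "beaten" with probability $b+sz$, where $b=P(\set{j:c\succ_v j})$ and $z=P(c)$. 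Hence $\Pr[c\succ_v P^d\mid s]=(b+sz)^d$.

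Next I average this over $d\sim D$ and over $s$. This gives $\Pr[c\succ_v P^D]=\int_0^1\E[(b+sz)^D]\diff s=\int_0^1\phi(b+sz)\diff s$, where the interchange of sum and integral is justified by Tonelli, since every term is nonnegative. As a sanity check, I will confirm that this agrees with the $r/(r+s)$ rule. When $c$ is the top element, with multiplicity $1$ against $k$ copies, the probability is $\int s^k\diff s=1/(k+1)$, as it should be. If $z>0$, the substitution $u=b+sz$ immediately yields $\frac1z\int_b^{b+z}\phi(u)\diff u$.

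For continuity, I note that $b$ and $z$ are linear functions of $P$. The map $(b,z)\mapsto\int_0^1\phi(b+sz)\diff s$ is continuous on the relevant compact domain $b,z\ge0$, $b+z\le1$. This holds because $\phi$ is continuous on $[0,1]$, being a power series with nonnegative coefficients that sum to $1$, so it converges uniformly there; dominated convergence then finishes the argument. For convexity in $P$, take $P=\lambda P_1+(1-\lambda)P_2$. The argument $b+sz$ is affine in $P$ for each fixed $s$. Since $\phi$ is convex on $[0,1]$ by \Cref{def:probability-generating-function}, the integrand $\phi(b(P)+sz(P))$ is convex in $P$ for each $s$, and integrating over $s$ preserves convexity.

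The only step that needs care is the first one: I must model the tie-breaking so that the conditional events are independent given $s$. The rest is direct computation.
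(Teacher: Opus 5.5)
Your proposal is correct and follows the paper's proof essentially step for step. You condition on the label $s$ of $c$ to get the per-draw win probability $b+sz$ and hence $(b+sz)^d$, average over $D$ and $s$ with an interchange of sum and integral, substitute $u=b+sz$, and obtain continuity and convexity from the affine dependence of $b+sz$ on $P$. The only cosmetic difference is that you use the continuity and convexity of $\phi$ itself, whereas the paper argues for each power $u^d$ and then averages over $d$ and $s$.
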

\begin{proof}
Let us first analyze the probability that $c$ beats a random draw from $P$. Let $s \in [0,1]$ be the random label assigned to $c$. Then, $c$ beats all candidates ranked strictly below it (probability mass $b$) and an additional $s$ fraction of its own probability mass which would be assigned a label smaller than $s$ (probability mass $s \cdot z$). Hence, the total probability of $c$ defeating a single draw from $P$ is $b+sz$. The probability of it defeating $d$ i.i.d.\ draws from $P$ is then $(b+sz)^d$. Averaging over the number of draws $d$ sampled from $D$ and the random label $s$ uniform in $[0,1]$, we get
\begin{align*}
\Pr[c \succ_v P^D] = \E_{d \sim D} \Pr[c \succ_v P^d] &= \E_{d \sim D} \int_0^1 (b+sz)^d \diff s \\
&= \int_0^1 \left(\E_{d \sim D} (b+sz)^d\right) \diff s = \int_0^1 \phi(b+sz) \diff s,
\end{align*}
where the penultimate step uses Fubini's theorem to exchange the integral and expectation. When $z>0$, the alternative expression follows by the variable substitution $u:=b+sz$.

For each fixed $d$ and $s$, the map $P\mapsto(b+sz)^d$ is continuous and convex because $b+sz$ is linear in $P$ and $u\mapsto u^d$ is continuous and convex. Averaging over $d$ and $s$ preserves convexity. For continuity, if $P_k\to P$, the corresponding integrands converge pointwise and lie in $[0,1]$, so dominated convergence over $d$ and $s$ applies.
\end{proof}

\begin{definition}[Random-size stable lottery]\label{def:random-size-stable}
We say that a lottery is \emph{random-size stable} for $D$, and denote such a lottery by $\RSL{D}$, if
\begin{equation}\label{eq:comparison-system}
\forall c \in C: \Pr\nolimits_v[c\succ_v(\RSL{D})^D]\leq\E\left[\frac1{D+1}\right].
\end{equation}
\end{definition}

If we have a challenger lottery $A$ instead of a fixed challenger $c$, we can also equivalently write, denoting $A = A^1$ to be a random draw from $A$,
\[
\Pr\nolimits_v[A\succ_v(\RSL{D})^D]\leq\E\left[\frac1{D+1}\right].
\]
This is at least as strong as the condition in \Cref{def:random-size-stable}, and equivalence follows from noticing that the left hand side is linear in $A$ and, hence, maximized at some fixed $c$.

When $D=1$ deterministically, \Cref{def:random-size-stable} is precisely a maximal lottery; when $D=k$ deterministically, it is a stable $k$-lottery defined by \citet{CRTW25}, who prove its existence using a minimax argument.\footnote{\citet{CJMW20,JMW20} define a related notion of stable lotteries over committees, which are more restrictive in that the committee does not necessarily come from independent draws from the same lottery, but more general in other dimensions, as discussed in \Cref{app:unification}.} The next result shows that the same argument extends to a random number of draws $D$, proving the existence of $\RSL{D}$ for all $D$.

\begin{theorem}[Existence of random-size stable lotteries]\label{thm:comparison-existence}
For every finite preference profile and every positive-integer-valued random variable $D$, a random-size stable lottery $\RSL{D}$ exists. Further, recall \Cref{eq:comparison-system} defining $\RSL{D}$:
\begin{equation*}
\forall c \in C: \Pr\nolimits_v[c\succ_v(\RSL{D})^D]\leq\E\left[\frac1{D+1}\right].
\end{equation*}
No smaller right-hand side works for every profile, and every candidate in the support of $\RSL{D}$ satisfies this with equality.
\end{theorem}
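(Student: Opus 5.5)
Define $f(A,P) := \Pr\nolimits_v[A\succ_v P^D]$ for lotteries $A,P\in\Delta(C)$, where the challenger $A$ is a single draw. By \Cref{lem:compare-c-PD}, $f$ is linear in $A$, and it is continuous and convex in $P$. Both strategy sets are compact convex simplices. Sion's minimax theorem (or von Neumann's, after noting convex–concave structure) therefore gives
\[
\min_P\max_A f(A,P)=\max_A\min_P f(A,P),
\]
with the minimum attained by continuity.

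It then suffices to show that for every fixed $A$ there is some $P$ with $f(A,P)\le \E[1/(D+1)]$. I take $P=A$. The challenger and the $D$ draws are then $D+1$ i.i.d.\ samples from the same lottery, with ties broken by i.i.d.\ uniform labels. Conditional on $D=d$, the event that the challenger beats all others is the event that a specific one of $d+1$ exchangeable samples is the top element under a (voter order, label) total order. This order is almost surely strict, so the probability is exactly $1/(d+1)$. Averaging over $v$ and $D$ gives $f(A,A)=\E[1/(D+1)]$. Hence the minimax value is at most $\E[1/(D+1)]$, and any minimizer $P^\ast$ satisfies $\max_c f(c,P^\ast)\le \E[1/(D+1)]$. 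Such a $P^\ast$ is an $\RSL{D}$. Alternatively, the fixed-point/exchangeability identity $f(P,P)=\E[1/(D+1)]$ combined with linearity in $A$ gives the same conclusion.

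For the equality on the support, I apply the identity to $P$ itself:
\[
\sum_c P(c)\,f(c,P)=f(P,P)=\E\!\left[\tfrac1{D+1}\right].
\]
Each term satisfies $f(c,P)\le \E[1/(D+1)]$, and the weights $P(c)$ sum to one. So every $c$ with $P(c)>0$ must attain equality.

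For optimality of the right-hand side, the same identity already shows that no smaller bound can hold for any profile at all. For every lottery $P$, $\max_c f(c,P)\ge \sum_c P(c) f(c,P)=\E[1/(D+1)]$. A single-voter profile suffices to make the statement concrete, but the averaging argument covers every profile.

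The main obstacle is the exchangeability step $f(A,A)=\E[1/(D+1)]$. One must check that the random-label tie-breaking, as defined for multisets, matches a uniformly random order among samples of equal candidates. This holds because the multiset rule gives $r/(r+s)$, which is exactly the chance that a particular one of the $r+s$ tied copies has the top label. As a backup, I would verify the identity analytically: by \Cref{lem:compare-c-PD}, for a single voter with $b_c$ the mass below $c$,
\[
\sum_c P(c)\cdot\frac1{P(c)}\int_{b_c}^{b_c+P(c)}\phi=\int_0^1\phi=\E\!\left[\tfrac1{D+1}\right],
\]
since the intervals $[b_c,b_c+P(c)]$ tile $[0,1]$. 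This computation also makes the support-equality and lower-bound claims immediate.
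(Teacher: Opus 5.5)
Your proof is correct and follows the paper's argument essentially step for step. You apply a minimax theorem to $\Pr\nolimits_v[A\succ_v P^D]$, which is linear in $A$ and continuous and convex in $P$ by \Cref{lem:compare-c-PD}. You use the self-play identity $F(P,P)=\E[1/(D+1)]$ to bound the value, and the same weighted-average identity to get equality on the support and optimality of the right-hand side. Your tiling computation is a useful explicit check of the exchangeability step, which the paper states in one line. Citing Sion's theorem is also, if anything, the more precise reference for a payoff that is convex in one argument and linear in the other rather than bilinear.
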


\begin{proof}
Let us abbreviate $\mu=\E[1/(D+1)]$. For a challenger lottery $A$ and a lottery $P$, define
\[
F(A,P)=\Pr\nolimits_v[A\succ_v P^D].
\]
The function is linear in $A$. \Cref{lem:compare-c-PD} establishes its continuity and convexity in $P$. Von Neumann's minimax theorem therefore gives
\[
\min_P\max_A F(A,P)=\max_A\min_P F(A,P).
\]
For every lottery $P$, exchangeability among $d+1$ independent labeled draws gives $\Pr\nolimits_v[P\succ_v P^d]=1/(d+1)$, and hence $F(P,P)=\mu$. Choosing $P=A$ shows that the right-hand side of the minimax equality is at most $\mu$, while $\max_A F(A,P)\geq F(P,P)=\mu$ for every $P$. The minimax value is therefore exactly $\mu$. Because $F$ is linear in $A$, an optimal $P$, denoted $\RSL{D}$, satisfies \Cref{eq:comparison-system} for every candidate. Finally,
\[
\mu=F(\RSL{D},\RSL{D})
=\sum_{c\in C}\RSL{D}(c)\,\Pr\nolimits_v[c\succ_v(\RSL{D})^D].
\]
Every term in this weighted average is at most $\mu$, so equality holds for every candidate in the support. The same identity shows that no smaller universal right-hand side is possible.
\end{proof}

Random-size stable lotteries cannot in general be obtained by simply averaging stable $k$-lotteries for different values of $k$; \Cref{app:mixing-counterexample} gives a two-candidate example.

The use of the minimax theorem also yields a standard multiplicative-weights algorithm for approximating $\RSL{D}$; the precise guarantee appears as \Cref{prop:comparison-computation} in \Cref{app:unification}.

An interesting identity, in which we reverse which side receives the $D$ draws, is given as \Cref{lem:reverse-identity} in \Cref{app:unification}.

\section{Bounding the Metric Distortion via a Potential Function}\label{sec:potential-analysis}

We now connect random-size stability directly to metric distortion. The argument uses the exact biased-metric characterization of \citet{CRWW24}, adapted from \citet{CR22}, which reduces the worst case over all consistent metrics to worst case over $m$-dimensional vectors of nonnegative offsets.

\subsection{Biased-metric characterization} 
\begin{definition}[Offset Loss and Offset Scale]\label{def:offset-loss}
Given an offset vector $x\in\mathbb R_{\geq0}^C$, define
\[
s_{v,c}(x)=\min_{j:\,c\: \succcurlyeq_v\: j} x_j,
\qquad
\Delta_v(x)=\max_{i,j : i\: \succcurlyeq_v\: j}(x_i-x_j).
\]

For a lottery $P$ and a voter $v$, let the \emph{offset loss of voter $v$} and the (overall) \emph{offset loss}, respectively, be
\[
\loss_v(P,x)=\sum_{c\in C}P(c)\,s_{v,c}(x),
\qquad
\loss(P,x)=\E_v[\loss_v(P,x)].
\]
The \emph{offset scale} is $\E_v[\Delta_v(x)]$. 
\end{definition}

In the metric supplied by the characterization below, the offset loss is the excess social cost of $P$ over the designated optimal candidate, while the offset scale is twice that candidate's social cost.

\begin{theorem}[Reduction to offsets~\citep{CR22,CRWW24}]\label{thm:biased-metrics}
Fix a profile and a lottery $P$. Then
\[
\dist(P)=\sup_{\substack{x\in\R_{\geq0}^C:\:\min_c x_c=0}}\left(1+2\frac{\loss(P,x)}{\E_v[\Delta_v(x)]}\right),
\]
with $0/0=0$ and $z/0=+\infty$ for $z>0$ in the displayed fraction. Further, for each candidate $o$ and offset vector $x \in \R_{\ge 0}^C$ with $x_o = 0$, there exists a metric $d$ consistent with the profile such that
\[
\SC_d(o)=\frac12\E_v[\Delta_v(x)]
\qquad\text{and}\qquad
\SC_d(c)-\SC_d(o)=\E_v[s_{v,c}(x)], \forall c, 
\]
under which $o$ is socially optimal and $\SC_d(P)/\SC_d(o) = 1+2\loss(P,x)/\E_v[\Delta_v(x)]$.
\end{theorem}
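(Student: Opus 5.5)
The plan is to prove the two directions separately: show that every consistent metric has ratio at most the offset supremum, and that every admissible offset vector is realized by some consistent metric.

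\emph{Upper bound on $\dist(P)$.} Fix a consistent metric $d$ and let $o$ minimize $\SC_d$. Define offsets $y_c=\SC_d(c)$-style quantities; more precisely, for each voter set $x_c := \max_v$-free quantity — the natural choice from \citet{CR22} is to take $x_c$ as a ``bias'' such that $d(v,c)\le d(v,o)+\ldots$. I would instead argue directly. For each voter $v$ and candidate $c$, use the triangle inequality and consistency to get
\[
d(v,c)-d(v,o)\le \min_{j:\,c\succcurlyeq_v j}\bigl(d(v,j)-d(v,o)\bigr)+\ldots
\]
This bookkeeping is awkward, so I would take the LP route. Fix $P$ and $o$, and write the sup of $\SC_d(P)-\SC_d(o)$ subject to $\SC_d(o)=1$ as a linear program in the variables $d(v,c)$ and $d(c,c')$, with constraints from the triangle inequality and the voter rankings. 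Then show that an optimal solution can be taken to be a ``biased metric''. In such a metric, candidate $c$ sits at a star point with offset $x_c$, with $x_o=0$. Each voter $v$ is at distance $r_v$ from a hub, and $d(v,c)=\max(r_v, \ldots)$ is shaped so that $v$'s ranking is respected. The key computation is that, given offsets, the cheapest consistent voter placement costs $d(v,o)=\tfrac12\Delta_v(x)$. It yields $d(v,c)-d(v,o)=s_{v,c}(x)$: voter $v$'s distance to $c$ can exceed her distance to $o$ only by the smallest offset among candidates she ranks at or below $c$, since $c$ must be no farther than those.

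\emph{Realization (the ``further'' part).} Given $x$ with $x_o=0$, construct $d$ explicitly: put candidates at points and set, for voter $v$, $d(v,c)=\tfrac12\Delta_v(x)+s_{v,c}(x)$. Also set $d(c,c')=\max$ over voters of the triangle-forced value, or simply $|x_c-x_{c'}|+\Delta$-type terms, and verify the properties one by one. First, consistency holds because $s_{v,c}$ is monotone along $v$'s ranking. Second, the triangle inequality holds because $|s_{v,c}-s_{v,c'}|$ together with the $\tfrac12\Delta_v$ terms can be bridged via the inter-candidate distances, and $\Delta_v$ exactly controls the cross-voter constraints $d(v,c)\le d(v,c')+d(c',w)+\ldots$. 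Third, $s_{v,o}=0$ because $x_o=0$ and $o\succcurlyeq_v o$ gives $\min\le 0$ with offsets nonnegative. Averaging then gives both displayed identities, and optimality of $o$ follows from $s_{v,c}\ge0$.

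\emph{Conclusion.} Combining the two directions gives $\sup_d$ of the ratio $=\sup_x \bigl(1+2\loss/\E\Delta\bigr)$, with the normalization $\min_c x_c=0$ handled by shifting. The degenerate $0/0$ conventions are checked directly. The main obstacle is the reduction direction, showing every consistent metric is dominated by a biased one; I expect to rely on the LP argument of \citet{CR22,CRWW24}, citing it rather than reproving it.
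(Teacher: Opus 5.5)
\textbf{The main gap: the direction $\dist(P)\le\sup_x(\cdots)$.} This is the claim that every consistent metric is dominated by some offset vector. You set aside the direct argument as ``awkward bookkeeping'' and defer to an LP argument that you neither carry out nor need. Your remark that the cheapest placement has $d(v,o)=\tfrac12\Delta_v(x)$ is about the realization, not about an arbitrary consistent metric. The paper does not prove this theorem either; it imports it from the cited works. Still, as a proof, your proposal is missing this half.

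The missing idea is to take the offsets to be distances from the optimum. Given a consistent $d$, let $o$ minimize $\SC_d$ and set $x_c:=d(o,c)$, so that $x\geq0$ and $x_o=0$.
\begin{itemize}
\item If $c\succcurlyeq_v j$, consistency and the triangle inequality give $d(v,c)\le d(v,j)\le d(v,o)+x_j$. Hence $d(v,c)-d(v,o)\le s_{v,c}(x)$.
\item If $i\succcurlyeq_v j$, then $x_i-x_j\le\bigl(d(o,v)+d(v,i)\bigr)-\bigl(d(v,j)-d(v,o)\bigr)\le 2d(v,o)$. Hence $\Delta_v(x)\le 2d(v,o)$.
\end{itemize}
Averaging gives $\SC_d(P)-\SC_d(o)\le\loss(P,x)$ and $\SC_d(o)\ge\tfrac12\E_v[\Delta_v(x)]$. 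So $\SC_d(P)/\SC_d(o)$ is at most the offset value of $x$. The degenerate cases also match: $\SC_d(o)=0$ forces $\E_v[\Delta_v(x)]=0$, and then $\SC_d(P)>0$ forces $\loss(P,x)>0$.

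\textbf{The realization direction.} Your choice $d(v,c)=\tfrac12\Delta_v(x)+s_{v,c}(x)$ is the right one, and your arguments for consistency, $s_{v,o}(x)=0$, and optimality of $o$ are fine. However, the triangle inequality is only gestured at, and one of your suggested completions fails. The choice $d(c,c')=|x_c-x_{c'}|$ is too small: take a voter with $c'\succ_v j\succ_v c$ and offsets $x_{c'}=x_c=1$, $x_j=0$. Then $d(v,c)-d(v,c')=1$, which forces $d(c,c')\geq 1$.

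What you need is the four-point inequality $d(v,c)\le d(v,c')+d(c',w)+d(w,c)$ for all voters $v,w$ and candidates $c,c'$. Substituting your distances, this is $s_{v,c}(x)-s_{v,c'}(x)\le\Delta_w(x)+s_{w,c'}(x)+s_{w,c}(x)$. It holds because $s_{v,c}(x)\le x_c$, and for the minimizer $j$ defining $s_{w,c}(x)$ we have $c\succcurlyeq_w j$, so $x_c-s_{w,c}(x)=x_c-x_j\le\Delta_w(x)$. With this inequality, the shortest-path closure of the voter--candidate distances is a pseudometric that preserves every $d(v,c)$. The rest of your argument then goes through.
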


For a finite profile, or more generally any distribution over rankings, we call the expression inside the supremum in \Cref{thm:biased-metrics} the \emph{offset value} of $(P,x)$. On a finite profile, for every $\lambda\geq0$, the lottery $P$ has distortion at most $1+2\lambda$ if and only if $\loss(P,x)\leq\lambda\E_v[\Delta_v(x)]$ for every nonnegative offset vector $x$ with $\min_{c\in C}x_c=0$.

\subsection{Ranking coordinates and win probabilities}

We next develop a consequence of random-size stability that will let us bound $\loss(P,x)$. Recall the probability generating function $\phi$ of $D$ from \Cref{def:probability-generating-function}. We now define its integral and total mass.

\begin{definition}[Integral and Stability Benchmark]\label{def:potential-draw-functions}
For a positive-integer-valued random variable $D$ with probability generating function $\phi$, define
\[
\Phi(u)=\int_0^u\phi(s)\diff s,
\qquad
\mu=\Phi(1)=\E\left[\frac1{D+1}\right].
\]
\end{definition}

\begin{figure}[t]
\centering
\begin{tikzpicture}[x=11.2cm,y=1cm]
\fill[BrickRed!8] (0,0) rectangle (1,0.72);
\draw[BrickRed!75!black,line width=0.8pt] (0,0) rectangle (1,0.72);
\foreach \x in {0.12,0.30,0.55,0.72}{\draw[BrickRed!75!black,line width=0.8pt] (\x,0) -- (\x,0.72);}
\node[font=\small] at (0.06,0.36) {$c_5$};
\node[font=\small] at (0.21,0.36) {$c_2$};
\node[font=\small] at (0.425,0.36) {$c_4$};
\node[font=\small] at (0.635,0.36) {$c_1$};
\node[font=\small] at (0.86,0.36) {$c_3$};
\draw[->,BrickRed!75!black] (0,1.08) -- node[above,font=\small] {increasing preference of voter $v$} (1,1.08);
\node[anchor=east,font=\small] at (0,-0.10) {$0$};
\node[anchor=west,font=\small] at (1,-0.10) {$1$};
\draw[BrickRed!75!black] (0,-0.05) -- (0,-0.16) -- (0.12,-0.16) -- (0.12,-0.05);
\draw[BrickRed!75!black] (0.12,-0.05) -- (0.12,-0.16) -- (0.30,-0.16) -- (0.30,-0.05);
\draw[BrickRed!75!black] (0.30,-0.05) -- (0.30,-0.16) -- (0.55,-0.16) -- (0.55,-0.05);
\draw[BrickRed!75!black] (0.55,-0.05) -- (0.55,-0.16) -- (0.72,-0.16) -- (0.72,-0.05);
\draw[BrickRed!75!black] (0.72,-0.05) -- (0.72,-0.16) -- (1,-0.16) -- (1,-0.05);
\node[anchor=north,font=\scriptsize] at (0.06,-0.17) {$P(c_5)$};
\node[anchor=north,font=\scriptsize] at (0.21,-0.17) {$P(c_2)$};
\node[anchor=north,font=\scriptsize] at (0.425,-0.17) {$P(c_4)$};
\node[anchor=north,font=\scriptsize] at (0.635,-0.17) {$P(c_1)$};
\node[anchor=north,font=\scriptsize] at (0.86,-0.17) {$P(c_3)$};
\end{tikzpicture}
\caption{The $P$-mass arranged from voter $v$'s least- to most-preferred candidate.}
\label{fig:p-mass-ranking}
\end{figure}

Next, we propose a visual interpretation that will allow us to understand and use \Cref{lem:compare-c-PD}, which expresses $\Pr[c \succ_v P^D]$ in terms of $\phi$, at a much more granular level. 

\begin{definition}[Rank intervals, win probabilities, and deviations]\label{def:potential-rank-mass}
For a lottery $P$ and a voter $v$, arrange the mass placed by $P$ on the different candidates on a line, from the voter's least-preferred candidate at coordinate $0$ to her most-preferred candidate at coordinate $1$, as illustrated in \Cref{fig:p-mass-ranking}.

Each candidate $c$ occupies a block $I_{v,c}$ of length $P(c)$. For $H\subseteq\supp(P)$, define
\[
H_v=\bigcup_{c\in H}I_{v,c},
\qquad
M_v(H)=\int_{H_v}\phi(u)\diff u,
\qquad
\widetilde{M}_v(H)=M_v(H)-P(H)\mu.
\]

Thus, $H_v$ is the part of voter $v$'s interval occupied by $H$. A labeled copy of candidate $c$ corresponds to a uniformly random point $u \in I_{v,c}$. By \Cref{lem:compare-c-PD}, this copy beats $D$ i.i.d.\ labeled draws from $P$ with probability $\phi(u)$. Hence, $M_v(H)$ is the probability that a challenger drawn from $P$ lies in $H$ and beats $D$ i.i.d.\ draws from $P$ in voter $v$'s ranking. We call $M_v(H)$ the \emph{win probability} of $H$ for voter $v$ and $\widetilde{M}_v(H)$ its \emph{win-probability deviation} from the benchmark $P(H)\mu$.
\end{definition}

The next lemma justifies calling $P(H) \mu$ the benchmark. It is indeed the average of $M_v(H)$, making $\widetilde{M}_v(H)$ a mean-zero deviation. 

\begin{lemma}\label{lem:potential-moment}
Let $P$ be random-size stable for $D$. Then, for every $H\subseteq\supp(P)$, we have $\E_v[\widetilde{M}_v(H)]=0$.
\end{lemma}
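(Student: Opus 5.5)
The plan is to reduce the claim to a single candidate, using additivity over the blocks of $H$, and then invoke the equality case of \Cref{thm:comparison-existence}.

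First, write $M_v(H)$ as a sum over candidates. For a fixed voter $v$, the blocks $I_{v,c}$ for $c\in C$ are disjoint up to endpoints, so $H_v$ is a disjoint union of the blocks $I_{v,c}$ with $c\in H$. Hence
\[
M_v(H)=\sum_{c\in H}\int_{I_{v,c}}\phi(u)\diff u .
\]
Since $P(H)=\sum_{c\in H}P(c)$, we also get $\widetilde{M}_v(H)=\sum_{c\in H}\bigl(\int_{I_{v,c}}\phi(u)\diff u-P(c)\mu\bigr)$. By linearity of $\E_v$, it therefore suffices to prove
\[
\E_v\Bigl[\int_{I_{v,c}}\phi(u)\diff u\Bigr]=P(c)\mu
\qquad\text{for every } c\in\supp(P).
\]

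Second, identify each block integral with a win probability. Fix $c\in\supp(P)$, so $z=P(c)>0$. In voter $v$'s arrangement, the block $I_{v,c}$ is exactly $[b,b+z]$, where $b$ is the $P$-mass that $v$ ranks strictly below $c$. The alternative form in \Cref{lem:compare-c-PD} then gives
\[
\int_{I_{v,c}}\phi(u)\diff u=\int_b^{b+z}\phi(u)\diff u=z\cdot\Pr[c\succ_v P^D].
\]
Averaging over $v$ yields $\E_v\bigl[\int_{I_{v,c}}\phi\bigr]=P(c)\,\Pr_v[c\succ_v P^D]$.

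Third, use equality on the support. \Cref{thm:comparison-existence} states that every candidate in the support of $\RSL{D}$ satisfies \eqref{eq:comparison-system} with equality, that is, $\Pr_v[c\succ_v P^D]=\mu$. Substituting this gives $P(c)\mu$ for each $c\in H$, and summing over $c\in H$ gives $\E_v[\widetilde{M}_v(H)]=0$.

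There is no real obstacle here. The only point needing care is that the equality in \Cref{thm:comparison-existence} holds for every random-size stable lottery, not just for the minimax optimizer built in its proof. This holds because the final identity $\mu=F(P,P)=\sum_c P(c)\Pr_v[c\succ_v P^D]$ uses only exchangeability, which is valid for every $P$, together with the upper bound \eqref{eq:comparison-system} for each $c$. A weighted average equal to $\mu$ whose terms are all at most $\mu$ forces equality on every candidate with positive weight. The restriction $H\subseteq\supp(P)$ is exactly what lets us apply this equality, and zero-mass candidates would contribute nothing anyway.
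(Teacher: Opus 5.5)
Your proposal is correct and follows the paper's proof: both use \Cref{lem:compare-c-PD} to identify $\int_{I_{v,c}}\phi(u)\diff u$ with $P(c)\Pr[c\succ_v P^D]$, apply the support-equality part of \Cref{thm:comparison-existence}, and sum over $c\in H$. Your remark that this equality holds for every random-size stable lottery, and not only for the minimax optimizer, is a useful clarification that the paper leaves implicit.
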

\begin{proof}
Fix a voter $v$ and candidate $c \in \supp(P)$. \Cref{lem:compare-c-PD} shows that
\[
\Pr[c \succ_v P^D] = \frac{1}{P(c)}\int_{I_{v,c}}\phi(u)\diff u.
\]
The final statement of \Cref{thm:comparison-existence} says that this probability, averaged over voters, is exactly $\mu$ for every $c\in\supp(P)$. Multiplying by $P(c)$ and summing over $c\in H$ gives $\E_v[M_v(H)]=P(H)\mu$, which is equivalent to the claim.
\end{proof}

For any voter $v$, $P(H)\mu$ is the natural benchmark for $M_v(H)$: $P(H)$ is the probability that the challenger belongs to $H$, while $\mu$ is the probability that a distinguished draw is best among $D+1$ i.i.d.\ draws. Thus, $\widetilde{M}_v(H)$ is positive when challengers from $H$ win more often than this benchmark from voter $v$'s perspective, and negative when they win less often. \Cref{lem:potential-moment} says that these deviations cancel when averaged over voters.

\subsection{Bounding the offset loss}

\Cref{thm:biased-metrics} shows that the distortion of any lottery $P$ can be upper bounded by $1+2\lambda$ if we show $\loss(P,x) = \E_v[\loss_v(P,x)] \le \lambda \cdot \E_v[\Delta_v(x)]$ for all offset vectors $x \in \R_{\ge 0}^C$ with $x_o = 0$ for at least one candidate $o$. Fix any designated optimal candidate $o$ and nonnegative offset vector $x$ with $x_o = 0$. Further, fix any threshold $t \ge 0$ and mark each candidate $c$ with $x_c > t$. 

\paragraph{Proof strategy.} If we had a voter-level bound $\loss_v(P,x) \le \lambda \cdot \Delta_v(x)$ for each voter $v$, then the above inequality would follow immediately by averaging over $v$. However, this may not hold for a reasonable value of $\lambda$. Instead, we will shift each voter's loss by adding $\widetilde{M}_v(H)$ with the same multiplier for every voter, and then bound the shifted loss by $\lambda \cdot \Delta_v(x)$. Because the shifts are mean-zero (\Cref{lem:potential-moment}), they cancel out when averaging over $v$, which would still imply the desired inequality.

A final piece of the puzzle is to choose $H$. For a fixed threshold $t$, we will let $H$ contain the marked candidates in the support of $P$ whose offsets exceed $t$; integrating over all possible $t$ in the end will account for the entire offset.

Another objective we will achieve along the way is to compress all useful information from the profile into three quantities, a lower cutoff $a_v(t)$, a marked probability $p(t)$, and an upper cutoff $A_v(t)$, as functions of $t$. This will be helpful in computer-assisted search later on. We will begin by expressing $\loss_v(P,x)$ in terms of the lower cutoff $a_v(t)$, and later express the correction term involving $M_v(H)$ in terms of all three quantities.

\begin{definition}[Lower cutoff]\label{def:threshold-voter-loss}
For each voter $v$, consider the $P$-mass arranged from her least-preferred candidate to her most-preferred in \Cref{fig:p-mass-ranking}. Scan this interval from $0$ to $1$ and stop upon reaching the first unmarked candidate. The \emph{lower cutoff} $a_v(t)$ is the total probability under $P$ of the candidates passed before the scan stops.
\end{definition}

In \Cref{fig:p-mass-ranking}, this scan certifies that the interval from $0$ through $a_v(t)$ is occupied by marked candidates. Its connection to the offset loss is exact.

\begin{lemma}[Decomposing the offset loss into lower cutoffs]\label{lem:potential-threshold-decomposition}
For every voter $v$ and threshold $t\geq0$,
\[
a_v(t)=\Pr\nolimits_{c\sim P}[s_{v,c}(x)>t].
\]
Consequently,
\begin{align*}
\loss_v(P,x)&=\int_0^\infty a_v(t)\diff t,\\
\loss(P,x)&=\int_0^\infty\E_v[a_v(t)]\diff t.
\end{align*}
\end{lemma}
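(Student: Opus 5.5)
The plan is to establish the pointwise identity first and then obtain both integral formulas from the layer-cake representation of a nonnegative number.

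\textbf{Step 1: the scan and the event $s_{v,c}(x)>t$ select the same candidates.} Fix $v$ and $t\geq0$. Recall that $s_{v,c}(x)=\min_{j:\,c\succcurlyeq_v j}x_j$. Hence $s_{v,c}(x)>t$ holds if and only if every candidate $j$ with $c\succcurlyeq_v j$ is marked, i.e., has $x_j>t$. In words, $c$ and everything $v$ ranks below $c$ are marked.

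Let $u$ be the least-preferred unmarked candidate of $v$. If every candidate is marked, set $u$ to be undefined. Such a $u$ always exists when $t\geq0$, since $x_o=0$ means $o$ is unmarked, but the argument does not need this. With this notation, the candidates satisfying $s_{v,c}(x)>t$ are exactly those ranked strictly below $u$.

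The scan in \Cref{def:threshold-voter-loss} moves through the $P$-mass from the bottom of $v$'s ranking. Candidates outside $\supp(P)$ occupy blocks of length zero, so I interpret the scan as passing through every candidate in $v$'s ranking order, including zero-mass ones. It then stops at $u$, which may have zero mass. The $P$-mass passed before stopping is therefore $P(\{c : u\succ_v c\})$. This equals $\Pr_{c\sim P}[s_{v,c}(x)>t]$.

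The one point needing care is that a zero-mass unmarked candidate must still stop the scan. This follows from reading the definition over all candidates in ranking order, not only the support. Under that reading the identity holds exactly.

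\textbf{Step 2: integrate over $t$.} Since $x\geq0$, each $s_{v,c}(x)$ is nonnegative. The layer-cake formula gives
\[
s_{v,c}(x)=\int_0^\infty \mathbf 1[s_{v,c}(x)>t]\diff t .
\]
Multiply by $P(c)$ and sum over $c$. Exchanging the finite sum with the integral is Tonelli, since the integrand is nonnegative. This yields
\[
\loss_v(P,x)=\int_0^\infty \Pr\nolimits_{c\sim P}[s_{v,c}(x)>t]\diff t=\int_0^\infty a_v(t)\diff t .
\]

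\textbf{Step 3: average over voters.} Take $\E_v$ of both sides and exchange the finite average with the integral, again by nonnegativity. This gives $\loss(P,x)=\int_0^\infty\E_v[a_v(t)]\diff t$.

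All the integrals are effectively over the bounded range $[0,\max_c x_c]$, since $a_v(t)=0$ beyond it. So no convergence issues arise, and the only real content of the proof is the combinatorial identity in Step 1.
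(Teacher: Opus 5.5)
Your proposal is correct and follows the paper's proof essentially verbatim: you identify the event $s_{v,c}(x)>t$ with the scan passing $c$, then apply the tail-integral (layer-cake) identity and average over voters. Your explicit remark that a zero-mass unmarked candidate must still stop the scan spells out a point that the paper's one-line argument leaves implicit.
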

\begin{proof}
Recall that $s_{v,c}(x)=\min_{j:\,c\succcurlyeq_v j}x_j$. Thus, $s_{v,c}(x)>t$ if and only if $c$ and every candidate ranked below $c$ by $v$ are marked, which holds exactly when the scan passes $c$. This proves the formal expression for $a_v(t)$. The tail-integral identity for the nonnegative random variable $s_{v,c}(x)$ then gives
\[
\loss_v(P,x)=\E_{c\sim P}[s_{v,c}(x)]=\int_0^\infty\Pr\nolimits_{c\sim P}[s_{v,c}(x)>t]\diff t=\int_0^\infty a_v(t)\diff t.
\]
Averaging over voters proves the second equality.
\end{proof}

\begin{definition}[Marked probability and upper cutoff]\label{def:potential-cutoffs}
Let $H(t)=\set{c\in\supp(P):x_c>t}$ be the marked support set, and let $p(t)=P(H(t))$ be the total probability that $P$ places on marked candidates. For voter $v$, let
\[
H_v(t)=\bigcup_{c\in H(t)}I_{v,c}
\]
be the part of her ranking interval occupied by these candidates. An \emph{upper cutoff} $A_v(t)$ is any coordinate satisfying $H_v(t)\subseteq[0,A_v(t)]$.
\end{definition}

The scan in \Cref{def:threshold-voter-loss} passes only marked candidates, so $[0,a_v(t)]\subseteq H_v(t)$. The marked region $H_v(t)$ has length $p(t)$ and, by the definition of the upper cutoff, lies inside $[0,A_v(t)]$. Hence,
\[
[0,a_v(t)]\subseteq H_v(t)\subseteq[0,A_v(t)]
\qquad\text{and}\qquad
0\leq a_v(t)\leq p(t)\leq A_v(t)\leq1.
\]
The smallest upper cutoff ends at the block of voter $v$'s most-preferred marked candidate. We also allow larger choices because this flexibility will be helpful when comparing successive thresholds. \Cref{fig:threshold-ranking} illustrates the two cutoffs and the marked probability.

\begin{figure}[H]
\centering
\begin{tikzpicture}[x=11.2cm,y=1cm]
\draw[->,BrickRed!75!black] (0,3.30) -- node[above,font=\small] {increasing preference of voter $v$} (1,3.30);
\node[anchor=west,font=\small\bfseries] at (0,2.92) {(a) Lower and upper cutoffs in voter $v$'s ranking};
\fill[black!4] (0,2.10) rectangle (1,2.68);
\fill[BrickRed!24] (0,2.10) rectangle (0.12,2.68);
\fill[BrickRed!24] (0.30,2.10) rectangle (0.55,2.68);
\draw[BrickRed!75!black,line width=0.8pt] (0,2.10) rectangle (1,2.68);
\foreach \z in {0.12,0.30,0.55,0.72}{\draw[BrickRed!75!black,line width=0.8pt] (\z,2.10) -- (\z,2.68);}
\node[font=\small] at (0.06,2.39) {$c_5$};
\node[font=\small] at (0.21,2.39) {$c_2$};
\node[font=\small] at (0.425,2.39) {$c_4$};
\node[font=\small] at (0.635,2.39) {$c_1$};
\node[font=\small] at (0.86,2.39) {$c_3$};
\draw[densely dashed] (0.12,2.00) -- (0.12,2.76);
\draw[densely dashed] (0.72,2.00) -- (0.72,2.76);
\node[anchor=north,font=\small] at (0.12,2.00) {$a_v(t)$};
\node[anchor=north,font=\small] at (0.72,2.00) {$A_v(t)$};
\fill[BrickRed!24] (1.05,1.56) rectangle (1.08,1.72);
\draw[BrickRed!75!black] (1.05,1.56) rectangle (1.08,1.72);
\node[anchor=west,font=\scriptsize] at (1.09,1.64) {offset greater than $t$};
\node[anchor=west,font=\small\bfseries] at (0,1.18) {(b) Placement with the greatest possible $\phi$-weight};
\fill[black!4] (0,0.30) rectangle (1,0.88);
\fill[BrickRed!24] (0,0.30) rectangle (0.12,0.88);
\fill[BrickRed!24] (0.47,0.30) rectangle (0.72,0.88);
\draw[BrickRed!75!black,line width=0.8pt] (0,0.30) rectangle (1,0.88);
\foreach \z in {0.12,0.47,0.72}{\draw[BrickRed!75!black,line width=0.8pt] (\z,0.30) -- (\z,0.88);}
\node[font=\scriptsize] at (0.06,0.59) {length $a$};
\node[font=\scriptsize] at (0.595,0.59) {length $p-a$};
\node[anchor=north,font=\small] at (0,0.22) {$0$};
\node[anchor=north,font=\small] at (0.12,0.22) {$a$};
\node[anchor=north,font=\small] at (0.47,0.22) {$a+A-p$};
\node[anchor=north,font=\small] at (0.72,0.22) {$A$};
\node[anchor=north,font=\small] at (1,0.22) {$1$};
\end{tikzpicture}
\caption{The example continues \Cref{fig:p-mass-ranking}. At the threshold shown, $H(t)=\set{c_5,c_4}$, so the marked probability is $p(t)=P(c_5)+P(c_4)$. The scan for voter $v$ passes $c_5$ and stops at the unmarked candidate $c_2$, giving the lower cutoff $a_v(t)=P(c_5)$. The displayed upper cutoff $A_v(t)$ is deliberately nonminimal: it lies to the right of both marked blocks. Panel (b) keeps the forced interval $[0,a]$ and moves the remaining marked length $p-a$ as far right as the upper cutoff $A$ permits.}
\label{fig:threshold-ranking}
\end{figure}

The next lemma allows us to forget the exact locations of the marked blocks by bounding the win probabilities $M_v(H(t))$ in terms of only three quantities: the lower cutoff $a_v(t)$, their total length $p(t)$, and the upper cutoff $A_v(t)$.

\begin{lemma}[Bounding the win probabilities in $(a,p,A)$]\label{lem:potential-moment-rearrangement}
Let $0\leq a\leq p\leq A\leq1$, and let $S \subseteq [0,1]$ be a measurable set of length $p$ such that $[0,a] \subseteq S \subseteq [0,A]$. Then,
\[
\int_S\phi(u)\diff u\leq\Phi(a)+\Phi(A)-\Phi(a+A-p).
\]
In particular, for every voter $v$ and threshold $t$,
\begin{equation}\label{eq:potential-moment-rearrangement}
M_v(H(t))\leq\Phi(a_v(t))+\Phi(A_v(t))-\Phi(a_v(t)+A_v(t)-p(t)).
\end{equation}
\end{lemma}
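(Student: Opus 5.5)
The argument is a bathtub-type rearrangement: since $\phi$ is nondecreasing on $[0,1]$ (\Cref{def:probability-generating-function}), the integral of $\phi$ over a set of fixed length is largest when the set sits as far right as possible. Here $S$ must contain $[0,a]$ and lie inside $[0,A]$. So I will split $S=[0,a]\cup S'$, where $S'=S\setminus[0,a]\subseteq(a,A]$ has length $p-a$. The first piece contributes exactly $\int_0^a\phi=\Phi(a)$. I will then show that $\int_{S'}\phi\le\int_{A-(p-a)}^{A}\phi=\Phi(A)-\Phi(a+A-p)$; this is the configuration in panel (b) of \Cref{fig:threshold-ranking}. Note that $a+A-p\ge a$ because $A\ge p$, so the comparison interval $T=[a+A-p,A]$ also lies in $[a,A]$.

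For the comparison between $S'$ and $T$, which is the only step requiring care, I will use the standard exchange argument. Both sets have length $p-a$, so the two differences $S'\setminus T$ and $T\setminus S'$ have equal measure. Every point of $S'\setminus T$ lies in $[a,a+A-p)$, while every point of $T\setminus S'$ lies in $[a+A-p,A]$. Setting $\theta=\phi(a+A-p)$, monotonicity of $\phi$ gives
\[
\int_{S'\setminus T}\phi\le\theta\,|S'\setminus T|=\theta\,|T\setminus S'|\le\int_{T\setminus S'}\phi .
\]
Adding $\int_{S'\cap T}\phi$ to both sides then yields the claim, and measurability poses no problem since all the sets are Lebesgue measurable.

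For the ``in particular'' statement, I will take $S=H_v(t)$. By \Cref{def:potential-rank-mass}, $M_v(H(t))=\int_{H_v(t)}\phi$. The set $H_v(t)$ is a finite union of blocks with total length $p(t)$. The inclusions $[0,a_v(t)]\subseteq H_v(t)\subseteq[0,A_v(t)]$ and the ordering $0\le a_v(t)\le p(t)\le A_v(t)\le 1$ were established just before the lemma, so the hypotheses hold with $(a,p,A)=(a_v(t),p(t),A_v(t))$. Beyond checking that these hypotheses are met, I expect no real obstacle.
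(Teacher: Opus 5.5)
Your proposal is correct and follows the same rearrangement argument as the paper: split off $[0,a]$, then push the remaining length $p-a$ to $[a+A-p,A]$ using that $\phi$ is nondecreasing, and apply the result to $S=H_v(t)$. The paper simply asserts that this placement maximizes the integral, whereas your exchange argument with the threshold $\theta=\phi(a+A-p)$ proves it explicitly.
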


\begin{proof}
The part of $S$ outside $[0,a]$ has length $p-a$. Because $\phi$ is nondecreasing, its integral is largest when this remaining length is placed at the right end of $[0,A]$, namely on $[a+A-p,A]$, as in \Cref{fig:threshold-ranking}(b). This gives
\[
\int_S\phi(u)\diff u\leq\int_0^a\phi(u)\diff u+\int_{a+A-p}^A\phi(u)\diff u=\Phi(a)+\Phi(A)-\Phi(a+A-p).
\]
Applying this bound to $S=H_v(t)$, using $[0,a_v(t)] \subseteq H_v(t) \subseteq [0,A_v(t)]$, yields \Cref{eq:potential-moment-rearrangement}.
\end{proof}

\paragraph{Adding a mean-zero correction.}
The lower cutoff $a_v(t)$ is the contribution that we need to bound, while \Cref{lem:potential-moment-rearrangement} bounds the win probability $M_v(H(t))$. The win-probability deviation
\[
\widetilde{M}_v(H(t))=M_v(H(t))-p(t)\mu
\]
links the two: because $H(t)$ and $p(t)$ do not depend on the voter, \Cref{lem:potential-moment} gives $\E_v[\widetilde{M}_v(H(t))]=0$. We may therefore add any voter-independent multiple of this deviation to $a_v(t)$ without changing its average.

Suppose $0<p(t)<1$. Then $p(t)\mu-\Phi(p(t))>0$ because the average of the nondecreasing, nonconstant function $\phi$ on $[0,p(t)]$ is smaller than its average on $[0,1]$. Consider the reference placement in which the scan passes every candidate in the marked support set. Here $a_v(t)=p(t)$, $H_v(t)=[0,p(t)]$, and $M_v(H(t))=\Phi(p(t))$. We choose the multiplier so that the corrected lower cutoff is zero in this reference placement.

\begin{definition}[Lower-cutoff correction]\label{def:potential-corrected-loss}
When $0<p(t)<1$, define the \emph{lower-cutoff correction} and the \emph{corrected lower cutoff}, respectively, by
\begin{equation}\label{eq:potential-corrected-loss}
\begin{aligned}
\operatorname{corr}_v(t)&=\frac{p(t)}{p(t)\mu-\Phi(p(t))}\,\widetilde{M}_v(H(t)),\\
Q_v(t)&=a_v(t)+\operatorname{corr}_v(t).
\end{aligned}
\end{equation}
\end{definition}

The multiplier is positive. Thus, substituting the win-probability bound \Cref{eq:potential-moment-rearrangement} into \Cref{eq:potential-corrected-loss} bounds $Q_v(t)$ using only the three scalars retained from the ranking. We name this scalar bound next.

\begin{definition}[Scalar bounds for corrected lower cutoffs]\label{def:potential-corrected-bounds}
For $0<p<1$ and $0\leq a\leq p\leq A\leq1$, define
\begin{equation}\label{eq:potential-edge}
\mathcal E_\phi(a,p,A)=a+\frac{p\bigl(\Phi(a)+\Phi(A)-\Phi(a+A-p)-p\mu\bigr)}{p\mu-\Phi(p)},
\end{equation}
and set $\mathcal E_\phi(0,0,A)=0$. For $r\in[0,1]$, define
\[
\mathcal G_\phi(r)=r-\frac{\phi(r)-\mu}{1-\mu}.
\]
\end{definition}

The function $\mathcal E_\phi$ handles $p(t)<1$. When $p(t)=1$, the denominator in \Cref{eq:potential-corrected-loss} vanishes, and we use $\mathcal G_\phi$ instead. Indeed, $x_o=0\leq t$ makes $o$ unmarked, and $p(t)=1$ therefore implies $P(o)=0$. If $r_v$ is the probability assigned to candidates that voter $v$ ranks below $o$, then the lower-cutoff scan stops no later than $o$, giving $a_v(t)\leq r_v$, while \Cref{lem:compare-c-PD} shows that stability against $o$ bounds the average of $\phi(r_v)$. The following lemma records both cases.

\begin{lemma}[Correcting one lower cutoff]\label{lem:potential-local-correction}
Let $P$ be random-size stable for $D$. Fix a candidate $o$, a nonnegative offset vector $x$ with $x_o=0$, and a threshold $t\geq0$. Use the lower cutoff from \Cref{def:threshold-voter-loss} and the quantities from \Cref{def:potential-cutoffs}. If $p(t)=0$, set $Q_v(t)=0$. If $0<p(t)<1$, let $Q_v(t)$ be the corrected lower cutoff from \Cref{def:potential-corrected-loss}. Then, for every upper cutoff $A_v(t)$,
\[
\E_v[Q_v(t)]=\E_v[a_v(t)]
\qquad\text{and}\qquad
Q_v(t)\leq\mathcal E_\phi(a_v(t),p(t),A_v(t)).
\]
If $p(t)=1$, let $r_v$ be the probability assigned to candidates that voter $v$ ranks below $o$ and set $Q_v(t)=\mathcal G_\phi(r_v)$. Then $\E_v[Q_v(t)]\geq\E_v[a_v(t)]$.
\end{lemma}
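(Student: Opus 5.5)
The plan is to treat the three regimes $p(t)=0$, $0<p(t)<1$, and $p(t)=1$ separately. Each one reduces to a lemma already established.

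\emph{Case $p(t)=0$.} The marked support set is empty. So the scan in \Cref{def:threshold-voter-loss} stops at the first block it meets (or passes only zero-length blocks), giving $a_v(t)=0=Q_v(t)$. The equality of means is then trivial, and the bound holds because $\mathcal E_\phi(0,0,A)=0$ by convention.

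\emph{Case $0<p(t)<1$.} Since $H(t)\subseteq\supp(P)$ and $H(t)$ does not depend on the voter, \Cref{lem:potential-moment} gives $\E_v[\widetilde M_v(H(t))]=0$. The multiplier $p(t)/(p(t)\mu-\Phi(p(t)))$ is voter-independent, and it is positive by the observation preceding \Cref{def:potential-corrected-loss}. Hence $\E_v[\operatorname{corr}_v(t)]=0$ and $\E_v[Q_v(t)]=\E_v[a_v(t)]$. For the pointwise bound, I would substitute \eqref{eq:potential-moment-rearrangement} into $\widetilde M_v(H(t))=M_v(H(t))-p(t)\mu$, which is valid for any upper cutoff $A_v(t)$ since only $H_v(t)\subseteq[0,A_v(t)]$ is used. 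Multiplying by the positive multiplier preserves the inequality, and adding $a_v(t)$ gives exactly $\mathcal E_\phi(a_v(t),p(t),A_v(t))$ as written in \eqref{eq:potential-edge}.

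\emph{Case $p(t)=1$.} Here every support candidate is marked. Since $x_o=0\le t$, $o$ is unmarked, so $P(o)=0$. Every candidate passed by voter $v$'s scan is marked and so differs from $o$. Moreover, the scan cannot pass any positive-mass block above $o$ without first reaching $o$'s position. I would formalize this by noting that $s_{v,c}(x)\le x_o=0\le t$ for any $c\succcurlyeq_v o$, so by \Cref{lem:potential-threshold-decomposition} only candidates ranked below $o$ contribute, giving $a_v(t)\le r_v$.

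Next, apply stability against $o$. By \Cref{lem:compare-c-PD} with $z=P(o)=0$ and $b=r_v$, we have $\Pr[o\succ_v P^D]=\phi(r_v)$, so \eqref{eq:comparison-system} gives $\E_v[\phi(r_v)]\le\mu$. Therefore
\[
\E_v[\mathcal G_\phi(r_v)]=\E_v[r_v]-\frac{\E_v[\phi(r_v)]-\mu}{1-\mu}\ge\E_v[r_v]\ge\E_v[a_v(t)].
\]
This uses $1-\mu>0$, which holds because $\mu=\E[1/(D+1)]\le 1/2$.

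The only delicate point is the $p(t)=1$ case: justifying $a_v(t)\le r_v$ carefully through the $s_{v,c}$ characterization rather than the informal scan, and checking the sign conventions. Everything else is direct substitution.
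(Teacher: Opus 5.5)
Your proposal is correct and follows the paper's proof essentially step for step: the trivial $p(t)=0$ case, the mean-zero correction via \Cref{lem:potential-moment} together with substitution of \eqref{eq:potential-moment-rearrangement} (using the positive multiplier) when $0<p(t)<1$, and, when $p(t)=1$, the bound $a_v(t)\le r_v$ combined with stability against $o$ through \Cref{lem:compare-c-PD} with $z=P(o)=0$. Your derivation of $a_v(t)\le r_v$ from $s_{v,c}(x)\le x_o$ and your explicit check that $1-\mu>0$ are slightly more careful than the paper's wording, but they do not change the route.
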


\begin{proof}
Abbreviate $p=p(t)$. If $p=0$, then $a_v(t)=0$ for every voter, so both claims are immediate. If $0<p<1$, \Cref{lem:potential-moment} gives $\E_v[\operatorname{corr}_v(t)]=0$, proving the equality of averages. The pointwise bound follows by substituting \Cref{eq:potential-moment-rearrangement} into \Cref{eq:potential-corrected-loss}.

Suppose $p=1$. As observed above, $P(o)=0$ and $a_v(t)\leq r_v$. By \Cref{lem:compare-c-PD}, stability against $o$ gives $\E_v[\phi(r_v)]\leq\mu$. Therefore,
\[
\E_v[Q_v(t)]=\E_v[r_v]+\frac{\mu-\E_v[\phi(r_v)]}{1-\mu}\geq\E_v[r_v]\geq\E_v[a_v(t)].
\]
\end{proof}

\paragraph{The one-threshold reduction.} At a fixed threshold, the exact positions of the marked candidates affect $M_v(H(t))$, but \Cref{lem:potential-local-correction} removes this dependence. For the remaining argument, we retain only the lower cutoff $a_v(t)$, the marked probability $p(t)$, and an upper cutoff $A_v(t)$; $\mathcal E_\phi(a_v(t),p(t),A_v(t))$ bounds the corrected lower cutoff.

\paragraph{From one threshold to all thresholds.}
To combine the one-threshold bounds for a fixed voter, we compare thresholds separated by $\Delta_v(x)$. At each threshold after the first, the definition of $\Delta_v(x)$ ensures that every candidate still marked is ranked below the candidate that stopped the scan at the preceding threshold. Consequently, the preceding lower cutoff is an upper cutoff at the current threshold.

\paragraph{The remaining analytic goal.} We seek a potential $W$ such that $\mathcal E_\phi(a,p,A)\leq W(A)-W(a)$ and $\mathcal G_\phi(r)\leq W(1)-W(r)$. When the preceding lower cutoff is used as the current upper cutoff, these potential differences telescope across thresholds and leave only $W(1)$.

The next lemma completes the reduction from the profile-dependent metric problem to a one-dimensional potential. Its conclusion has the factor $1+2W(1)$ because the offset characterization identifies the optimal candidate's cost with one half of the offset scale.

\begin{lemma}[Potential criterion]\label{lem:rsl-potential}
Let $P$ be random-size stable for $D$. Suppose there is a nondecreasing function $W:[0,1]\to\mathbb R$ with $W(0)=0$ such that
\begin{align}
\mathcal E_\phi(a,p,A)&\leq W(A)-W(a) &&\text{for every }0\leq a\leq p\leq A\leq1\text{ with }p<1,\label{eq:potential-internal-condition}\\
\mathcal G_\phi(r)&\leq W(1)-W(r) &&\text{for every }r\in[0,1].\label{eq:potential-boundary-condition}
\end{align}
Then $P$ has metric distortion at most $1+2W(1)$.
\end{lemma}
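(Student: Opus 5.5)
By \Cref{thm:biased-metrics}, it suffices to show $\loss(P,x)\le W(1)\,\E_v[\Delta_v(x)]$ for every designated $o$ and every offset vector $x\ge0$ with $x_o=0$. We may assume $\E_v[\Delta_v(x)]>0$. Otherwise every voter ranks candidates in nondecreasing order of offset from top to bottom. Then $o$, having offset $0$, is ranked last by every voter. Consequently $s_{v,c}(x)=0$ for all $v,c$, so the loss is zero.

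By \Cref{lem:potential-threshold-decomposition}, $\loss(P,x)=\int_0^\infty \E_v[a_v(t)]\diff t$. By \Cref{lem:potential-local-correction}, $\E_v[a_v(t)]\le \E_v[Q_v(t)]$ at every threshold. Exchanging expectation and integral, the claim reduces to the voter-level bound
\[
\int_0^\infty Q_v(t)\diff t\le W(1)\,\Delta_v(x)\qquad\text{for every voter }v.
\]

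Fix $v$ and write $\delta=\Delta_v(x)$. We split $[0,\infty)$ into blocks $[\tau,\tau+\delta)$ with $\tau\in[0,\delta)$. Then
\[
\int_0^\infty Q_v(t)\diff t=\int_0^\delta\sum_{k\ge0}Q_v(\tau+k\delta)\diff\tau ,
\]
so it suffices to show $\sum_{k\ge0}Q_v(t_k)\le W(1)$ for each chain $t_k=\tau+k\delta$. Only finitely many terms are nonzero, since $p(t)=0$ beyond $\max_c x_c$. Along a chain, $p(t_k)$ is nonincreasing in $k$, so there is an initial segment with $p=1$, followed by a segment with $0<p<1$, followed by zeros.

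For the upper cutoffs along the chain, set $A_v(t_0)=1$ if $p(t_0)<1$. At later thresholds, use $A_v(t_k)=a_v(t_{k-1})$. This is legitimate by the observation preceding the lemma. If $c$ is marked at $t_k$ and $c'$ is the unmarked candidate that stopped the scan at $t_{k-1}$, then $x_c>t_k=t_{k-1}+\delta\ge x_{c'}+\delta$. If $c$ were ranked above $c'$, then $\Delta_v(x)\ge x_c-x_{c'}>\delta$, a contradiction. So every marked candidate lies below $c'$, and its block lies inside $[0,a_v(t_{k-1})]$.

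We also need $a_v(t_k)\le p(t_k)\le A_v(t_k)$, which holds by the containments noted after \Cref{def:potential-cutoffs}. Applying \Cref{lem:potential-local-correction} and \eqref{eq:potential-internal-condition} then gives $Q_v(t_k)\le W(a_v(t_{k-1}))-W(a_v(t_k))$.

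The main obstacle is the boundary segment where $p=1$. There, $Q_v=\mathcal G_\phi(r_v)$ is defined via $r_v$ rather than $a_v$, so the chain does not start cleanly. The planned resolution is to show that at most one threshold in the chain has $p(t_k)=1$. Suppose $p(t_0)=p(t_1)=1$. Then every support candidate has offset greater than $t_0+\delta$, while $x_o=0$. The scan at $t_0$ stops no later than at $o$, since $o$ is unmarked. The candidate $c'$ that stops it is unmarked at $t_0$, so either $c'=o$ or $c'\notin\supp(P)$; in either case $x_{c'}\le t_0$. It has mass zero or is $o$, and because $p(t_0)=1$ all support mass is marked. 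So every support candidate ranked above $c'$ would violate the $\Delta_v$ bound by the argument above. Hence all support mass lies below $c'$, which is itself unmarked. It follows that the scan passes all support mass, giving $a_v(t_0)=1$ and $r_v=1$.

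Continuing this argument at $t_1$, a support candidate with $x_c>t_0+\delta\ge x_o+\delta$ ranked above $o$ would violate $\Delta_v$. So all support lies below $o$, and $r_v=1$. The key point is that $r_v$ does not depend on $t$. For the chain with $k_0$ initial thresholds at $p=1$, we therefore plan to show directly that $r_v=1$ whenever $k_0\ge2$. We then bound those terms together using $\mathcal G_\phi(1)=0$ (since $\phi(1)=1$), and in general use $\mathcal G_\phi(r_v)\le W(1)-W(r_v)$. The first threshold with $p<1$ takes the upper cutoff $A=r_v$. This is valid because the candidate stopping the scan is $o$ or lies below it, and in either case all marked mass lies below $o$ by the $\Delta_v$ argument.

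Telescoping then yields
\[
\sum_k Q_v(t_k)\le W(1)-W(r_v)+W(r_v)-W(a_{\text{last}})\le W(1),
\]
using $W\ge W(0)=0$. In the case without a $p=1$ segment, we obtain $W(1)-W(a_{\text{last}})\le W(1)$ in the same way. Verifying this $p=1$ bookkeeping carefully, including that the repeated $p=1$ terms contribute zero, is where I expect the main effort.
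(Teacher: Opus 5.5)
Most of your argument follows the paper's own proof. You reduce via \Cref{thm:biased-metrics}, replace $a_v(t)$ by $Q_v(t)$ using \Cref{lem:potential-local-correction}, and prove $\int_0^\infty Q_v(t)\diff t\le W(1)\Delta_v(x)$ voter by voter along chains $\tau+k\Delta_v(x)$. Along each chain you use the preceding lower cutoff as the next upper cutoff, and $r_v$ after an all-support threshold. One side remark: your announced goal, that at most one threshold in a chain has $p(t_k)=1$, is false in general. Long runs with $p=1$ occur when $v$ ranks all support below $o$ and $\Delta_v(x)$ is small. What you actually prove is the right statement, and it is what the paper uses: two such thresholds force $r_v=1$, so every all-support term equals $\mathcal G_\phi(1)=0$.

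The genuine gap is the case $\Delta_v(x)=0$.

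First, your dismissal of $\E_v[\Delta_v(x)]=0$ is incorrect. $\Delta_v(x)=0$ means a more-preferred candidate never has a larger offset. Hence $s_{v,c}(x)=x_c$, not $0$. Also $o$ is not ranked last; instead every positive-offset candidate is ranked below $o$. For an arbitrary lottery the loss can then be positive (put all mass on a positive-offset candidate). It vanishes for $P$ only because of stability.

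Second, and more importantly, individual voters with $\Delta_v(x)=0$ can coexist with $\E_v[\Delta_v(x)]>0$. For such voters your chain decomposition with $\delta=0$ is meaningless. The averaging step still needs $\int_0^\infty Q_v(t)\diff t\le0$ for them, even though $a_v(t)$ may be positive, so this bound must come from the correction term.

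The missing argument, which the paper gives, runs as follows. When $\Delta_v(x)=0$, every marked candidate lies below every unmarked one in $v$'s ranking. So at each threshold $H_v(t)=[0,p(t)]$ and $a_v(t)=p(t)$. If $0<p(t)<1$, choosing $A_v(t)=p(t)$ gives $Q_v(t)\le\mathcal E_\phi(p(t),p(t),p(t))=0$. If $p(t)=1$, all support lies below $o$, so $r_v=1$ and $Q_v(t)=\mathcal G_\phi(1)=0$. This per-voter bound also covers the case $\E_v[\Delta_v(x)]=0$, so no separate preliminary reduction is needed.
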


\begin{proof}
Fix a nonnegative offset vector $(x_c)_{c\in C}$ and a designated candidate $o$ with $x_o=0$. For each voter $v$, abbreviate $\Delta_v=\Delta_v(x)$ as in \Cref{thm:biased-metrics}. For every threshold $t\geq0$, let $(Q_v(t))_v$ be the corrected lower cutoffs supplied by \Cref{lem:potential-local-correction}.

We first prove the deterministic inequality
\begin{equation}\label{eq:potential-telescoping}
\int_0^\infty Q_v(t)\diff t\leq W(1)\Delta_v.
\end{equation}
If $\Delta_v=0$, the offsets cannot increase as we move upward through voter $v$'s ranking. At every threshold, the scan therefore passes all candidates in the marked support set before it stops, so $H_v(t)=[0,p(t)]$ and $a_v(t)=p(t)$. Taking $A_v(t)=p(t)$ bounds every corrected lower cutoff with $p(t)<1$ by $\mathcal E_\phi(p(t),p(t),p(t))=0$. If $p(t)=1$, every supported candidate lies below $o$, so $r_v=1$ and $\mathcal G_\phi(1)=0$. This proves \Cref{eq:potential-telescoping} when $\Delta_v=0$.

Suppose $\Delta_v>0$. Every threshold $t\geq0$ can be written uniquely as $t=s+k\Delta_v$, where $s\in[0,\Delta_v)$ and $k$ is a nonnegative integer. Fix the base threshold $s$ and define
\[
L_c=\setsize*{\set{k\geq0:x_c>s+k\Delta_v}}.
\]
Thus, $L_c$ counts how many of the successive thresholds $s,s+\Delta_v,s+2\Delta_v,\ldots$ are smaller than $x_c$. The definition of $\Delta_v$ implies $L_i-L_j\leq1$ whenever $i\succ_vj$: a more-preferred candidate cannot remain marked for two more thresholds than a less-preferred candidate. Put $H_k=H(s+k\Delta_v)$, $H_{k,v}=H_v(s+k\Delta_v)$, $p_k=p(s+k\Delta_v)$, and $a_k=a_v(s+k\Delta_v)$, and set $a_{-1}=1$.

If $p_k<1$, then $H_{k,v}\subseteq[0,a_{k-1}]$. For $k=0$, this is immediate from $a_{-1}=1$. For $k\geq1$, a violation would place some candidate with $L_c\geq k+1$ above the candidate with $L_c\leq k-1$ that stopped the preceding scan, contradicting $L_i-L_j\leq1$. Hence, $a_{k-1}$ is an upper cutoff at threshold $s+k\Delta_v$. Applying \Cref{lem:potential-local-correction,eq:potential-internal-condition} with $a=a_k$ and $A=a_{k-1}$ bounds $Q_v(s+k\Delta_v)$ by $W(a_{k-1})-W(a_k)$.

If $p_k=1$ and $k\geq1$, every supported candidate has $L_c\geq k+1\geq2$, whereas $L_o=0$. The inequality $L_i-L_j\leq1$ forces every supported candidate below $o$, so $r_v=1$ and $Q_v(s+k\Delta_v)=\mathcal G_\phi(1)=0$. Thus, only $k=0$ can give a nonzero all-support contribution. In that case, any candidate in $H_1$ has $L_c\geq2$ and must lie below $o$, so $H_{1,v}\subseteq[0,r_v]$. Hence, $r_v$ is an upper cutoff at the next threshold. Any intervening all-support contributions vanish; when the marked probability first drops below one, its marked region still lies inside $[0,r_v]$, so the subsequent bounds from \Cref{eq:potential-internal-condition} telescope from $W(r_v)$ to zero. Therefore, \Cref{eq:potential-boundary-condition} gives
\[
\sum_{k\geq0}Q_v(s+k\Delta_v)\leq\mathcal G_\phi(r_v)+W(r_v)\leq W(1).
\]
If no threshold has marked probability one, the bounds from \Cref{eq:potential-internal-condition} telescope to $W(1)$ because $a_{-1}=1$, while $a_k=0$ for all sufficiently large $k$. Integrating over the base threshold yields
\[
\int_0^\infty Q_v(t)\diff t
=\int_0^{\Delta_v}\sum_{k\geq0}Q_v(s+k\Delta_v)\diff s
\leq W(1)\Delta_v,
\]
which proves \Cref{eq:potential-telescoping}.

It remains to average over voters and integrate over thresholds. \Cref{lem:potential-local-correction} gives $\E_v[Q_v(t)]=\E_v[a_v(t)]$ when $p(t)<1$ and $\E_v[Q_v(t)]\geq\E_v[a_v(t)]$ when $p(t)=1$. All functions are bounded step functions with finite support in $t$, so Fubini's theorem and \Cref{eq:potential-telescoping} give
\begin{equation}\label{eq:potential-excess-bound}
\loss(P,x)=\int_0^\infty\E_v[a_v(t)]\diff t\leq W(1)\E_v[\Delta_v].
\end{equation}
Because \Cref{eq:potential-excess-bound} holds for every $o$ and every nonnegative $x$ with $x_o=0$, \Cref{thm:biased-metrics} gives $\dist(P)\leq1+2W(1)$.
\end{proof}

\section{\texorpdfstring{A Rule With Metric Distortion At Most $2.1441$}{A Rule With Metric Distortion At Most 2.1441}}\label{sec:main-bound}

We apply \Cref{lem:rsl-potential} to the exact draw distribution in \Cref{fig:single-rsl-candidate}. A numerical search suggested its sparse support and a value near $2.14401485303$, but only the exact rational bound in the figure is claimed and verified over the complete continuum.

\refstepcounter{figure}\label{fig:single-rsl-candidate}
\begin{tcolorbox}[colback=gray!4,colframe=black,rounded corners,title={Figure~\thefigure: Candidate draw distribution and distortion bound}]
\small
Let $D^\star$ be supported on $\set{1,2,9,10}$ with
\[
\begin{aligned}
\Pr[D^\star=1]&=\frac{462688493532}{10^{12}},&\qquad \Pr[D^\star=2]&=\frac{194607598780}{10^{12}},\\
\Pr[D^\star=9]&=\frac{156952022902}{10^{12}},&\qquad \Pr[D^\star=10]&=\frac{185751884786}{10^{12}}.
\end{aligned}
\]
Its probability-generating function, its integral, and its stability benchmark are
\[
\begin{aligned}
\phi(u)&=\frac{462688493532u+194607598780u^2}{10^{12}}\\
&\quad+\frac{156952022902u^9+185751884786u^{10}}{10^{12}},\\
\Phi(u)&=\int_0^u\phi(s)\diff s,\qquad \mu=\Phi(1)=\frac{54251205298963}{165000000000000}.
\end{aligned}
\]
The potential parameters and candidate distortion bound are
\[
t_0=\frac{508613}{1000000},\qquad
C_0=\frac{57200742653}{100000000000},\qquad
1+2C_0=\frac{107200742653}{50000000000}=2.14401485306.
\]
\end{tcolorbox}

\mainupperbound*

\begin{proof}
Let $D^\star$, $\phi$, $\Phi$, $\mu$, $t_0$, and $C_0$ be as specified in \Cref{fig:single-rsl-candidate}. Define
\begin{equation}\label{eq:single-rsl-potential}
W(u)=
\begin{cases}
0,&0\leq u\leq t_0,\\
C_0-u-\dfrac{\mu-\phi(u)}{1-\mu},&t_0<u\leq1.
\end{cases}
\end{equation}

By \Cref{lem:rsl-potential}, it suffices to verify four claims: the endpoints of $W$, the boundary condition \Cref{eq:potential-boundary-condition}, monotonicity of $W$, and the internal condition \Cref{eq:potential-internal-condition}. All sign checks below use exact rational arithmetic. The polynomial inequalities are verified using the Bernstein subdivision method of \citet{MN13}.\footnote{They express a polynomial on the current box in the so-called Bernstein basis, whose coefficients bound its values throughout the box. If these bounds do not establish the desired sign, they bisect the box, update the Bernstein coefficients on the two subboxes, and repeat.} The accompanying repository\footnote{\url{https://github.com/nisarg89/metric-distortion}} contains the static checker \path{certificate_distortion.py}. It verifies the recorded polynomial identities, every subdivision, and the required coefficient signs without performing a search. \Cref{app:single-rsl-certificate} maps each check below to its name in the program.

\paragraph{Verification 1: Endpoints.} The displayed weights are positive and sum to one. By definition, $W(0)=0$; because $\phi(1)=1$, \Cref{eq:single-rsl-potential} gives $W(1)=C_0$.

\paragraph{Verification 2: The boundary condition.} When $u>t_0$, substituting \Cref{eq:single-rsl-potential} makes \Cref{eq:potential-boundary-condition} an equality. When $u\leq t_0$, the boundary condition is equivalent to
\begin{equation}\label{eq:single-rsl-low-boundary}
C_0-u-\frac{\mu-\phi(u)}{1-\mu}\geq0\qquad\text{for every }u\in[0,t_0].
\end{equation}
The checker verifies this polynomial inequality throughout the interval. In particular, at $u=t_0$, it shows that the right limit of $W$ is at least $W(t_0)=0$.

\paragraph{Verification 3: Monotonicity.} The first branch of $W$ is constant. Exact rational arithmetic verifies $\phi'(t_0)>1-\mu$. Because $\phi$ is convex, for $u>t_0$ we have $W'(u)=-1+\phi'(u)/(1-\mu)>0$. Together with the junction check from Verification~2, this proves that $W$ is nondecreasing.

\paragraph{Verification 4: The internal condition.} The case $p=0$ is immediate from the definition of $\mathcal E_\phi$. Suppose first that $0<p<1$ and $A\leq t_0$. Put $d=p-a$ and $K=p\mu-\Phi(p)>0$. Multiplying \Cref{eq:potential-edge} by $K$ and using that $\Phi(x)/x$ is nondecreasing gives
\begin{align*}
K\mathcal E_\phi(a,p,A)&=p\bigl(\Phi(A)-\Phi(A-d)\bigr)+p\Phi(a)-a\Phi(p)-pd\mu\\
&\leq pd\bigl(\phi(A)-\mu\bigr)\leq0.
\end{align*}
The last inequality uses the exact check $\phi(t_0)<\mu$ and the monotonicity of $\phi$. Hence $\mathcal E_\phi(a,p,A)\leq0=W(A)-W(a)$ throughout the low region.

On the remaining region, clear the positive denominator in \Cref{eq:potential-edge}. The required slack is
\begin{align}\label{eq:single-rsl-slack}
S(a,p,A)={}&\bigl(W(A)-W(a)-a\bigr)\bigl(p\mu-\Phi(p)\bigr)\notag\\
&-p\bigl(\Phi(a)+\Phi(A)-\Phi(a+A-p)-p\mu\bigr).
\end{align}
The breakpoint $t_0$ leaves three cases outside the low region: $a\leq p\leq t_0\leq A$, $a\leq t_0\leq p\leq A$, and $t_0\leq a\leq p\leq A$. After mapping each case to the unit cube and cancelling factors that are nonnegative throughout the relevant domain, the checker verifies that the resulting rational polynomial is nonnegative. Thus $S\geq0$ in all three cases. The face $A=t_0$ is covered by the analytic low-region argument; on the face $a=t_0$, using the right branch for $W(a)$ only strengthens the inequality because the actual value is $W(t_0)=0$.

All four claims of \Cref{lem:rsl-potential} now hold with $W(1)=C_0$, so every $\RSL{D^\star}$ has the candidate distortion bound in \Cref{fig:single-rsl-candidate}. On every profile, such a lottery exists by \Cref{thm:comparison-existence}; choosing any one of them defines a randomized rule with the claimed distortion.
\end{proof}

\paragraph{How the distribution was found.} The numerical discovery phase is separate from the exact proof:
\begin{itemize}
    \item The code \path{search_distortion.py} enumerated all $\sum_{i=1}^4\binom{20}{i}=6195$ subsets of $\set{1,\ldots,20}$ of size at most four and, for each support, numerically optimized its weights against a finite-grid longest-path relaxation of the potential criterion.
    \item We reoptimized the best $100$ supports on a finer grid and applied differential evolution followed by sequential least squares programming (SLSQP) to the best $20$. This produced the support $\set{1,2,9,10}$ and the numerical parameters that were then rationalized. These finite-grid calculations neither certify a continuum bound nor establish global optimality.
    \item The generic adaptive verifier \path{verify_distortion.py} constructed the fixed exact certificate described above. The theorem relies only on the rational parameters and the non-searching checker \path{certificate_distortion.py}.
\end{itemize}

\section{A Lower Bound for Every Integer Draw Distribution}\label{sec:integer-lower}

The preceding upper bound uses a carefully chosen distribution of the number of draws. We now ask how much the guarantee could improve by choosing a different positive-integer-valued distribution while retaining random-size stability. The answer is limited: no fixed draw distribution can achieve distortion at most $2.136$.

\integerlowerbound*

The proof has two conceptual ingredients. First, we construct two symmetric families of preference profiles. In the first, one candidate is inserted at one of two positions among exchangeable candidates; in the second, a distinguished set of candidates occupies one of two pairs of percentile intervals. In each case, random-size stability pins down the uniform lottery, while the offset characterization turns its position in the rankings into an exact distortion ratio. Whenever $D$ is not deterministically one, the probability-generating function $x\mapsto\E[x^D]$ is strictly convex, which makes the stable lottery unique and removes any dependence on the rule's selection convention. The idealized voter probabilities are then approximated by a finite electorate without losing the strict inequality.

Second, we reweight the probability of choosing each possible number $d$ of draws by $1/(d+1)$. After this reweighting, whether each of the three ratios exceeds $71/125$ can be expressed by the sign of the expectation of an explicit function of $d$. We exhibit positive weights for which the weighted sum of these three functions is strictly negative for every positive integer $d$. Its expectation is therefore strictly negative for every draw distribution, so at least one of the three profiles has excess-cost ratio greater than $71/125$, and hence distortion greater than $1+2\cdot71/125=267/125$. The complete constructions, the finite-profile transfer, and the exact calculation appear in \Cref{app:integer-lower-proof}.

\section{Discussion}\label{sec:discussion}

There are now two distinct gaps. Within the family of rules that always return a random-size stable lottery $\RSL{D}$ for some fixed integer draw distribution $D$, our lower and upper bounds are less than $0.009$ apart. Closing this gap may require a more refined search over draw distributions and potentials and possibly tightening our potential-based method of bounding the loss. A near-optimal draw distribution may suggest a candidate for the optimal draw distribution, which may then be provable analytically.

The second possible gap is the one between our restricted family and the broader family of all randomized voting rules. While our upper bound is less than $0.04$ above the universal (asymptotic) lower bound of \citet{CR22}, our lower bound implies that no draw distribution can completely close this gap. That said, it is possible that the bound of \citet{CR22} is not tight and an optimal randomized rule in fact lies within our family. However, if it is tight, closing the remaining gap will require seeking a rule outside of our family. 

\section*{AI Disclosure}
All the mathematical proofs were derived by OpenAI Codex (GPT-5.6-Sol at Max effort) based on research directions, literature connections, proof and search strategies, and inspirations supplied by the author. The author has verified all mathematical details, presented simplified arguments and detailed expositions, often with the aid of GPT-5.6-Sol and Claude Opus 5, and retains full responsibility for all the content and any errors. 

\section*{Acknowledgments}
This work was supported by an NSERC Discovery Grant and an NSERC-CSE Research Communities Grant. Researchers funded through the NSERC-CSE Research Communities Grants do not represent the Communications Security Establishment Canada or the Government of Canada. Any research, opinions or positions they produce as part of this initiative do not represent the official views of the Government of Canada.

\printbibliography

\clearpage
\section*{\centering Appendix}
\appendix
\addtocontents{toc}{\protect\setcounter{tocdepth}{1}}

\section{Map of the Exact Upper-Bound Certificate}\label{app:single-rsl-certificate}

This section records how the mathematical checks in the proof of \Cref{thm:intro-main-upper} appear in \path{certificate_distortion.py}. Substituting the polynomials $\phi$, $\Phi$, and $W$ from \Cref{fig:single-rsl-candidate,eq:single-rsl-potential} into the cited expressions, including the slack $S$ defined in \Cref{eq:single-rsl-slack}, and expanding gives the program's polynomials; those routine expansions are omitted.

\begin{table}[htb!]
\centering
\small
\begin{tabularx}{\textwidth}{@{}p{0.29\textwidth}p{0.27\textwidth}X@{}}
\toprule
Check in the proof & Checker routine & Polynomial in the checker\\
\midrule
The probabilities are positive and sum to one, and the distortion bound is $1+2C_0$ & \path{exact_parameter_checks} & \path{probability_mass_identity} and its reverse, \path{probability_0_positive} through \path{probability_3_positive}, and \path{distortion_identity} and its reverse\\
$\phi(t_0)<\mu$ and $\phi'(t_0)>1-\mu$ & \path{threshold_checks} & \path{activation_below_mean} and \path{threshold_derivative}\\
\Cref{eq:single-rsl-low-boundary} on $[0,t_0]$ & \path{boundary_inequality} & \path{boundary_polynomial}\\
Monotonicity of the second branch of $W$ on $[t_0,1]$ & \path{potential_monotonicity} & \path{potential_derivative}, a direct polynomial check of the conclusion also obtained from convexity in Verification~3\\
$a\leq p\leq t_0\leq A$ & \path{internal_cell_001} & \path{cell_001}\\
$a\leq t_0\leq p\leq A$ & \path{internal_cell_011} & \path{cell_011}\\
$t_0\leq a\leq p\leq A$ & \path{internal_cell_111} & \path{cell_111}\\
\bottomrule
\end{tabularx}
\end{table}

For completeness, the three cell polynomials use variables $(x,y,z)\in[0,1]^3$. The polynomial \path{cell_001} is the polynomial continuation of $S(a,p,A)/p$ after substituting $a=t_0xy$, $p=t_0y$, and $A=t_0+(1-t_0)z$. The polynomial \path{cell_011} is the polynomial continuation of $S(a,p,A)/(1-p)$ after substituting $a=t_0x$, $p=t_0+(1-t_0)y$, and $A=p+(1-p)z$. Finally, \path{cell_111} is the polynomial continuation of $S(a,p,A)/((1-a)(1-p))$ after substituting $a=t_0+(1-t_0)x$, $p=a+(1-a)y$, and $A=p+(1-p)z$. The cancelled factors are positive in the interiors of the corresponding cells, and the polynomial continuations cover their boundary faces.

\section{Further Properties of Random-Size Stable Lotteries}\label{app:unification}

This appendix records further consequences of random-size stability. It relates the definition to deterministic challenger committees, comparisons using other numbers of samples, and lotteries with small support. These consequences do not subsume stable lotteries over committees \citep{CJMW20,JMW20}: here every committee consists of independent draws from one lottery and a voter compares committees by their favorite members, whereas that literature permits correlated committee lotteries and more general committee preferences.

\subsection{Why separately chosen stable lotteries cannot simply be mixed}\label{app:mixing-counterexample}

Suppose masses $3/5$ and $2/5$ of the voters rank $a\succ b$ and $b\succ a$, respectively. The lottery $P_1=\delta_a$ is stable for one draw, while $P_2=(4/5)\delta_a+(1/5)\delta_b$ is stable for two draws because
\begin{align*}
\Pr\nolimits_v[a\succ_v P_2^2]
&=\frac35\int_0^1\left(\frac15+\frac45s\right)^2\diff s
+\frac25\int_0^1\left(\frac45s\right)^2\diff s=\frac13,\\
\Pr\nolimits_v[b\succ_v P_2^2]
&=\frac35\int_0^1\left(\frac15s\right)^2\diff s
+\frac25\int_0^1\left(\frac45+\frac15s\right)^2\diff s=\frac13.
\end{align*}
If $D$ is uniform on $\set{1,2}$, their average $P=(P_1+P_2)/2$ satisfies
\[
\Pr\nolimits_v[a\succ_v P]=\frac{51}{100},
\qquad
\Pr\nolimits_v[a\succ_v P^2]=\frac{33}{100},
\]
and hence $\Pr\nolimits_v[a\succ_v P^D]=21/50>5/12=\E[1/(D+1)]$. The cross terms created by drawing repeatedly from $P$ are not controlled by the separate guarantees for $P_1$ and $P_2$. Selecting $P_D$ after observing $D$ would satisfy a different averaged statement, not the requirement that one base lottery be chosen before $D$.

\subsection{Computing a random-size stable lottery}

\begin{proposition}\label{prop:comparison-computation}
Suppose $D$ has finite support of size $s$, whose largest value is $d_{\max}$, and let $m=|C|\geq2$. For every positive integer $N$, one can compute a lottery $\overline P$ satisfying
\[
\max_{c\in C}\Pr\nolimits_v[c\succ_v\overline P^D]
\leq\E\left[\frac1{D+1}\right]+\sqrt{\frac{\log m}{2N}}.
\]
In the real-arithmetic model, the algorithm uses $O(Nnms\log(d_{\max}+1))$ arithmetic operations and exponential evaluations. Thus additive error at most $\varepsilon$ is obtained in $O(nms\log(d_{\max}+1)\log(m)/\varepsilon^2)$ time.
\end{proposition}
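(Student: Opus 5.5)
We compute $\overline P$ by running multiplicative weights (Hedge) for the challenger, with a best response for the lottery player, in the zero-sum game $F(A,P)=\Pr\nolimits_v[A\succ_v P^D]$ from the proof of \Cref{thm:comparison-existence}. Swapping roles is convenient: $F$ is linear in $A$ but only convex in $P$. The natural scheme therefore lets the challenger $A$ be the Hedge learner over the $m$ pure strategies (candidates). Each round, the $P$-player responds with $P_\tau=A_\tau$, and the output is $\overline P=\frac1N\sum_{\tau=1}^N P_\tau$.

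Concretely, initialize uniform weights on $C$. In round $\tau$, set $P_\tau=A_\tau$ equal to the current normalized weights. Compute the gain vector $g_\tau(c)=\Pr\nolimits_v[c\succ_v P_\tau^D]\in[0,1]$ for every $c$, and update the weights multiplicatively by $\exp(\eta g_\tau(c))$ with $\eta=\sqrt{8\log m/N}$. The standard Hedge regret bound for gains in $[0,1]$ gives, for every $c$,
\[
\frac1N\sum_\tau g_\tau(c)\leq\frac1N\sum_\tau F(A_\tau,P_\tau)+\sqrt{\frac{\log m}{2N}}.
\]
By the exchangeability identity in the proof of \Cref{thm:comparison-existence}, $F(P_\tau,P_\tau)=\mu=\E[1/(D+1)]$ exactly, so the first term on the right equals $\mu$. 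On the left, \Cref{lem:compare-c-PD} shows $P\mapsto\Pr\nolimits_v[c\succ_v P^D]$ is convex, so Jensen gives $\Pr\nolimits_v[c\succ_v\overline P^D]\leq\frac1N\sum_\tau g_\tau(c)$. Together these yield the claimed bound for every $c$.

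For the running time, each round must compute $g_\tau(c)$ for all $c$ and all voters. For each voter $v$, one pass over her ranking gives the prefix masses $b$ and block lengths $z=P_\tau(c)$ for all $m$ candidates in $O(m)$ time. \Cref{lem:compare-c-PD} then gives $\Pr[c\succ_v P^D]=\sum_d\Pr[D=d]\int_0^1(b+sz)^d\diff s$. Each integral has the closed form $((b+z)^{d+1}-b^{d+1})/((d+1)z)$ when $z>0$, and equals $b^d$ when $z=0$. Each such term costs $O(\log(d+1))$ operations by repeated squaring, so each $(v,c)$ pair costs $O(s\log(d_{\max}+1))$. Averaging over voters and performing the $m$ exponential updates gives $O(nms\log(d_{\max}+1))$ per round, and multiplying by $N$ gives the stated total. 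Choosing $N=\lceil\log m/(2\varepsilon^2)\rceil$ gives the $\varepsilon$ corollary.

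The main obstacle is making sure the self-play trick is legitimate. Using the learner's own strategy as the opponent's response works only because $F(P,P)=\mu$ holds identically, with no need for a best-response oracle. Convexity in $P$, not linearity, must then carry the average from the gains to $\overline P$. I would double-check that Jensen is applied in the correct direction, and that Hedge's regret bound for gain maximization uses the constant $\sqrt{\log m/(2N)}$ with $\eta=\sqrt{8\log m/N}$.
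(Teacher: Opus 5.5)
Your proposal is correct and follows essentially the same argument as the paper: exponential weights on the candidates with gains $g_\tau(c)=\Pr\nolimits_v[c\succ_v P_\tau^D]$ played against itself, the exchangeability identity $F(P_\tau,P_\tau)=\mu$ for the learner's average gain, convexity from \Cref{lem:compare-c-PD} to pass to $\overline P$, and the closed-form integral with repeated squaring for the running time. One wording slip: you open by calling the lottery player's move a best response, but $P_\tau=A_\tau$ is not one; as you later note, only $F(A_\tau,P_\tau)=\mu$ is needed.
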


\begin{proof}
Start with the uniform lottery $P_1$ and set $\eta=\sqrt{8\log(m)/N}$. At iteration $t$, put $g_{t,c}=\Pr\nolimits_v[c\succ_v P_t^D]$ and update
\[
P_{t+1}(c)=
\frac{P_t(c)\exp(\eta g_{t,c})}
{\sum_{b\in C}P_t(b)\exp(\eta g_{t,b})}.
\]
The standard exponential-weights calculation for numbers in $[0,1]$ gives
\[
\max_{c\in C}\frac1N\sum_{t=1}^N g_{t,c}
\leq
\frac1N\sum_{t=1}^N\sum_{c\in C}P_t(c)g_{t,c}
+\frac{\log m}{\eta N}+\frac\eta8.
\]
The exchangeability identity proved inside \Cref{thm:comparison-existence} makes the inner sum $\E[1/(D+1)]$ at every iteration. With the chosen $\eta$, the last two terms sum to $\sqrt{\log(m)/(2N)}$. Finally, let $\overline P=N^{-1}\sum_{t=1}^NP_t$. The convexity established in the same proof gives $\Pr\nolimits_v[c\succ_v\overline P^D]\leq N^{-1}\sum_t\Pr\nolimits_v[c\succ_v P_t^D]$ for every $c$.

It remains to account for the running time. For a voter $v$ and candidate $c$, let $b_{v,c}$ be the $P_t$-mass strictly below $c$. Because every $P_t(c)$ is positive, \Cref{lem:compare-c-PD} gives
\[
g_{t,c}=\frac1n\sum_{v\in V}\sum_{d\in\operatorname{supp}(D)}\Pr[D=d]\,
\frac{(b_{v,c}+P_t(c))^{d+1}-b_{v,c}^{d+1}}{(d+1)P_t(c)}.
\]
For each voter, all values $b_{v,c}$ are obtained in one scan of her ranking. Repeated squaring then evaluates the displayed powers in $O(nms\log(d_{\max}+1))$ arithmetic operations per iteration; the multiplicative-weights update costs another $O(m)$ operations and exponential evaluations. This proves the claimed total.
\end{proof}

For an infinite-support distribution, the same iteration guarantee holds provided that the required expectations can be evaluated to sufficient precision, but the running time then depends on the cost of those evaluations.

For a deterministic nonempty committee $A\subseteq C$, the event $A\succ_v(\RSL{D})^D$ means that voter $v$'s favorite labeled member of $A$ beats her favorite of the $D$ independent draws.

\begin{proposition}\label{prop:set-attack}
Every nonempty committee $A\subseteq C$ satisfies
\[
\Pr\nolimits_v[A\succ_v(\RSL{D})^D]
\leq\min\set{1,|A|\E\left[\frac1{D+1}\right]}.
\]
Moreover,
\[
\min_{P\in\Delta(C)}\max_{\varnothing\neq A\subseteq C}
\frac{\Pr\nolimits_v[A\succ_v P^D]}{|A|}
=\E\left[\frac1{D+1}\right].
\]
\end{proposition}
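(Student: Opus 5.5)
For the first claim, I would reduce to single candidates via a union bound. For the event $A\succ_v(\RSL{D})^D$, voter $v$'s favorite labeled member of $A$ must beat all $D$ draws. Because $A$ is a deterministic set, each member appears once and its label matters only against duplicates among the draws. So the favorite member of $A$ beating everything implies that this particular candidate $c\in A$ beats $(\RSL{D})^D$, with the label of $c$ the same in both events. The event is therefore contained in $\bigcup_{c\in A}\{c\succ_v(\RSL{D})^D\}$, coupling labels so that each $c$ carries its own uniform label. Taking probabilities over $v$, $D$, draws and labels, the union bound together with \Cref{eq:comparison-system} gives at most $|A|\mu$. The bound $1$ is trivial.

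To make the coupling rigorous, I would note that for a fixed voter the favorite member $c^\ast$ of $A$ is deterministic. The event $A\succ_v P^D$ is then exactly the event $c^\ast\succ_v P^D$, whose probability is given by \Cref{lem:compare-c-PD}. Hence $\Pr[A\succ_v P^D]=\Pr[c^\ast_v\succ_v P^D]\le\sum_{c\in A}\Pr[c\succ_v P^D]$ pointwise in $v$, since every term is nonnegative. Averaging over $v$ and applying stability finishes the first part.

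For the minimax identity, the upper bound is immediate: take $P=\RSL{D}$, which exists by \Cref{thm:comparison-existence}, and apply the first part, which gives ratio at most $\mu$ for every $A$. For the lower bound, I would take singleton committees, so that $\max_A$ is at least $\max_c\Pr_v[c\succ_v P^D]$. This is at least the $P$-weighted average $F(P,P)=\mu$ by the exchangeability identity in the proof of \Cref{thm:comparison-existence}. So every $P$ has value at least $\mu$, and equality holds at $\RSL{D}$. The minimum is attained because $\RSL{D}$ achieves it, so no compactness argument is needed.

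The only delicate step is the coupling of labels in the first part. I expect it to be resolved by the per-voter identification of the favorite member $c^\ast_v$, which avoids any joint-label subtlety.
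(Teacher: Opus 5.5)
Your proposal is correct and takes essentially the same approach as the paper: a union bound over the members of $A$ combined with stability for the first inequality, and singleton committees together with the self-play (exchangeability) identity for the matching lower bound, with $\RSL{D}$ attaining the minimum. Your per-voter observation that $\Pr[A\succ_v P^D]=\Pr[c^\ast_v\succ_v P^D]$ for $v$'s favorite member $c^\ast_v\in A$ is a cleaner justification of the union-bound step than the paper's one-line statement, and it is valid.
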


\begin{proof}
The event that $A$ defeats $(\RSL{D})^D$ is the union, over $c\in A$, of the event that the labeled copy of $c$ defeats $(\RSL{D})^D$. The union bound and \Cref{eq:comparison-system} give the first inequality. For any lottery $P$, the exchangeability identity in the proof of \Cref{thm:comparison-existence} gives some candidate $c$ with $\Pr\nolimits_v[c\succ_v P^D]\geq\E[1/(D+1)]$, which proves the matching lower bound for the normalized maximum.
\end{proof}

Fix a committee $A$, expose the voter and the labeled copies in $A$, and let $T\in[0,1]$ be the probability that one draw from $\RSL{D}$ lies below the voter's favorite labeled member of $A$. Then $\Pr\nolimits_v[A\succ_v(\RSL{D})^r]=\E[T^r]$ and, by the conditioning calculation in \Cref{lem:compare-c-PD}, $\Pr\nolimits_v[A\succ_v(\RSL{D})^D]=\E[\phi(T)]$.

Because $D$ is positive-integer-valued, $\phi$ is continuous and strictly increasing from $0$ to $1$ on $[0,1]$, so its inverse $\phi^{-1}:[0,1]\to[0,1]$ is well-defined.

\begin{lemma}\label{lem:higher-moments}
Let $s_A=\min\set{1,|A|\E[1/(D+1)]}$. For an integer $r\geq1$, define $h_{r,\phi}(u)=(\phi^{-1}(u))^r$, and let $\cav h_{r,\phi}$ be the smallest concave function on $[0,1]$ lying above $h_{r,\phi}$. Then
\[
\Pr\nolimits_v[A\succ_v(\RSL{D})^r]\leq(\cav h_{r,\phi})(s_A).
\]
If $D=d$ deterministically, this bound is $s_A^{r/d}$ for $r\leq d$ and $s_A$ for $r\geq d$.
\end{lemma}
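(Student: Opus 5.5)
The plan is to work with the random variable $T\in[0,1]$ introduced just before the statement: the probability that a single draw from $\RSL{D}$ lies below the voter's favorite labeled member of $A$. Its randomness comes from the voter and from the labels. Two facts are already recorded there:
\[
\Pr\nolimits_v[A\succ_v(\RSL{D})^r]=\E[T^r],
\qquad
\Pr\nolimits_v[A\succ_v(\RSL{D})^D]=\E[\phi(T)].
\]
\Cref{prop:set-attack} bounds the second quantity by $s_A$. So I would set $U=\phi(T)\in[0,1]$, which gives $\E[U]\leq s_A$. Since $\phi$ is a strictly increasing bijection of $[0,1]$, we have $T=\phi^{-1}(U)$ and hence $T^r=h_{r,\phi}(U)$.

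Next, I bound pointwise $h_{r,\phi}(U)\leq(\cav h_{r,\phi})(U)$ and apply Jensen's inequality to the concave function $\cav h_{r,\phi}$. This gives
\[
\E[T^r]\leq(\cav h_{r,\phi})(\E[U]).
\]
To replace $\E[U]$ by $s_A$, I need $\cav h_{r,\phi}$ to be nondecreasing. This holds because $h_{r,\phi}$ is nondecreasing with maximum value $h_{r,\phi}(1)=1$, attained at the right endpoint. The concave envelope of such a function is also nondecreasing. To see this, note that $\min\{\cav h(u),\,\cav h(1)\}$ together with monotone rearrangement stays concave and above $h$. More directly, a concave function on $[0,1]$ that decreased somewhere would have to be strictly below its value at that point at $u=1$, contradicting $\cav h(1)\geq h(1)=1\geq$ every value of the envelope. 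Here the envelope is at most $1$ because the constant $1$ is a concave majorant. Monotonicity then gives the bound $(\cav h_{r,\phi})(s_A)$. This monotonicity argument is the only mildly delicate step, and I would write it using the fact that the envelope is bounded by $1$ and equals $1$ at $u=1$.

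For the deterministic case $D=d$, we have $\phi(t)=t^d$, so $h(u)=u^{r/d}$. If $r\leq d$, this function is already concave, so the envelope is itself and the bound is $s_A^{r/d}$. If $r\geq d$, it is convex with $h(0)=0$ and $h(1)=1$, so its concave envelope is the chord $u$, and the bound is $s_A$.
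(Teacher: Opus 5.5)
Your proposal is correct and follows the paper's proof essentially step for step. You set $U=\phi(T)$ with $\E[U]\leq s_A$ from \Cref{prop:set-attack}, dominate $h_{r,\phi}$ by $\cav h_{r,\phi}$, apply Jensen, and use that $\cav h_{r,\phi}$ is nondecreasing because it is bounded by $1$ and equals $1$ at $u=1$; you also handle the deterministic case the same way. Your direct slope argument for monotonicity is the right one, and the aside about monotone rearrangement is unnecessary and can be dropped.
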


\begin{proof}
By \Cref{prop:set-attack}, $\E[\phi(T)]\leq s_A$. The function $\cav h_{r,\phi}$ is nondecreasing, since it is at most the constant concave upper bound $1$ and equals $1$ at the right endpoint. Because $T^r=h_{r,\phi}(\phi(T))$, the definition of the concave upper bound and Jensen's inequality give
\[
\E[T^r]\leq\E[(\cav h_{r,\phi})(\phi(T))]
\leq(\cav h_{r,\phi})(\E[\phi(T)])
\leq(\cav h_{r,\phi})(s_A).
\]
For $\phi(t)=t^d$, the function $u^{r/d}$ is concave when $r\leq d$; when $r\geq d$, its smallest concave upper bound on $[0,1]$ is the line $u$.
\end{proof}

Let $B$ be a lottery, let $R$ be an independent positive integer, and write $\psi(t)=\E[t^R]$.

\begin{theorem}\label{thm:two-generating-functions}
Every random-size stable lottery satisfies
\[
\Pr\nolimits_v[B^R\succ_v(\RSL{D})^D]
\leq1-\psi\left(1-\E\left[\frac1{D+1}\right]\right).
\]
For deterministic $R=r$, the bound becomes $1-(1-\E[1/(D+1)])^r$.
\end{theorem}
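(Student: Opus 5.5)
We reduce the statement to the single-candidate stability guarantee by conditioning on the $R$ challenger draws. Fix a voter $v$ and the realized value $R=r$, and draw $B^r=\{b_1,\dots,b_r\}$ with independent labels. The event $B^r\succ_v(\RSL{D})^D$ occurs exactly when the favorite labeled member of $B^r$ beats all $D$ draws. Under the random-label scheme this event is the union over $i\in[r]$ of the events $E_i$ that the labeled copy $b_i$ beats all $D$ draws. So the complement is the event that no $b_i$ beats $(\RSL{D})^D$.

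\textbf{Step 1: reduce to a one-dimensional statistic.} As in the discussion preceding \Cref{lem:higher-moments}, expose the voter and the $D$ labeled draws from $P=\RSL{D}$. Let $U\in[0,1]$ be the conditional probability that a single independent labeled draw from $B$ beats all $D$ draws. Then, conditionally, the events $E_1,\dots,E_r$ are i.i.d.\ with probability $U$, so
\[
\Pr\nolimits_v[B^R\succ_v P^D]=1-\E\bigl[(1-U)^R\bigr]=1-\E[\psi(1-U)],
\]
where the outer expectation is over the voter, $D$, and the draws.

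\textbf{Step 2: bound the mean of $U$.} We have $\E[U]=\Pr\nolimits_v[B\succ_v P^D]$. By the remark after \Cref{def:random-size-stable}, which extends stability from fixed candidates to challenger lotteries by linearity, this is at most $\mu=\E[1/(D+1)]$.

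\textbf{Step 3: Jensen.} The function $\psi$ is a probability generating function, hence convex on $[0,1]$. So $u\mapsto\psi(1-u)$ is convex, and Jensen's inequality gives
\[
\E[\psi(1-U)]\ge\psi(1-\E[U])\ge\psi(1-\mu),
\]
where the last step uses that $\psi$ is nondecreasing. Substituting into Step 1 yields the bound. Deterministic $R=r$ is the special case $\psi(t)=t^r$.

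\textbf{Main obstacle: the conditional independence in Step 1.} Once the opposing multiset is fixed together with its labels, each challenger copy's win event depends only on that copy's identity and label. The delicate case is a challenger equal to the top candidate among the $D$ draws, where $b_i$ wins iff its label exceeds the labels of those copies. With the opponent labels fixed, this event still depends only on $b_i$'s own independent label. Duplicates within $B^r$ itself do not matter, because the union event only asks whether some copy beats the opponents' top labeled copy. So the events are indeed conditionally i.i.d., and no other step needs a new idea.
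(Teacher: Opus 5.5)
Your proof is correct, and it reaches the same bound by conditioning on the opposite side from the paper.

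The paper fixes a labeled challenger outcome $\omega$ drawn from $B$. It uses that the favorite of $R$ labeled draws from $B$ has density $\psi'(Y)$ relative to one draw, where $Y=F_{B,v}(\omega)$ is uniform. This gives the exact representation $\E[\phi(U)\psi'(Y)]$ with $U=F_{\RSL{D},v}(\omega)$. It then bounds this by a bathtub argument over all $Z=\phi(U)\in[0,1]$ with $\E[Z]\leq\mu$, obtaining $\int_{1-\mu}^1\psi'(t)\diff t=1-\psi(1-\mu)$.

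You instead expose the voter, $D$, and the labeled defending draws. You observe that the $R$ labeled challenger copies then win independently with a common probability $U$, and your handling of ties with the defenders' top labeled copy is right. This gives the exact identity $1-\E[\psi(1-U)]$, and you finish with stability against the single lottery $B$ and Jensen's inequality for the convex $\psi$.

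Both arguments use the same inputs: stability against one $B$-draw and convexity of $\psi$. They are dual to each other. The paper optimizes linearly over the challenger-side win profile, while you push the expectation through a convex function on the defender side.

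Your version is shorter and avoids both the density-of-the-maximum computation and the interpolated cumulative-distribution formalism. It also makes the equality case transparent: when $\Pr[R\geq2]>0$, $\psi$ is strictly convex and strictly increasing, so equality forces $U=\mu$ almost surely. The paper's version stays within the cumulative-distribution calculus used for \Cref{lem:reverse-identity}. It also displays the extremal configuration from the challenger's side: copies in the top $\mu$-fraction of $B$ always win and the rest never do.
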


Finally, random-size stable lotteries can be approximated in support, although the resulting additive error is too coarse for the main distortion proof. Let $\bar d=\E[D]=\phi'(1)$.

\begin{proposition}\label{prop:sampling-rsl}
For every integer $s\geq1$, there is a lottery $\widehat P$ supported on at most $s$ candidates such that
\[
\Pr\nolimits_v[c\succ_v\widehat P^D]
\leq\E\left[\frac1{D+1}\right]+\bar d\sqrt{\frac\pi{2s}}
\qquad\text{for every }c\in C.
\]
\end{proposition}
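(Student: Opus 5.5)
Let $P=\RSL{D}$ and $\mu=\E[1/(D+1)]$. I would prove the proposition by the probabilistic method: sample $s$ i.i.d.\ candidates $X_1,\dots,X_s$ from $P$ and let $\widehat P$ be their empirical distribution. This is supported on at most $s$ candidates. The goal is to show that, in expectation over the sample, the quantity $\max_c\big(\Pr_v[c\succ_v\widehat P^D]-\Pr_v[c\succ_v P^D]\big)$ is at most $\bar d\sqrt{\pi/(2s)}$. Some realization then achieves the bound, and stability of $P$ supplies the $\mu$ term.

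\textbf{Step 1: reduce to single-draw win probabilities.} Fix a voter $v$ and a candidate $c$, and condition on $c$'s label $\ell$. Let $q(Q)$ be the probability that one labeled draw from a lottery $Q$ lies below $c$'s labeled copy. By \Cref{lem:compare-c-PD}, this is $b+\ell z$ computed with respect to $Q$. Then $\Pr[c\succ_v Q^D]=\E_\ell[\phi(q(Q))]$. Since $\phi$ is convex and nondecreasing on $[0,1]$, $\phi'\le\phi'(1)=\bar d$, so $\phi$ is $\bar d$-Lipschitz. Therefore
\[
\Pr[c\succ_v\widehat P^D]-\Pr[c\succ_v P^D]\le\bar d\,\E_\ell\bigl|q(\widehat P)-q(P)\bigr|.
\]
Averaging over $v$ keeps the factor $\bar d$.

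\textbf{Step 2: the concentration estimate.} For fixed $v$, $c$, and $\ell$, $q(\widehat P)=\frac1s\sum_i Y_i$, where the $Y_i\in[0,1]$ are i.i.d.\ with mean $q(P)$. (Each $Y_i$ is $\mathbf 1[X_i\prec_v c]$ plus, for $X_i=c$, the label comparison; averaging over the labels gives a value in $[0,1]$.) One natural approach is to bound $\E|\bar Y-\E\bar Y|$ by $\sqrt{\operatorname{Var}}\le 1/(2\sqrt s)$, which gives the clean constant $\bar d/(2\sqrt s)$. The stated constant $\sqrt{\pi/(2s)}$ is instead what one gets from Hoeffding's tail bound integrated: $\E|\bar Y-\E\bar Y|\le\int_0^\infty 2e^{-2s\varepsilon^2}\diff\varepsilon=\sqrt{\pi/(2s)}$. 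I would adopt that route to match the statement.

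\textbf{Step 3: handle the maximum over $c$ — the main obstacle.} Steps 1 and 2 bound the expected error only for each fixed $c$; the maximum over $m$ candidates cannot simply be pulled inside the expectation. I would first try a second probabilistic step: condition on the sample and use a fractional or minimax argument. Concretely, apply von Neumann's minimax theorem to the game $\max_c\min$ over sampling schemes, exactly as in \Cref{thm:comparison-existence}. This exchanges the order, so that it suffices to show that for every challenger mixture $A$ over $c$, some $s$-sample works in expectation. Because the error is linear in $A$, this reduces to the fixed-$c$ bound above. Convexity of the loss from \Cref{lem:compare-c-PD} handles mixing over empirical lotteries; the sampled lotteries are then combined as in Althöfer/Lipton–Young-style sparsification.

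If that exchange fails, I would fall back on a union bound with Hoeffding's inequality. This gives an error of order $\bar d\sqrt{\log m/s}$, which is weaker than the stated bound, so I would first carefully check whether the minimax exchange is valid.
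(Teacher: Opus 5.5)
\textbf{Gap: Step 3 fails.} Your Steps 1 and 2 are sound, but Step 3 does not go through, and that is where the real difficulty lies.

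The set of lotteries supported on at most $s$ candidates is not convex. So von Neumann's theorem applies only if the minimizer plays a distribution $\rho$ over sparse lotteries. What you then get is
\[
\max_c\E_{\widehat P\sim\rho}\bigl[\Pr_v[c\succ_v\widehat P^D]\bigr]\le\E[1/(D+1)]+\bar d\sqrt{\pi/(2s)},
\]
with the maximum \emph{outside} the expectation over $\widehat P$. That is a guarantee for a randomized sparse lottery, not for a single one.

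Collapsing $\rho$ to its mean via the convexity in \Cref{lem:compare-c-PD} keeps the inequality. However, it produces a lottery whose support is the union of all the supports. Sparsifying that mixture in the Lipton--Young manner is exactly a union bound over candidates, which returns the $\sqrt{\log m/s}$ rate of your fallback. So neither branch yields the $m$-independent bound in the statement.

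\textbf{Missing idea: one quantity controls every candidate for a fixed voter.} Place the candidates in voter $v$'s order. Let $F_v$ and $\widehat F_v$ be the cumulative distribution functions of $\RSL{D}$ and $\widehat P$, linearly interpolated across each candidate's interval; this interpolation is exactly the label convention.
\begin{itemize}
\item Your $q(\RSL{D})$ and $q(\widehat P)$ are $F_v$ and $\widehat F_v$ evaluated at the same point. So Step 1 bounds voter $v$'s error $e_{v,c}=\Pr[c\succ_v\widehat P^D]-\Pr[c\succ_v(\RSL{D})^D]$ by $\bar d\sup_t|\widehat F_v(t)-F_v(t)|$ for every $c$ at once.
\item Both functions are linear on each candidate's interval, so this supremum is attained at candidate boundaries. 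There, $\widehat F_v$ is the ordinary empirical distribution function of the $s$ sampled ranks.
\item The Dvoretzky--Kiefer--Wolfowitz inequality with Massart's constant gives $\Pr[\sup_t|\widehat F_v(t)-F_v(t)|>\varepsilon]\le 2e^{-2s\varepsilon^2}$. This is the same tail you integrated in Step 2, but uniform over all thresholds. Hence $\E[\sup_t|\widehat F_v(t)-F_v(t)|]\le\sqrt{\pi/(2s)}$.
\item Move the maximum inside the voter average:
\[
\max_c\E_v[e_{v,c}]\le\E_v\bigl[\max_c e_{v,c}\bigr]\le\bar d\,\E_v\bigl[\sup_t|\widehat F_v(t)-F_v(t)|\bigr].
\]
\item Take the expectation over the common sample by linearity and fix a realization achieving at most the mean. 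Stability of $\RSL{D}$ then supplies the $\E[1/(D+1)]$ term.
\end{itemize}
This is the paper's argument. It is why the constant is $\sqrt{\pi/(2s)}$ with no dependence on $m$.
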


We now prove the last two statements and then record an extension to signed coefficients. The signed extension is mathematically useful but does not describe a random value of $D$ unless the coefficients are nonnegative and sum to one.

\subsection{A calculation with cumulative distributions}

Draw a uniformly random voter $v$ and order labeled candidates from worst to best according to $v$. For lotteries $A$ and $P$, let $F_{A,v}$ and $F_{P,v}$ be their cumulative distribution functions in this order. Draw a labeled outcome $\omega$ from $A$ and put
\[
U=F_{P,v}(\omega),\qquad Y=F_{A,v}(\omega).
\]
Conditional on $v$, $Y$ is uniform on $[0,1]$. For every integer $d\geq1$,
\begin{equation}\label{eq:appendix-cdf-powers}
\Pr\nolimits_v[A\succ_v P^d]=\E[U^d],
\qquad
\Pr\nolimits_v[A^d\succ_v P]=d\E[UY^{d-1}].
\end{equation}
The first identity conditions on the draw from $A$. For the second, the best of $d$ independent $A$-draws has density $dF_{A,v}^{d-1}\diff F_{A,v}$. Summing \Cref{eq:appendix-cdf-powers} over the distribution of $D$, as in the conditioning calculation of \Cref{lem:compare-c-PD}, gives the formulas used in \Cref{lem:reverse-identity}:
\[
\Pr\nolimits_v[A\succ_v P^D]=\E[\phi(U)],
\qquad
\Pr\nolimits_v[A^D\succ_v P]=\E[U\phi'(Y)].
\]

\begin{lemma}\label{lem:reverse-identity}
For finitely supported $D$, define
\[
\mathcal D_\phi(u,v)=\phi(u)-\phi(v)-\phi'(v)(u-v).
\]
Then
\begin{equation*}
\Pr\nolimits_v[A^D\succ_v P]
=\Pr\nolimits_v[A\succ_v P^D]+1-2\E\left[\frac1{D+1}\right]-\E[\mathcal D_\phi(U,Y)].
\end{equation*}
Consequently, $\Pr\nolimits_v[A\succ_v P^D]\leq\E[1/(D+1)]$ implies $\Pr\nolimits_v[A^D\succ_v P]\leq1-\E[1/(D+1)]$. The probability inequality extends to arbitrary $D$ by truncation.
\end{lemma}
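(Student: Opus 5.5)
The plan is to derive the identity directly from the two formulas just established, $\Pr_v[A\succ_v P^D]=\E[\phi(U)]$ and $\Pr_v[A^D\succ_v P]=\E[U\phi'(Y)]$. Expanding the Bregman-type term,
\[
\E[\mathcal D_\phi(U,Y)]=\E[\phi(U)]-\E[\phi(Y)]-\E[U\phi'(Y)]+\E[Y\phi'(Y)].
\]
Substituting this into the right-hand side of the claimed identity, the terms $\E[\phi(U)]$ cancel, and the term $\E[U\phi'(Y)]$ appears with a plus sign. It therefore suffices to show
\[
1-2\mu+\E[\phi(Y)]-\E[Y\phi'(Y)]=0,\qquad \mu=\E\left[\frac1{D+1}\right].
\]
This uses only that $Y$ is uniform on $[0,1]$ conditional on $v$. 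We have $\E[\phi(Y)]=\int_0^1\phi=\mu$ by \Cref{def:probability-generating-function}. Integration by parts gives $\E[Y\phi'(Y)]=\int_0^1y\phi'(y)\diff y=\phi(1)-\int_0^1\phi=1-\mu$. The sum is then $1-2\mu+\mu-(1-\mu)=0$. For finitely supported $D$, $\phi$ is a polynomial, so every expectation is finite and all manipulations are legitimate.

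For the consequence, I will use the convexity of $\phi$ on $[0,1]$. It gives $\mathcal D_\phi(u,y)\ge0$ for all $u,y\in[0,1]$, so $\E[\mathcal D_\phi(U,Y)]\ge0$. Dropping this term in the identity and inserting the hypothesis $\Pr_v[A\succ_v P^D]\le\mu$ yields $\Pr_v[A^D\succ_v P]\le\mu+1-2\mu=1-\mu$.

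The only delicate step is extending to arbitrary $D$, because $\phi'(1)=\E[D]$ may be infinite. I also cannot simply replace $D$ by a truncated random variable, since the hypothesis is stated only for $D$ itself. Instead I will run the same computation with the partial sums $\phi_N(t)=\sum_{d\le N}\Pr[D=d]t^d$. These are convex polynomials, but with total mass $\phi_N(1)\le1$, so the identity must be redone with general constants. Write $\mu_N=\int_0^1\phi_N$. The formulas $\E[\phi_N(U)]=\sum_{d\le N}\Pr[D=d]\Pr_v[A\succ_v P^d]$ and $\E[U\phi_N'(Y)]=\sum_{d\le N}\Pr[D=d]\Pr_v[A^d\succ_v P]$ follow termwise from \Cref{eq:appendix-cdf-powers}. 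The same integration by parts then gives
\[
\E[U\phi_N'(Y)]=\E[\phi_N(U)]+\phi_N(1)-2\mu_N-\E[\mathcal D_{\phi_N}(U,Y)]\le \mu+\phi_N(1)-2\mu_N .
\]
Here I used $\E[\phi_N(U)]\le\E[\phi(U)]\le\mu$ and $\mathcal D_{\phi_N}\ge0$. As $N\to\infty$, monotone convergence sends the left side to $\Pr_v[A^D\succ_v P]$, while $\phi_N(1)\to1$ and $\mu_N\to\mu$. This gives the bound $1-\mu$. Keeping the constants $\phi_N(1)$ and $\mu_N$ explicit in this signed/sub-probability version is the main point needing care; everything else is routine.
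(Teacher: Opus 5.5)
Your proof is correct and follows the paper's route: expand $\mathcal D_\phi$, use the uniformity of $Y$ to get $\E[Y\phi'(Y)-\phi(Y)]=1-2\E[1/(D+1)]$, and drop the nonnegative divergence term by convexity. The paper only asserts the truncation step, and your partial-sum argument with $\phi_N$ is a valid way to fill it in: it keeps $\phi_N(1)$ and $\mu_N$ explicit and invokes the hypothesis only for $D$ itself.
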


\begin{proof}
Convexity gives $\mathcal D_\phi\geq0$, and rearranging its definition gives
\[
U\phi'(Y)=\phi(U)-\mathcal D_\phi(U,Y)+Y\phi'(Y)-\phi(Y).
\]
Because $Y$ is uniform conditional on every voter,
\[
\E[Y\phi'(Y)-\phi(Y)]
=\int_0^1(t\phi'(t)-\phi(t))\diff t
=1-2\int_0^1\phi(t)\diff t
=1-2\E\left[\frac1{D+1}\right].
\]
Substitution into the two preceding cumulative-distribution formulas proves the identity, and discarding the nonnegative term proves the consequence. Equality holds when $A=P$ by exchangeability.
\end{proof}

\subsection{Several draws from another lottery}

For this proof, abbreviate $\mu=\E[1/(D+1)]$. Draw a labeled $\omega$ from $B$. The best of $R$ draws from $B$ has density $\psi'(F_{B,v}(\omega))$ with respect to a draw from $B$. Conditional on this best draw, the calculation in \Cref{lem:compare-c-PD} shows that all $D$ draws from $\RSL{D}$ lie below it with probability $\phi(F_{\RSL{D},v}(\omega))$. With $U=F_{\RSL{D},v}(\omega)$ and $Y=F_{B,v}(\omega)$, therefore,
\begin{equation}\label{eq:two-pgf-integral}
\Pr\nolimits_v[B^R\succ_v(\RSL{D})^D]=\E[\phi(U)\psi'(Y)].
\end{equation}

By linearity of \Cref{eq:comparison-system} in the candidate drawn from $B$,
\[
\E[\phi(U)]=\Pr\nolimits_v[B\succ_v(\RSL{D})^D]\leq \mu.
\]
Also $0\leq\phi(U)\leq1$, and $Y$ is uniform on $[0,1]$. The function $\psi'$ is nondecreasing. We claim that for every random pair $(Z,Y)$ with $0\leq Z\leq1$, $\E Z\leq \mu$, and uniform marginal $Y$,
\begin{equation}\label{eq:bathtub-bound}
\E[Z\psi'(Y)]\leq\int_{1-\mu}^1\psi'(t)\diff t.
\end{equation}
To prove this directly, put $\vartheta=\psi'(1-\mu)$. Pointwise,
\[
Z(\psi'(Y)-\vartheta)\leq(\psi'(Y)-\vartheta)_+.
\]
Taking expectations and using $\E Z\leq \mu$ gives
\[
\E[Z\psi'(Y)]
\leq\vartheta \mu+\int_0^1(\psi'(t)-\vartheta)_+\diff t
=\int_{1-\mu}^1\psi'(t)\diff t,
\]
where a flat portion at the threshold does not affect the equality. Applying \Cref{eq:bathtub-bound} to $Z=\phi(U)$ in \Cref{eq:two-pgf-integral} yields
\[
\Pr\nolimits_v[B^R\succ_v(\RSL{D})^D]
\leq\int_{1-\mu}^1\psi'(t)\diff t
=1-\psi(1-\mu).
\]
For deterministic $R=r$, substitute $\psi(t)=t^r$.

\subsection{Approximating by a lottery with small support}

Draw $s$ candidates independently from $\RSL{D}$, and let $\widehat P$ be their empirical distribution. Fix a voter. Place the candidates in this voter's order and linearly interpolate across the interval corresponding to each candidate; this interpolation is exactly the proportional label convention. Let $F$ and $\widehat F$ be the resulting cumulative distribution functions for $\RSL{D}$ and $\widehat P$. The sharp Dvoretzky--Kiefer--Wolfowitz inequality \citep{Mas90} gives
\[
\Pr\left[\sup_t|\widehat F(t)-F(t)|>\varepsilon\right]
\leq2e^{-2s\varepsilon^2}.
\]
Integrating the tail bound gives
\begin{equation}\label{eq:expected-DKW}
\E\left[\sup_t|\widehat F(t)-F(t)|\right]
\leq\int_0^\infty2e^{-2s\varepsilon^2}\diff\varepsilon
=\sqrt{\frac{\pi}{2s}}.
\end{equation}
The same sampled candidates are used for every voter; \Cref{eq:expected-DKW} is applied separately and then averaged over voters.

Because $\phi'(t)\leq\phi'(1)=\bar d$ on $[0,1]$, the function $\phi$ is $\bar d$-Lipschitz. By \Cref{lem:compare-c-PD}, for any pure candidate $c$, the probability that $c$ defeats the empirical committee differs from the corresponding probability under $\RSL{D}$ by at most $\bar d\sup_t|\widehat F(t)-F(t)|$ for that voter. Therefore
\[
\E_{\widehat P}\left[
\max_{c\in C}\bigl(\Pr\nolimits_v[c\succ_v\widehat P^D]-\Pr\nolimits_v[c\succ_v(\RSL{D})^D]\bigr)
\right]
\leq\bar d\sqrt{\frac{\pi}{2s}}.
\]
The maximum causes no factor depending on $|C|$: for each voter, the same uniform error in the cumulative distributions controls every candidate position. Since $\Pr\nolimits_v[c\succ_v(\RSL{D})^D]\leq\E[1/(D+1)]$ for every $c$, some realization of the sample satisfies
\[
\Pr\nolimits_v[c\succ_v\widehat P^D]\leq\E\left[\frac1{D+1}\right]+\bar d\sqrt{\frac{\pi}{2s}}
\]
simultaneously for all candidates. Its empirical distribution has support at most $s$.

\subsection{An extension to signed coefficients}

\begin{theorem}\label{thm:signed-comparison}
Let $(a_d)_{d\geq1}$ be an absolutely summable sequence of real numbers, define
\[
f(t)=\sum_{d\geq1}a_dt^d,
\]
and let $p_f=\int_0^1f(t)\diff t$. Every finite profile admits a lottery $P$ such that
\[
\sum_{d\geq1}a_d\Pr\nolimits_v[c\succ_v P^d]\leq\int_0^1 f(t)\diff t
\]
for every candidate $c$. This right-hand side is best possible: for every lottery $P$, the maximum of the left-hand side over candidates is at least $\int_0^1f(t)\diff t$.
\end{theorem}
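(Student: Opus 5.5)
The plan is to mimic the minimax proof of \Cref{thm:comparison-existence}, replacing $\phi$ by $f$. The obstacle is that $f$ may fail to be convex, so $P\mapsto\sum_d a_d\Pr_v[c\succ_v P^d]$ need not be convex in $P$ and von Neumann's theorem does not apply directly. Instead I would work with the symmetric game
\[
F(A,P)=\sum_{d\geq1}a_d\Pr\nolimits_v[A\succ_v P^d],
\]
which is linear in $A$ and continuous in $P$. By \Cref{lem:compare-c-PD} applied termwise, $F(c,P)=\E_v\int_0^1 f(b_{v,c}+s P(c))\diff s$. Absolute summability gives uniform convergence of the series on $[0,1]$, so dominated convergence yields continuity in $P$ on the simplex. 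The exchangeability identity $\Pr_v[P\succ_v P^d]=1/(d+1)$ holds for every lottery, so $F(P,P)=\sum_d a_d/(d+1)=p_f$ for all $P$.

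The existence claim then becomes a fixed-point statement: find $P$ with $\max_c F(c,P)\leq F(P,P)$. I would obtain it by a Kakutani/Nash-type argument for the continuous map $P\mapsto F(\cdot,P)$ on the simplex $\Delta(C)$. Concretely, I would use the Nash improvement map
\[
T(P)(c)=\frac{P(c)+(F(c,P)-F(P,P))_+}{1+\sum_{b}(F(b,P)-F(P,P))_+}.
\]
This map is continuous, so Brouwer's theorem gives a fixed point $P$. The standard argument then applies. Since $\sum_c P(c)\bigl(F(c,P)-F(P,P)\bigr)=0$, some supported $c$ has $F(c,P)\leq F(P,P)$, and hence its positive part is zero. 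At a fixed point this forces the normalizing sum to vanish, that is, $F(c,P)\leq F(P,P)=p_f$ for every $c$. This is exactly the claimed inequality.

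Equivalently, one could apply Ky Fan's minimax inequality to $g(A,P)=F(A,P)-F(P,P)$, which is linear in $A$ and continuous. I consider the Brouwer route cleaner, and this step is the main one to check carefully. In particular, continuity of $F(c,P)$ must hold even when $P(c)\to0$, where the integral form $\int_0^1 f(b+sz)\diff s$ is visibly continuous, unlike the quotient form.

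For optimality, fix any lottery $P$. Averaging gives $\sum_c P(c)F(c,P)=F(P,P)=p_f$, so some candidate $c$ in the support has $F(c,P)\geq p_f$. Hence the maximum over candidates is at least $\int_0^1 f(t)\diff t$.
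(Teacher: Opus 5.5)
Your proof is correct and follows essentially the paper's route: both replace the minimax theorem, which is unavailable without convexity, by a fixed-point argument. Both arguments rest on continuity of the scores (from absolute summability) and the self-play identity $\sum_c P(c)F(c,P)=p_f$, and your optimality half is identical to the paper's.

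The only difference is cosmetic. You apply Brouwer to Nash's improvement map, whereas the paper applies Kakutani to the correspondence sending $P$ to the lotteries supported on the maximizers of $F(\cdot,P)$.
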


\begin{proof}
Within this proof, write $G_c(P)$ for the left-hand side. Since $p_f=\sum_{d\geq1}a_d/(d+1)$, absolute summability gives uniform convergence of the defining series on $[0,1]$, hence continuity of every $G_c$. The exchangeability argument in the proof of \Cref{thm:comparison-existence} is linear in the coefficients and gives
\[
\sum_cP(c)G_c(P)=p_f.
\]
For a lottery $P$, let $M(P)=\operatorname*{arg\,max}_{c\in C}G_c(P)$, and let $B(P)$ be the set of lotteries supported on $M(P)$. Each $B(P)$ is nonempty, convex, and compact. Continuity of the functions $G_c$ implies that the correspondence $B$ has a closed graph. Kakutani's fixed-point theorem therefore gives a lottery $P$ with $P\in B(P)$. Every candidate in the support of this lottery maximizes $G_c(P)$, so the displayed identity gives
\[
\max_cG_c(P)=\sum_cP(c)G_c(P)=p_f.
\]
For every lottery, the maximum of the numbers $G_c(P)$ is at least their $P$-weighted average $p_f$, which proves the converse.
\end{proof}

Signed coefficients do not define probabilities and need not make the weighted comparison function convex in $P$, so the minimax proof of \Cref{thm:comparison-existence} does not apply. They also do not imply the union bound or the consequence of \Cref{lem:reverse-identity} used in \Cref{prop:set-attack}. We use nonnegative coefficients summing to one everywhere in the distortion results.

\section{Proof of the Lower Bound}\label{app:integer-lower-proof}

This section proves \Cref{thm:intro-integer-lower}. An \emph{idealized profile} assigns arbitrary real-valued probabilities to a finite collection of rankings. We first show that a nondegenerate draw distribution admits at most one random-size stable lottery on every profile. We then give two idealized profile constructions whose distortion can be read exactly from the offset characterization, transfer their strict inequalities to ordinary finite electorates, and finish with an exact rational calculation showing that one of three profiles is hard, no matter how the number of draws is distributed.

Throughout this section, fix a random variable $D$ supported on the positive integers. Recall its probability-generating function $\phi$ from \Cref{def:probability-generating-function}, and its integral $\Phi$ and the benchmark $\mu=\Phi(1)=\E[1/(D+1)]$ from \Cref{def:potential-draw-functions}.

For a lottery $P$ and a candidate $c$, let $b_{v,c}(P)$ be the $P$-mass that voter $v$ ranks strictly below $c$. If $U$ is uniform on $[0,1]$, then \Cref{lem:compare-c-PD} gives the comparison score of $c$ against $P$ as
\[
L_c(P)=\E_{v,U}\bigl[\phi(b_{v,c}(P)+UP(c))\bigr]
=\Pr\nolimits_v[c\succ_v P^D].
\]
Thus $P$ is random-size stable if and only if $L_c(P)\leq\mu$ for every candidate $c$. By the self-play identity in the proof of \Cref{thm:comparison-existence}, $\sum_cP(c)L_c(P)=\mu$ for every $P$, and every candidate in the support of a stable lottery has score exactly $\mu$.

\subsection{Uniqueness and two hard-profile constructions}

When $D$ is not deterministically one, the function $\phi$ is strictly convex. This removes the otherwise problematic freedom to select among several stable lotteries.

\begin{lemma}[Uniqueness]\label{lem:integer-lower-unique}
If $\Pr[D\geq2]>0$, every probability distribution over the rankings of a finite candidate set, and in particular every finite profile, admits at most one random-size stable lottery for $D$.
\end{lemma}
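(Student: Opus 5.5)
The plan is to run the standard ``self-play plus convexity'' argument behind \Cref{thm:comparison-existence}, and use strict convexity of $\phi$ to turn the resulting equalities into $P=Q$. Throughout, write $F(A,X)=\Pr_v[A\succ_v X^D]$ for a challenger lottery $A$ and a lottery $X$. This works verbatim for any distribution over rankings. Recall that $F$ is linear in $A$ and convex in $X$ (\Cref{lem:compare-c-PD}), and that $F(X,X)=\mu$ for every $X$ by exchangeability.

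\textbf{Step 1: $F(M,\cdot)$ is affine on the segment.} Suppose $P$ and $Q$ are both random-size stable for $D$. Stability says $F(c,P)\le\mu$ and $F(c,Q)\le\mu$ for every $c$, hence by linearity $F(A,P)\le\mu$ and $F(A,Q)\le\mu$ for every challenger $A$. Let $P_t=(1-t)P+tQ$. By convexity in the second argument, for any $A$,
\[
F(A,P_t)\le(1-t)F(A,P)+tF(A,Q)\le\mu .
\]
On the other hand, the self-play identity gives $\mu=F(P_t,P_t)=(1-t)F(P,P_t)+tF(Q,P_t)$. For $t\in(0,1)$, both terms are at most $\mu$ and their convex combination equals $\mu$, so $F(P,P_t)=F(Q,P_t)=\mu$. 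The endpoints $t=0,1$ give the same value by the self-play identity and the equality-on-support part of \Cref{thm:comparison-existence}. Taking $M=(P+Q)/2$, the map $t\mapsto F(M,P_t)$ is therefore constant, in particular affine, on $[0,1]$.

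\textbf{Step 2: extract pointwise equalities from strict convexity.} Expand, via \Cref{lem:compare-c-PD},
\[
F(M,X)=\E_v\sum_c M(c)\int_0^1\phi\bigl(b_{v,c}(X)+s\,X(c)\bigr)\diff s .
\]
For fixed $v$, $c$ and $s$, the argument of $\phi$ is affine in $t$ along $X=P_t$. Each integrand $t\mapsto\phi(\cdot)$ is convex, and their $M$-weighted and voter-averaged sum is affine. So every integrand with positive weight, i.e.\ with $c\in\supp(P)\cup\supp(Q)$, is affine in $t$ for almost every $s$.

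Since $\Pr[D\ge2]>0$, $\phi$ is strictly convex, so $\phi\circ(\text{affine})$ is affine only if the affine argument is constant. This forces
\[
b_{v,c}(P)+sP(c)=b_{v,c}(Q)+sQ(c)
\]
for almost every $s$. Both sides are affine in $s$, so the slopes agree, $P(c)=Q(c)$, for every $c\in\supp(P)\cup\supp(Q)$. Candidates outside this union have mass zero under both lotteries, so $P=Q$. Only one voter with positive probability is needed, which exists for any distribution over rankings.

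\textbf{Main obstacle.} The delicate point is Step 2: passing from ``an average of convex functions is affine'' to ``each one is affine''. This requires that the weights $M(c)$ be positive exactly where we need the conclusion, which is why we use the midpoint $M$ rather than $P$ alone. It also requires handling the ``almost every $s$'' correctly, which is resolved by the affine dependence on $s$. The case $D=1$ is genuinely excluded, since then $\phi$ is linear and maximal lotteries can be non-unique on non-generic profiles.
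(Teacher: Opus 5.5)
Your argument is correct and is essentially the paper's proof: convexity of the comparison scores makes the midpoint stable (you use the whole segment $P_t$), and the self-play/support-tightness identity forces equality in the convexity inequality. Strict convexity of $\phi$ then gives $b_{v,c}(P)+sP(c)=b_{v,c}(Q)+sQ(c)$ for almost every $s$, so comparing slopes yields $P(c)=Q(c)$ on $\supp(P)\cup\supp(Q)$.

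The only blemish is your justification of the endpoint values. Equality on $\supp(P)$ alone does not give $F(Q,P)=\mu$ when $\supp(Q)\not\subseteq\supp(P)$. However, you do not need $t\in\{0,1\}$ at all, since constancy on $(0,1)$ suffices for the strict-convexity step, and continuity of $F$ in its second argument supplies the endpoints anyway.
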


\begin{proof}
Suppose $P$ and $Q$ are stable and put $R=(P+Q)/2$. For every candidate $c$, both $b_{v,c}(P)$ and $P(c)$ are affine in $P$, so convexity of $\phi$ gives $L_c(R)\leq(L_c(P)+L_c(Q))/2\leq\mu$. Hence $R$ is stable. If $c\in\supp(R)$, support tightness gives $L_c(R)=\mu$, so both inequalities are equalities. Strict convexity therefore implies
\[
b_{v,c}(P)+UP(c)=b_{v,c}(Q)+UQ(c)
\]
almost surely over $v$ and $U$. Since $U$ has a continuous distribution, the slopes agree and $P(c)=Q(c)$. This holds for every candidate in $\supp(P)\cup\supp(Q)=\supp(R)$, while both lotteries vanish elsewhere. Thus $P=Q$.
\end{proof}

The two constructions are most transparent through the following ratios. Their denominators are positive: for $p\in(0,1)$, strict monotonicity of $\phi$ gives $\Phi(p)/p=\int_0^1\phi(pt)\diff t<\int_0^1\phi(t)\diff t=\mu$.

\begin{definition}[Ratios realized by the two constructions]\label{def:integer-lower-ratios}
For $r\in[0,1]$, define
\[
R_\phi(r)=r+\frac{\mu-\phi(r)}{1-\mu}.
\]
For $0\leq a\leq p<1$, define
\[
S_\phi(a,p)=a+
\frac{p\bigl(\Phi(a)+\mu-\Phi(1+a-p)-p\mu\bigr)}{p\mu-\Phi(p)}.
\]
\end{definition}

\begin{lemma}[Boundary construction]\label{lem:integer-lower-boundary}
Assume $\Pr[D\geq2]>0$. Let $r\in(0,1)$ be rational, let $N$ be a positive integer such that $rN$ is integral, and suppose $\phi(r)<\mu$. There is an idealized distribution over rankings of $N+1$ candidates whose unique random-size stable lottery $P$ has offset value $1+2R_\phi(r)$.
\end{lemma}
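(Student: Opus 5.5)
The plan is to build an idealized profile with $N$ exchangeable ``ordinary'' candidates $c_1,\ldots,c_N$ and one special candidate $o$. We then check that the uniform lottery on $\set{c_1,\ldots,c_N}$ is random-size stable, invoke \Cref{lem:integer-lower-unique} for uniqueness, and evaluate the offset value at the offset vector $x_o=0$, $x_{c_i}=1$. This is the configuration in which the boundary case $p(t)=1$ of \Cref{lem:potential-local-correction} is tight, which is why $R_\phi$ coincides with $\mathcal G_\phi$ up to sign conventions.

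\emph{Construction.} Symmetrize over all $N!$ orderings of the $c_i$ (or over the $N$ cyclic rotations), each weighted equally. Independently, place $o$ in one of two positions.
\begin{itemize}
\item With probability $q$, voter type~1 ranks $o$ with exactly $rN$ ordinary candidates below it. This is possible because $rN$ is an integer.
\item With probability $1-q$, voter type~2 ranks $o$ at the top.
\end{itemize}
Choose $q=(1-\mu)/(1-\phi(r))$. Since $\phi(r)<\mu<1$, we get $q\in(0,1)$.

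\emph{Stability.} Let $P$ be uniform on the $c_i$, so $P(o)=0$. By \Cref{lem:compare-c-PD}, the score of $o$ is $L_o(P)=q\phi(r)+(1-q)\phi(1)$, which equals $\mu$ by the choice of $q$. By the symmetrization, all $c_i$ have the same score. The self-play identity $\sum_c P(c)L_c(P)=\mu$ then forces each $L_{c_i}(P)=\mu$. Hence $P$ is random-size stable, and \Cref{lem:integer-lower-unique} (which uses $\Pr[D\ge 2]>0$) shows it is the unique one.

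\emph{Offset value.} Take $x_o=0$ and $x_{c_i}=1$ for all $i$.
\begin{itemize}
\item For a candidate $c_i$, the quantity $s_{v,c_i}(x)$ is $1$ if $c_i$ is ranked below $o$ and $0$ otherwise. So $\loss_v(P,x)$ is the $P$-mass below $o$: it equals $r$ for type~1 and $1$ for type~2. Averaging gives $\loss(P,x)=qr+1-q$.
\item $\Delta_v(x)=1$ exactly when some $c_i$ is ranked above $o$, that is, for type~1 voters. For type~2 voters, every pair $i\succcurlyeq_v j$ has $x_i-x_j\le 0$, so $\Delta_v(x)=0$. Averaging gives $\E_v[\Delta_v(x)]=q$.
\end{itemize}
The offset value is therefore
\[
1+2\,\frac{qr+1-q}{q}=1+2\Bigl(r+\frac{1-q}{q}\Bigr),
\qquad\text{where}\qquad
\frac{1-q}{q}=\frac{\mu-\phi(r)}{1-\mu}.
\]
This equals $1+2R_\phi(r)$, as claimed.

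\emph{Main obstacle.} I expect the delicate point to be the stability and exchangeability step. The symmetrization must be set up so that every $c_i$ has an identical score even though $o$ sits at a fixed rank percentile. Both full symmetrization over orderings of the $c_i$ and cyclic rotations should work, because $o$'s position depends only on a count and not on which $c_i$ are below it.

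A second point to keep in mind: ties at $o$ are irrelevant because $P(o)=0$. Also, since the lemma only asks for the offset value of one specific $x$, no supremum over offset vectors needs to be computed.
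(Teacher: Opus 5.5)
Your proposal is correct and is essentially the paper's proof: the same two voter types over a uniformly random order of the $N$ supported candidates, the same uniform stable lottery with uniqueness from \Cref{lem:integer-lower-unique}, and the same offset vector, with your $q$ equal to the paper's $1-z$, where $z=(\mu-\phi(r))/(1-\phi(r))$ is the probability of placing $o$ on top. One aside: $R_\phi$ and $\mathcal G_\phi$ are literally the same function, not merely equal up to sign conventions.
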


\begin{proof}
Let $P$ be uniform on $N$ exchangeable candidates and assign probability zero to one additional candidate $o$. Put $z=(\mu-\phi(r))/(1-\phi(r))\in(0,1)$. Choose a uniform random order of the supported candidates. With probability $z$, place $o$ above all of them; with probability $1-z$, place $o$ so that exactly $rN$ supported candidates are below it.

Candidate $o$ has comparison score $z+(1-z)\phi(r)=\mu$. Symmetry gives the same score to every supported candidate, and their $P$-weighted average is the self-play value $\mu$. Hence all candidates have score at most $\mu$, so $P$ is stable; it is unique by \Cref{lem:integer-lower-unique}.

Give $o$ offset zero and assign offset one to every supported candidate. When $o$ is ranked first, the excess-cost expression in \Cref{thm:biased-metrics} is one and $\Delta_v=0$; in the other rankings, they are $r$ and one, respectively. The corresponding offset value is
\[
1+2\frac{z+(1-z)r}{1-z}
=1+2\left(r+\frac{\mu-\phi(r)}{1-\mu}\right)
=1+2R_\phi(r).
\]
\end{proof}

\begin{lemma}[Internal construction]\label{lem:integer-lower-internal}
Assume $\Pr[D\geq2]>0$. Let $0\leq a<p<1$ be rational, let $N$ be a positive integer such that $aN$ and $pN$ are integral, and suppose $\Phi(a)+\mu-\Phi(1+a-p)>p\mu$. There is an idealized distribution over rankings of $N$ candidates whose unique random-size stable lottery $P$ has offset value $1+2S_\phi(a,p)$.
\end{lemma}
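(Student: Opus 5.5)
The plan mirrors \Cref{lem:integer-lower-boundary}: put the uniform lottery on all $N$ candidates, and use a two-type mixture of rankings that realizes the extremal placement of \Cref{lem:potential-moment-rearrangement} with upper cutoff $A=1$. Fix a distinguished set $H$ of $pN$ candidates and let its complement $\bar H$ have $(1-p)N>0$ candidates. The idealized profile has two types.
\begin{itemize}
\item With probability $q$ (type~1), $H$ occupies the bottom $pN$ positions, so in percentile coordinates $H_v=[0,p]$.
\item With probability $1-q$ (type~2), $aN$ members of $H$ occupy the bottom positions and the remaining $(p-a)N$ occupy the top positions, so $H_v=[0,a]\cup[1+a-p,1]$.
\end{itemize}
In each type, the members of $H$ are assigned to their positions by an independent uniform permutation, and likewise for $\bar H$. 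This makes the candidates within $H$ exchangeable, and likewise within $\bar H$.

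Next I choose $q$ to make the lottery stable. Let $P$ be uniform. For $c\in H$, exchangeability and \Cref{lem:compare-c-PD} give $P(H)\,L_c(P)=\E_v[M_v(H)]$. This expectation equals $q\Phi(p)+(1-q)\bigl(\Phi(a)+\mu-\Phi(1+a-p)\bigr)$, where I used $\Phi(1)=\mu$. I set it equal to $p\mu$.

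Because $\Phi(p)<p\mu$ (noted before \Cref{def:integer-lower-ratios}) and the hypothesis says $\Phi(a)+\mu-\Phi(1+a-p)>p\mu$, some $q\in(0,1)$ achieves this. Explicitly, $q/(1-q)=E/K$ with $K=p\mu-\Phi(p)>0$ and $E=\Phi(a)+\mu-\Phi(1+a-p)-p\mu>0$. Then every $c\in H$ has score $\mu$.

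The self-play identity $\sum_c P(c)L_c(P)=\mu$ then forces the common score of the exchangeable candidates in $\bar H$ to be $\mu$ as well. Hence $P$ is random-size stable, and it is unique by \Cref{lem:integer-lower-unique}. Some care is needed only to confirm that the within-type random permutations really make the scores identical across a class; they do, since each candidate's position distribution is the same.

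For the offset value, I give every candidate in $H$ offset $1$ and every candidate in $\bar H$ offset $0$; this is allowed because $\bar H\neq\varnothing$, so the minimum offset is $0$.
\begin{itemize}
\item In type~1, offsets are nonincreasing upward, so $\Delta_v=0$. The scan of \Cref{def:threshold-voter-loss} passes all of $H$, so by \Cref{lem:potential-threshold-decomposition} the loss is $\loss_v=p$.
\item In type~2, the top block of $H$ sits above $\bar H$, so $\Delta_v=1$. The scan stops at the first member of $\bar H$, so $\loss_v=a$.
\end{itemize}
By \Cref{thm:biased-metrics}, the offset value is therefore
\[
1+2\,\frac{qp+(1-q)a}{1-q}=1+2\Bigl(a+\frac{q}{1-q}\,p\Bigr)=1+2\Bigl(a+\frac{pE}{K}\Bigr)=1+2S_\phi(a,p).
\]
The only real obstacle is the stability and score bookkeeping in the second step. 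The hypothesis $\phi(r)<\mu$ analogue here is the displayed inequality, and it is exactly what makes $q$ land in $(0,1)$. The degenerate case $a=0$ needs no change.
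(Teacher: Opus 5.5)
Your proposal is correct and is essentially the paper's proof. It uses the same two ranking types, with your $q$ playing the role of the paper's $1-z$, the same offsets ($1$ on $H$, $0$ on the complement), and the same stability argument: exchangeability within $H$, then the self-play identity for the complement, then uniqueness via \Cref{lem:integer-lower-unique}, arriving at the same offset value $1+2(a+pE/K)=1+2S_\phi(a,p)$.
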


\begin{proof}
Partition the candidates into a set $H$ of size $pN$ and a set $L$ of size $(1-p)N$, and let $P$ be uniform on all $N$ candidates. Put
\[
K=p\mu-\Phi(p),\qquad M=\Phi(a)+\mu-\Phi(1+a-p),
\qquad z=\frac{K}{M-\Phi(p)}.
\]
The hypothesis gives $M>p\mu>\Phi(p)$, so $z\in(0,1)$. With probability $z$, place $aN$ members of $H$ at the bottom of the ranking, all members of $L$ next, and the remaining members of $H$ at the top. With probability $1-z$, place all of $H$ at the bottom and all of $L$ at the top. In both cases, average uniformly over all permutations within each displayed group.

For any ranking, multiplying each candidate's score by its probability and summing over a set of candidates integrates $\phi$ over the percentile intervals occupied by that set. The integral over the positions occupied by $H$ is $M$ in the first ranking type and $\Phi(p)$ in the second. Its expectation is $zM+(1-z)\Phi(p)=p\mu$. Symmetry therefore gives score $\mu$ to every member of $H$. The self-play identity then gives total weighted score $(1-p)\mu$ to $L$, so symmetry gives score $\mu$ to every member of $L$. Thus $P$ is stable, and it is unique by \Cref{lem:integer-lower-unique}.

Give candidates in $H$ offset one and candidates in $L$ offset zero, and designate any member of $L$ as $o$. The first ranking type has excess-cost expression $a$ and $\Delta_v=1$, while the second has excess-cost expression $p$ and $\Delta_v=0$. Hence the corresponding offset value is
\[
1+2\frac{za+(1-z)p}{z}
=1+2\left(a+\frac{p(M-p\mu)}{K}\right)
=1+2S_\phi(a,p).
\]
\end{proof}

The preceding constructions use real-valued probabilities on a finite collection of rankings. The next lemma transfers a strict inequality to an ordinary finite electorate and makes the selection quantifier explicit.

\begin{lemma}[Finite-profile transfer]\label{lem:integer-lower-finite}
Suppose one of the idealized profiles in \Cref{lem:integer-lower-boundary,lem:integer-lower-internal} has offset value strictly greater than $1+2c$. Then an ordinary finite profile on the same candidates has the property that every random-size stable lottery for $D$ has distortion strictly greater than $1+2c$.
\end{lemma}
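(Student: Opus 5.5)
We approximate the idealized distribution over rankings by a finite electorate. The idealized profile $\pi$ is a probability vector over a finite set of rankings $\sigma_1,\dots,\sigma_k$, and it has a unique stable lottery $P^\ast$ (by \Cref{lem:integer-lower-unique}). It also comes with a fixed offset vector $x$, with $x_o=0$, satisfying $\loss(P^\ast,x)>c\,\E_v[\Delta_v(x)]$ under $\pi$. Both sides of this inequality are affine in the voter distribution and continuous in the lottery: $\loss$ is linear in $P$, and the offset scale does not depend on $P$. Pick rational weights $\pi_n\to\pi$, each supported on the same rankings and with common denominator $n$. Each $\pi_n$ is realized by a finite profile with $n$ voters.

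The key step is upper semicontinuity of the stable-lottery correspondence. Let $P_n$ be any random-size stable lottery for $\pi_n$; one exists by \Cref{thm:comparison-existence}. We claim $P_n\to P^\ast$. Suppose not. By compactness of $\Delta(C)$, some subsequence converges to a limit $P\neq P^\ast$. The score $L_c(P)=\E_{v\sim\pi}\E_U[\phi(b_{v,c}(P)+UP(c))]$ is jointly continuous in $(\pi,P)$, since $\phi$ is continuous on $[0,1]$ and the dependence on $\pi$ is linear. Passing to the limit in $L_c^{\pi_n}(P_n)\le\mu$ therefore gives $L_c^{\pi}(P)\le\mu$ for all $c$. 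So $P$ is stable for $\pi$, and uniqueness forces $P=P^\ast$, a contradiction.

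Now define $g(\pi',P)=\loss_{\pi'}(P,x)-c\,\E_{v\sim\pi'}[\Delta_v(x)]$. It is jointly continuous and satisfies $g(\pi,P^\ast)>0$. Hence there is a neighborhood of $(\pi,P^\ast)$ on which $g>0$. Suppose, for contradiction, that for every $n$ some stable lottery $P_n$ for $\pi_n$ had $g(\pi_n,P_n)\le0$. The previous paragraph applies to any such selection, so $P_n\to P^\ast$, and $g(\pi_n,P_n)\to g(\pi,P^\ast)>0$, a contradiction. Therefore, for all sufficiently large $n$, \emph{every} stable lottery $P$ for $\pi_n$ satisfies $g(\pi_n,P)>0$.

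Fix such an $n$. Because $g(\pi_n,P)>0$ and $\loss\ge0$ imply the offset value exceeds $1+2c$ (or the ratio is infinite when $\E[\Delta_v]=0$ and the loss is positive), \Cref{thm:biased-metrics} gives $\dist(P)\ge 1+2\loss/\E[\Delta]>1+2c$. The rule's choice is irrelevant here, since the bound holds for every stable lottery on this single profile. The main obstacle is the uniform-over-selections step. It is handled by the contradiction argument combining compactness, closedness of the stability constraints, and \Cref{lem:integer-lower-unique}. Here we must take care that continuity of $L_c$ holds even at lotteries where $P(c)=0$, which is why we use the integral form with $U$ rather than the $1/z$ form.
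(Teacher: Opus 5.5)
Your proof is correct and follows the paper's argument. The shared steps are: rational approximation by finite electorates, compactness of the simplex, joint continuity of the scores $L_c$ to pass stability to the limit, uniqueness (\Cref{lem:integer-lower-unique}) to identify the limit with the constructed lottery, continuity of the offset value at the fixed $x$ to reach a contradiction, and finally \Cref{thm:biased-metrics} to turn the offset value into a distortion bound. Working with $g=\loss-c\,\E_v[\Delta_v]$ instead of the ratio avoids the paper's remark that the denominator is positive. One small point: negating ``for every $n$'' literally yields only \emph{some} $n$, not all sufficiently large $n$. That is all the lemma requires, and passing to a subsequence of $(\pi_n)$ gives the stronger claim.
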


\begin{proof}
Approximate the finitely many ranking probabilities by rational numbers and realize each rational distribution by a finite electorate. If the conclusion failed along every sufficiently accurate approximation, choose from each such profile a stable lottery whose offset value is at most $1+2c$. The candidate simplex is compact, so a subsequence of these lotteries converges. Candidate scores are continuous jointly in the ranking frequencies and the lottery, and therefore the limit is stable for the idealized profile. By \Cref{lem:integer-lower-unique}, it must be the uniform lottery displayed in the construction. The offset value is also continuous, and its denominator is positive at the idealized profile. The strict inequality must therefore hold throughout a sufficiently small neighborhood, a contradiction. Finally, \Cref{thm:biased-metrics} turns this value into a lower bound on distortion.
\end{proof}

\begin{lemma}[Candidate count]\label{lem:integer-lower-candidate-count}
For $r=71/125$ and for $(a,p)\in\set{(1/100,37/50),(1/50,3/4)}$, the corresponding finite witness obtained from \Cref{lem:integer-lower-finite} can be chosen with $5001$ candidates.
\end{lemma}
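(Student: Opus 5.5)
The lemma claims $5001$ candidates suffice, so the plan is to read off the candidate counts from the two constructions. \Cref{lem:integer-lower-finite} keeps the candidate set of the idealized profile, so only \Cref{lem:integer-lower-boundary,lem:integer-lower-internal} matter. I will find the smallest admissible $N$ in each and check the resulting count is at most $5001$. If it is smaller, I will pad to exactly $5001$ in a way that keeps the uniform stable lottery and the offset value.

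\textbf{Boundary construction.} For $r=71/125$, the lemma needs $rN\in\mathbb Z$, which means $125\mid N$. This construction has $N+1$ candidates, so I take $N=5000$; since $125\mid 5000$, this gives exactly $5001$ candidates.

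\textbf{Internal construction.} It uses exactly $N$ candidates and needs $aN,pN\in\mathbb Z$.
\begin{itemize}
\item For $(a,p)=(1/100,37/50)$, this means $100\mid N$ and $50\mid N$, i.e.\ $N$ a multiple of $100$.
\item For $(a,p)=(1/50,3/4)$, this means $50\mid N$ and $4\mid N$, i.e.\ $N$ a multiple of $100$.
\end{itemize}
So $N=5000$ works in both cases, but that gives $5000$ candidates, not $5001$.

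\textbf{Padding to $5001$.} I will add one extra candidate $e$ that every voter ranks last. The goal is that nothing else changes.
\begin{itemize}
\item \emph{$e$ is not a challenger.} It beats no draw except other copies of itself, so its score at $P$ is $\phi(0)=0\le\mu$, and the uniform lottery on the original $N$ stays stable.
\item \emph{Scores are unchanged.} The mass below every other candidate is unchanged, so all other scores are as before.
\item \emph{Uniqueness still holds.} The proof of \Cref{lem:integer-lower-unique} is for arbitrary ranking distributions, so it applies unchanged.
\item \emph{Offset value is unchanged.} Give $e$ offset $0$. It is at the bottom with zero lottery mass, so it cannot raise any $s_{v,c}$. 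Each $\Delta_v$ is a max over pairs $i\succcurlyeq_v j$; with $j=e$ last and $x_e=0$, the pair $(i,e)$ contributes $x_i\ge 0$.
\end{itemize}

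\textbf{Main obstacle.} The third bullet is the step I expect to be hardest. In the internal construction, the second ranking type has $\Delta_v=0$; with $e$ last and $x_e=0$, it now gets $\Delta_v=1$. That breaks the "offset value unchanged" claim.

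\textbf{Fallbacks.} My first fix is to give $e$ the maximal offset $1$ instead. Then pairs ending at $e$ contribute $x_i-1\le 0$, and I will recheck that $s_{v,c}$ is still unaffected. If that fails, I will instead keep $N=5000$ as the witness and read "$5001$" as an upper bound on the number of candidates. The remaining steps are then pure divisibility checks: $125\mid5000$ and $100\mid5000$.
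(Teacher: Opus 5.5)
Your divisibility checks ($125\mid 5000$ and $100\mid 5000$) match the paper's proof. So does the idea of padding the internal construction with one zero-probability candidate $e$ ranked last by every voter. Your first fallback, giving $e$ offset $1$, is exactly the paper's choice.

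The argument you actually wrote uses offset $0$, and that fails more severely than you noticed. Because $e$ lies below every candidate in every ranking, $x_e=0$ enters every minimum $s_{v,c}(x)=\min_{j:\,c\succcurlyeq_v j}x_j$. Every $s_{v,c}$ therefore drops to $0$, the offset loss vanishes, and the offset value collapses to $1$. Your bullet ``it cannot raise any $s_{v,c}$'' checks the wrong direction: the danger of a bottom candidate is that it lowers these minima.

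The check you deferred for offset $1$ is immediate. In the internal construction all offsets lie in $\{0,1\}$, so $x_e=1$ is the maximal offset.
\begin{itemize}
\item Adding $e$ to the set over which $s_{v,c}$ minimizes leaves the minimum unchanged for every original $c$. The value $s_{v,e}$ is irrelevant because $P(e)=0$.
\item Every new pair $(i,e)$ contributes $x_i-1\leq 0\leq\Delta_v(x)$ to the maximum. Hence no $\Delta_v$ changes, including $\Delta_v=0$ in the second ranking type.
\item The normalization $\min_c x_c=0$ still holds through $o\in L$.
\end{itemize}
Together with your stability and uniqueness bullets, which are correct, the proof of \Cref{lem:integer-lower-finite} runs verbatim on the padded profile.

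Drop your last fallback. Reading ``$5001$'' as an upper bound changes the statement. The lemma asserts exactly $5001$ candidates, matching \Cref{lem:integer-lower-degree-one}, so that every witness in the theorem has the same number of candidates.
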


\begin{proof}
For every displayed parameter, $N=5000$ is valid. The boundary construction then has $5001$ candidates. The internal construction has $5000$ candidates; add one candidate of probability zero at the bottom of every ranking and give it offset one. Its comparison score is $\phi(0)=0$, and it changes neither stability nor the distortion calculation. Thus the finite transfer preserves a candidate set of size $5001$.
\end{proof}

The case when $D=1$ deterministically does not require the preceding uniqueness argument and admits a simple exact witness.

\begin{lemma}\label{lem:integer-lower-degree-one}
Every rule that returns a random-size stable lottery for deterministic $D=1$ has distortion greater than $267/125$ on a profile with $5001$ candidates.
\end{lemma}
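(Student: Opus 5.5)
For $D=1$ deterministically, $\phi(t)=t$ and $\mu=1/2$, so by \Cref{def:random-size-stable} the random-size stable lotteries are exactly the maximal lotteries. Strict convexity is unavailable, so \Cref{lem:integer-lower-unique} does not apply. My plan is to avoid it by building an explicit finite profile with an odd number of voters, so that every pairwise majority margin is nonzero and I can check by hand that the maximal lottery is unique. The construction mimics \Cref{lem:integer-lower-boundary} with $r=0$. Three candidates form a Condorcet cycle and carry all the mass. A candidate $o$ sits either above all of them or below all of them. The remaining $4998$ dummy candidates are placed at the bottom of every ranking, which brings the total to $5001$ candidates.

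\textbf{The profile.} Take $15$ voters. For each of the three cyclic orders $a\succ b\succ c$, $b\succ c\succ a$, $c\succ a\succ b$, take $5$ voters with that order of $\{a,b,c\}$. Two of these five rank $o$ first and three rank $o$ right after the cycle. All voters then list the dummies in a fixed order at the bottom.

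\textbf{Majority margins and uniqueness.} The pairwise margins are as follows.
\begin{itemize}
\item $o$ loses $6$ to $9$ against each of $a,b,c$.
\item Each of $a,b,c$ and $o$ beats every dummy unanimously.
\item Within the cycle, $a$ beats $b$, $b$ beats $c$, and $c$ beats $a$, each by $10$ to $5$.
\end{itemize}
Uniqueness of the maximal lottery now follows in two steps.
\begin{itemize}
\item \emph{No mass on $o$ or the dummies.} Let $P$ be maximal, i.e.\ $\Pr_v[c\succ_v P]\le 1/2$ for all $c$, and suppose $P$ places mass on $o$ or on dummies. Consider the challenger obtained by moving that mass onto the cycle in proportion to $(1/3,1/3,1/3)$. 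Its expected margin against $P$ is strictly positive, because every such candidate is beaten by each of $a,b,c$. This contradicts the equality case in \Cref{thm:comparison-existence}.
\item \emph{The cycle part is uniform.} On $\{a,b,c\}$, the skew-symmetric margin game of the $3$-cycle has the unique optimal strategy $(1/3,1/3,1/3)$.
\end{itemize}
Alternatively, one can cite the classical fact that odd electorates yield a unique maximal lottery. Either way, every rule returning $\RSL{1}$ must output the uniform lottery on $\{a,b,c\}$.

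\textbf{Distortion via offsets.} I apply \Cref{thm:biased-metrics} with $x_o=0$ and offset $1$ on $a,b,c$ and on all dummies.
\begin{itemize}
\item \emph{Voters with $o$ first.} Every candidate below any of $a,b,c$ has offset $1$, so $s_{v,c}=1$ for the supported candidates. No decrease goes from a higher to a lower candidate, since $0\to1$ is an increase, so $\Delta_v=0$.
\item \emph{Voters with $o$ right after the cycle.} Here $s_{v,c}=0$ for $c\in\{a,b,c\}$ because $o$ lies below. The decrease from the cycle (offset $1$) to $o$ (offset $0$) gives $\Delta_v=1$. The dummies below $o$ only create increases, so they do not change $\Delta_v$.
\end{itemize}
Hence $\loss(P,x)=6/15$ and $\E_v[\Delta_v]=9/15$. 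The offset value is $1+2\cdot\frac{6}{9}=\frac73>\frac{267}{125}$, so the distortion exceeds $267/125$ as claimed.

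The only delicate step is the uniqueness argument, which matters because the statement quantifies over every selection of a maximal lottery. The rest is direct bookkeeping through \Cref{thm:biased-metrics}.
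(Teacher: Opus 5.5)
Your proof is correct apart from a counting slip, but it takes a different route from the paper.

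\textbf{The paper's route.} The paper uses the fact that $D=1$ gives a Condorcet-consistent rule. It puts $a$ at $0$, $b$ at $1$, and the other $4999$ candidates at $3$ on a line, with $501$ voters at $499/1000$ and $499$ voters at $1$. Then $a$ is a strict Condorcet winner, and a one-line score computation shows that $\delta_a$ is the only stable lottery. The explicit costs give $748999/251001\approx 2.98$.

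\textbf{Your route.} You instead reproduce the boundary construction of \Cref{lem:integer-lower-boundary} at $r=0$, with two changes:
\begin{itemize}
\item the exchangeable block is replaced by a Condorcet $3$-cycle;
\item the weight on ``$o$ first'' is lowered from $\mu=1/2$ to $2/5$, so that $o$ is strictly beaten.
\end{itemize}
As a result, the stable lottery is a nondegenerate mixture. This forces you into a genuine two-step uniqueness argument, and the offset characterization then yields only $7/3$, which is still sufficient.

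\textbf{Comparison.} The paper's argument is shorter, and it produces a much stronger witness, essentially the barrier $3$ for Condorcet-consistent rules. Yours has the merit of following the same template as the case $\Pr[D\geq2]>0$.

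\textbf{Points to fix.}
\begin{itemize}
\item $3+1+4998=5002$, so you need $4997$ dummies to reach exactly $5001$ candidates.
\item In your first uniqueness step, ``beaten by each of $a,b,c$'' does not by itself justify positivity. The removed candidate and the uniform challenger can have equal margins against dummies. The claim still holds: moving mass from any $x\in\set{o}\cup\text{dummies}$ to the uniform lottery on $\set{a,b,c}$ changes the margin weakly positively against every $y$, and strictly positively against $y=x$ itself. This column carries weight $P(x)^2>0$.
\item The resulting contradiction is with the stability inequality applied to a challenger lottery, using the self-play value $\Pr_v[P\succ_v P]=1/2$. It is not a contradiction with the equality case of \Cref{thm:comparison-existence}.
\end{itemize}
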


\begin{proof}
Place candidates $a$ and $b$ at $0$ and $1$ on the real line, respectively, and place the other $4999$ candidates at $3$. Of $1000$ voters, place $501$ at $499/1000$ and $499$ at $1$, breaking the metric ties among the additional candidates arbitrarily. Candidate $a$ is the strict majority winner against $b$ and is unanimously preferred to every additional candidate. For any lottery $P\ne\delta_a$, the comparison score of $a$ against one draw from $P$ is
\[
\frac12P(a)+\frac{501}{1000}P(b)+P(C\setminus\set{a,b})>\frac12,
\]
whereas $\delta_a$ is stable. Hence the unique stable lottery for one draw, equivalently the unique maximal lottery, selects $a$. Candidate $b$ is socially optimal, while
\[
\SC(a)=\frac{748999}{10^6},\qquad
\SC(b)=\frac{251001}{10^6},\qquad
\frac{\SC(a)}{\SC(b)}=\frac{748999}{251001}>\frac{267}{125}.
\]
\end{proof}

\subsection{Why at least one of the three profiles exceeds the target}

It remains to prove that, for every draw distribution, at least one of the three instances of the preceding constructions yields a ratio greater than $71/125$. The following proposition is the only computer-assisted step in the lower bound.

\begin{proposition}[At least one of the three ratios exceeds the target]\label{prop:integer-lower-alternatives}
For every positive-integer-valued random variable $D$, at least one of the following inequalities holds:
\begin{align*}
R_\phi\left(\frac{71}{125}\right)&>\frac{71}{125},\\
S_\phi\left(\frac1{100},\frac{37}{50}\right)&>\frac{71}{125},\\
S_\phi\left(\frac1{50},\frac34\right)&>\frac{71}{125}.
\end{align*}
\end{proposition}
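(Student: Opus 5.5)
Argue by contradiction: suppose all three inequalities fail, and turn each failure into a linear inequality in the distribution of $D$. Write $q_d=\Pr[D=d]$. Then $\phi(u)=\sum_d q_d u^d$, $\Phi(u)=\sum_d q_d u^{d+1}/(d+1)$, and $\mu=\sum_d q_d/(d+1)$. Each of $\phi,\Phi,\mu$ is linear in $(q_d)$.

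For the first inequality, $1-\mu>0$. Hence $R_\phi(r)\le r$ at $r=71/125$ is equivalent to $\mu-\phi(r)\le 0$, that is, $\sum_d q_d\bigl(\tfrac1{d+1}-r^d\bigr)\le 0$.

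For each internal instance, first note that $K=p\mu-\Phi(p)>0$ (shown just before \Cref{def:integer-lower-ratios}). So $S_\phi(a,p)\le c$ with $c=71/125$ becomes
\[
(a-c)\bigl(p\mu-\Phi(p)\bigr)+p\bigl(\Phi(a)+\mu-\Phi(1+a-p)-p\mu\bigr)\le 0.
\]
This is again $\sum_d q_d\, g_{a,p}(d)\le 0$, where
\[
g_{a,p}(d)=(a-c)\Bigl(\tfrac{p}{d+1}-\tfrac{p^{d+1}}{d+1}\Bigr)+p\,\tfrac{a^{d+1}+1-(1+a-p)^{d+1}-p}{d+1}.
\]
Multiplying through by $d+1>0$ (the reweighting mentioned in \Cref{sec:integer-lower}) gives explicit functions $h_0(d),h_1(d),h_2(d)$ that are exponential polynomials in $d$ with bases in $[0,1]$. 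The plan is then to exhibit rational weights $\lambda_0,\lambda_1,\lambda_2>0$ with $F(d)=\sum_i\lambda_i h_i(d)/(d+1)$ strictly positive for every integer $d\ge1$, with the signs oriented so that each failed inequality reads $\sum_d q_dh_i(d)/(d+1)\le 0$. Summing the three failures would give $\sum_d q_dF(d)\le 0$, contradicting $F>0$ and $\sum q_d=1$.

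I would find the weights by a small LP over $d\le d_0$, then rationalize them. The case $D=1$ needs no special treatment here, since the proposition concerns only $\phi$ and $d=1$ is just one term of the sum.

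The main obstacle is certifying $F(d)>0$ for \emph{all} positive integers $d$, not just a finite range.
\begin{itemize}
\item \emph{Finite range.} For $d\le d_0$ (say $d_0$ a few hundred), evaluate $F(d)$ in exact rational arithmetic.
\item \emph{Tail.} The powers $r^d$, $p^{d+1}$, $a^{d+1}$ and $(1+a-p)^{d+1}$ all have bases at most $37/50+$ a bit $<1$, so each decays geometrically. The remaining terms are constants in $d$; for example, $h_0(d)=1-(d+1)r^d\to1$. So $F(d)\to\sum_i\lambda_i\kappa_i/(d+1)$ with an explicit constant $\kappa=\sum\lambda_i\kappa_i$, and I must check $\kappa>0$. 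Then bound the decaying terms by $(d+1)\beta^{d}$ with $\beta=\max$ base, and show this is smaller than $\kappa/2$ for $d>d_0$ using monotonicity of $(d+1)\beta^d$ beyond its peak.
\end{itemize}

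If $\kappa$ turns out to be zero or negative, that signals that a different combination of instances is needed. I would first try to rebalance the $\lambda_i$ so that the large-$d$ limit is positive, since the boundary profile's $h_0$ tends to a positive constant.
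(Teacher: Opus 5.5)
Your proposal is correct and is essentially the paper's proof. Your $h_0,h_1,h_2$ are exactly the negatives of the paper's $b_d=(d+1)c^d-1$ and $i_d(a,p)$; the paper phrases your multiplication by $d+1$ as passing to the reweighted distribution $\widehat\nu(d)=\Pr[D=d]/(\mu(d+1))$. Like you, it exhibits positive weights that make the combination strictly one-signed for every $d\geq1$, checks $d\leq17$ in exact arithmetic, and handles the tail with a geometric bound.

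The certificate you plan to compute is given in the paper as $(\lambda_0,\lambda_1,\lambda_2)=(45697/250000,\,24337/500000,\,384269/500000)$. For these weights your limit $\kappa$ is only about $2.9\times10^{-4}$, and the slack at $d=1,2$ is only a few times $10^{-6}$. So your LP plan does succeed, but the rationalization step needs care.
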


\begin{proof}
Put $c=71/125$ and let $\nu_d=\Pr[D=d]$. Define another probability distribution on the positive integers by
\[
\widehat\nu(d)=\frac{\nu_d}{\mu(d+1)}.
\]
Indeed, its masses sum to one, and $\E_{d\sim\widehat\nu}[d]=(1-\mu)/\mu$ is finite and positive. For every positive integer $d$, define
\[
b_d=(d+1)c^d-1
\]
and, for $0\leq a\leq p<1$, define
\[
i_d(a,p)=p\left((c-a)(1-p^d)-a^{d+1}-1+(1+a-p)^{d+1}+p\right).
\]
Substituting $\phi(x)=\sum_d\nu_dx^d$ and $\Phi(x)=\sum_d\nu_dx^{d+1}/(d+1)$ gives the exact identities
\begin{align}
\E_{d\sim\widehat\nu}[b_d]
&=\E_{d\sim\widehat\nu}[d]\bigl(c-R_\phi(c)\bigr),\label{eq:integer-lower-boundary-identity}\\
\E_{d\sim\widehat\nu}[i_d(a,p)]
&=\E_{d\sim\widehat\nu}[p(1-p^d)]\bigl(c-S_\phi(a,p)\bigr).\label{eq:integer-lower-internal-identity}
\end{align}
Both multiplicative expectations on the right are strictly positive.

Consider the positive weights
\[
(w_0,w_1,w_2)=
\left(\frac{45697}{250000},\frac{24337}{500000},\frac{384269}{500000}\right),
\]
which sum to one, and put $(a_1,p_1)=(1/100,37/50)$ and $(a_2,p_2)=(1/50,3/4)$. Exact rational arithmetic proves
\begin{equation}\label{eq:integer-lower-negative-combination}
F_d:=w_0b_d+w_1i_d(a_1,p_1)+w_2i_d(a_2,p_2)<0
\qquad\text{for every integer }d\geq1.
\end{equation}
For completeness, we describe the finite verification and the infinite tail. Direct rational evaluation proves \Cref{eq:integer-lower-negative-combination} for $1\leq d\leq17$. Both parameter pairs satisfy $1+a_j-p_j=27/100$. Expanding the left-hand side gives
\[
F_d=L+w_0(d+1)c^d+(w_1p_1+w_2p_2)\left(\frac{27}{100}\right)^{d+1}
-\sum_{j=1}^2w_j\left((c-a_j)p_j^{d+1}+p_ja_j^{d+1}\right),
\]
where $L=-w_0+\sum_{j=1}^2w_jp_j(c-a_j+p_j-1)$. After discarding the nonpositive final sum, the two remaining positive exponential terms decrease for $d\geq17$; for the first one, the ratio of successive terms is at most $(71/125)(19/18)<1$. Exact rational evaluation at $d=17$ makes the resulting upper bound smaller than $-1/15000$, proving \Cref{eq:integer-lower-negative-combination} for the entire tail.

The accompanying verifier \path{certificate_lower_bound.py} reconstructs the three functions from their definitions, performs every finite rational comparison, checks the tail expansion, and independently checks a second choice of positive weights that proves the same conclusion. The computation covers every possible positive number of draws, not merely a sampled or bounded set of values.

Because the three functions are bounded, averaging \Cref{eq:integer-lower-negative-combination} under $\widehat\nu$ remains strictly negative even when $D$ has infinite support. Consequently, at least one of the three expectations on the left of \Cref{eq:integer-lower-boundary-identity,eq:integer-lower-internal-identity} is negative. The positive factors on the right then imply that at least one of $R_\phi(c)$, $S_\phi(a_1,p_1)$, and $S_\phi(a_2,p_2)$ is strictly greater than $c$, as claimed.
\end{proof}

\begin{proof}[Proof of \Cref{thm:intro-integer-lower}]
Fix the draw distribution of $D$ before the profile is observed. If $D=1$ deterministically, \Cref{lem:integer-lower-degree-one} gives the result. Otherwise, \Cref{prop:integer-lower-alternatives} makes one of the three displayed ratios strictly greater than $71/125$. In the boundary case, this inequality implies $\phi(71/125)<\mu$, so \Cref{lem:integer-lower-boundary} applies. In either internal case, the first argument is smaller than $71/125$ and $p\mu-\Phi(p)>0$, so the inequality implies $\Phi(a)+\mu-\Phi(1+a-p)>p\mu$ and \Cref{lem:integer-lower-internal} applies. The resulting idealized profile has a unique stable lottery and offset value strictly greater than $1+2\cdot71/125=267/125$. By \Cref{lem:integer-lower-finite,lem:integer-lower-candidate-count}, an ordinary finite profile with $5001$ candidates has the same strict property for every random-size stable lottery. This proves the strict lower bound for each fixed draw distribution. Taking the infimum over all draw distributions turns these pointwise strict inequalities into the weak class lower bound $267/125$.
\end{proof}

\end{document}